\DeclareMathOperator {\spn}  {span}
\newtheorem{theorem}{Theorem}[section]
\newtheorem*{namedtheorem}{\theoremname}
\newcommand{\theoremname}{testing}
\newtheorem{lemma}[theorem]{Lemma}
\newtheorem{claim}[theorem]{Claim}
\newtheorem{fact}[theorem]{Fact}
\newtheorem{corollary}[theorem]{Corollary}
\newtheorem{question}[theorem]{Question}
\theoremstyle{definition}
\newtheorem{definition}[theorem]{Definition}
\newcommand {\Exp}       {\mathbb{E}}
\newcommand {\Prb}  [1] {\Pr \left[#1 \right]}
\newcommand {\E}     [1] {\Exp\left[#1\right]}
\newcommand{\iprod}[1]{\langle #1 \rangle}
\newcommand{\opr}{\otimes}
\newcommand{\errp}{\vartheta}
\newcommand {\Iprod} [2] {\left\langle #1, #2 \right\rangle}
\newcommand {\bbN}    {\mathbb{N}}
\newcommand {\bbR}    {\mathbb{R}}
\newcommand {\calD}   {{\cal{D}}}
\newcommand {\calS}   {{\cal{S}}}
\newcommand {\calN}   {{\cal{N}}}
\newcommand {\calF}   {{\cal{F}}}
\newcommand {\calV}   {{\cal{V}}}
\newcommand {\calW}   {{\cal{W}}}
\newcommand{\norm}[1]{\left\|#1 \right\|}
\newcommand{\nrm}[2]{ \left\|#1 \right\|_{ #2 }}
\newcommand{\eps}{\varepsilon}
\newcommand{\poly}{\text{poly}}
\newcommand{\abs}[1]{\left|#1 \right|}
\newcommand {\R}   {{\mathbb{R}}}
\newcommand{\del}{\delta}
\newcommand{\spc}[2]{{#1}^{(#2)}}
\newcommand{\ub}{\rho}
\newcommand{\nz}{nz}
\newcommand{\kr}{K-rank~}
\newcommand{\krk}[1]{\text{K-rank}_{#1}}
\newcommand{\na}{n_A}
\newcommand{\nb}{n_B}
\newcommand{\nc}{n_C}
\newcommand{\ka}{\tau}
\newcommand{\diag}{\text{diag}}
\newcommand{\tens}[3]{[ #1 ~ #2 ~ #3]}
\newcommand{\lmin}[1]{\lambda_{\min}( #1 )}
\newcommand{\lmax}[1]{\lambda_{\max}( #1 )}
\newcommand{\mmax}{c_{max}}
\newcommand{\Bd}{c_{max}}
\newcommand{\sign}{\text{sgn}}
\newcommand{\minl}{L_{\min}}
\newcommand{\maxl}{L_{\max}}
\newcommand{\up}{\tilde{u}_{\perp}}
\newcommand{\vp}{\tilde{v}_{\perp}}
\newcommand{\tU}{\tilde{U}}
\newcommand{\tV}{\tilde{V}}
\newcommand{\tM}{\tilde{M}}
\newcommand{\tw}{\tilde{w}}
\newcommand{\tm}{\tilde{\mu}}
\newcommand{\tu}{\tilde{u}}
\newcommand{\tv}{\tilde{v}}
\newcommand{\tP}{\tilde{P}}
\title{Uniqueness of Tensor Decompositions with Applications to Polynomial Identifiability}
\author{Aditya Bhaskara\thanks{Email: \textsf{bhaskara@cs.princeton.edu}} \\ {\small EPFL, Switzerland} \and Moses Charikar\thanks{Email: \textsf{moses@cs.princeton.edu}. Supported by NSF awards CCF 0832797, AF 0916218 and a Google research award.}\\ {\small Princeton University} \and Aravindan Vijayaraghavan\thanks{Email: \textsf{aravindv@cs.cmu.edu}. Supported by the Simons Postdoctoral Fellowship.} \\ {\small Carnegie Mellon University}}
\date{}
\begin{document}
\maketitle

\begin{abstract}
We give a robust version of the celebrated result of Kruskal on the
uniqueness of tensor decompositions: we prove that given a tensor whose
decomposition 
satisfies a robust form of Kruskal's rank condition, 
it is possible to approximately recover the decomposition if the tensor is known up to a sufficiently small (inverse polynomial) error.

Kruskal's theorem has found many applications in proving the {\em
identifiability} of parameters for various latent variable models and mixture models such as Hidden Markov models, topic models etc. 
Our robust version immediately implies identifiability using only polynomially many
samples in many of these settings. This polynomial identifiability is an essential
first step towards efficient learning algorithms for these models.

Recently, algorithms based on tensor decompositions have been used to 
estimate the parameters of various hidden variable models efficiently 
in special cases as long as they satisfy certain ``non-degeneracy''
properties. Our methods give a way to go beyond this non-degeneracy barrier, 
and establish polynomial identifiablity of the parameters under much
milder conditions. Given the importance of Kruskal's theorem in the tensor literature, we expect that this robust version will have several applications beyond the settings we explore in this work.


\end{abstract}
\newpage

\section{Introduction}

Statisticians have long studied the identifiability of probabilistic models \cite{Tei61,Tei67,TC82}, i.e. whether the 
parameters of a model can be learned from data generated by the model.
A central question in unsupervised learning \cite{Gha04} is the efficient computation of such latent model parameters from observed data.
A necessary step towards efficient (polynomial time) learning is to show that the parameters are indeed identifiable
after observing polynomially many samples.
The method of moments approach, pioneered by Pearson \cite{Pea94}, infers model parameters from 
empirical moments such as means, pairwise and other higher order correlations.
In general, very high order moments may be needed for this approach to succeed
and the unreliability of empirical estimates of these moments leads to exponential sample complexity
\cite{MV10,BS10,GLPR12}.

An exciting sequence of recent work \cite{MR06,AHK12, HK12, AGHKT12}
has met with considerable success in cases where the underlying
models satisfy a certain non-degeneracy condition (that we will explain later).
Informally, the condition requires that the dimension ($n$) of the observations is at least as large
as the number of possible values ($R$) for the hidden variable and that certain model parameters
are in general position.
The moments are naturally represented by tensors (high dimensional analogs of matrices) and
low rank decompositions of such tensors can be used to deduce the parameters of the underlying model.
Under suitable non-degeneracy assumptions, the required tensor decompositions can be computed
efficiently using an iterative procedure akin to power iteration for computing matrix eigenvalues.
One focus of our work is developing tensor decomposition techniques that apply in more general
settings where these non-degeneracy assumptions are violated, i.e. $n$ is much smaller than $R$.  
Such settings do arise in many cases of practical interest such as in applications of hidden Markov models to speech recognition and image classification, where
the dimension ($n$) of the feature space is typically much smaller than the number of values ($R$) 
for the hidden variable. 
For instance, the (effective) feature space corresponds to just the low-frequency components in the fourier spectrum in speech, or the local neighborhood of a pixel in images. 
These are typically low dimensional than the number of words or image classes.



In fact, the connection of tensor decompositions to learning probabilistic models has been made earlier
in the algebraic statistics literature. 
In a series of papers, identifiability of several latent variable models was established \cite{AMR09,APRS11,RS12} 
via low rank decomposition of certain moment tensors.
A fundamental result of Kruskal \cite{Kru77} on uniqueness of tensor decompositions plays a
crucial role in ensuring that the model parameters are correctly identified by this procedure.
Note that this assumes access to an infinite number of samples and 
does not give any information on the number of samples needed to learn the model
parameters within specified error bounds. 
Kruskal's theorem by itself is not useful for establishing any such sample complexity bounds
since it only guarantees uniqueness for low rank decompositions of the actual moment tensors.
It does not say anything about the decomposition of empirical moment tensors which are approximations
of these. 
In order to understand how large a sample size is needed,
one would need a {\em robust uniqueness} guarantee of this form: 
if the empirical moment tensor $T'$ is close to the moment tensor $T$, then a low rank 
decomposition of $T'$ is (term by term) close to a low rank decomposition of $T$.

Our main technical contribution in this work is establishing such a robust version of Kruskal's
classic uniqueness theorem for tensor decompositions.
This provides a uniqueness guarantee that is directly applicable for establishing polynomial identifiability in a host of applications \cite{AMR09} where Kruskal's theorem was used to prove identifiability
assuming access to exact moment tensors.
Since polynomially many samples from the distribution (typically) yield
an approximation to these tensors up to $1/\poly(n)$ error, our
robust version of Kruskal's theorem establishes polynomial
identifiability in all such applications.
To the best of our knowledge, no such robust version of Kruskal's theorem is known in
the literature.
Given the importance of this theorem in the tensor literature,
we expect that this robust version will have applications beyond the settings we explore in this work.
Our robust uniqueness theorem is accompanied by new algorithms to find low rank tensor 
decompositions.


\subsection{Tensors and their Decompositions}

A tensor is a multidimensional array -- a generalization of vectors and matrices
e.g. an $n_1 \times n_2 \times n_3$ tensor is a 3-tensor which is an element in 
$R^{n_1 \times n_2 \times n_3}$.
Low rank tensor decompositions (analogs of SVD for matrices) have been studied 
intensively as methods for extracting structure in data.
These originated in work of Hitchcock \cite{Hit27} and Cattell \cite{Cat44}.
They were studied in the 60's and 70's in the psychometrics literature and since
the 80's, in the chemometrics literature.
The notion of tensor rank also plays an important role in algebraic complexity,
and is closely connected to the exponent of matrix multiplication.
More recently, tensor decompositions have found applications in signal processing, numerical linear algebra, computer vision,
numerical analysis, data mining, graph analysis, neuroscience and more.

Carroll and Chang \cite{CC70} introduced CANDECOMP (canonical decomposition) and
independently, Harshman \cite{Har70} introduced PARAFAC (parallel factors).
CANDECOMP/PARAFAC is now referred to as CP decomposition \cite{Kie00}.
It expresses a tensor as a sum of rank-one tensors where each rank-one tensor is
the outer product of column vectors.
The rank of a tensor is the minimum number of terms required for such a decomposition.
While the definition of tensor rank is analogous to that of matrix rank, their properties
are quite different.
In fact, computing the rank of a tensor is NP-hard \cite{Has90} and in fact several other
problems associated with low rank approximation of tensors are NP-hard as well \cite{HL13}.

For matrices, a fundamental result of Eckart and Young \cite{EY36} shows that the best rank-$k$
approximation consists of the leading $k$ terms of the SVD.
This is not the case for CP decomposition of tensors -- the best rank one
approximation may not be a factor in the best rank two approximation.
In fact, the best rank $k$-approximation may not exist.
For example, certain tensors of rank-three can be arbitrarily well approximated
by a sequence of rank-two tensors \cite{Knu97,Paa00,dSL08,Lan}.
In fact, the set of tensors of a certain size that do not have a best rank-$k$
approximation has positive volume \cite{dSL08}.
To overcome this problem, the concept of {\em border rank} was introduced 
and studied in the algebraic complexity community.
This is defined to be the minimum number of rank-one tensors that are sufficient
to approximate the given tensor with arbitrarily small error.
In fact, the complexity of matrix multiplication is exactly captured by the border rank
of the associated tensor \cite{KB09,Lan}.

An important property of higher order tensors is that (under certain conditions)  
their minimum rank decompositions are unique upto trivial scaling and permutation.
This is in contrast to matrix decompositions.
Note that the SVD of a matrix is unique (assuming distinct singular values) only because
we impose additional orthogonality constraints.

A classic result of Kruskal \cite{Kru77} gives a sufficient condition for uniqueness of the CP
decomposition of a 3-tensor.
Suppose that a 3-tensor $T$ has the following decomposition:
\begin{align}
T = \tens{A}{B}{C} \equiv \sum_{r=1}^R A_r \otimes B_r \otimes C_r
\label{CPdecomp}
\end{align}
Let the Kruskal rank or \kr $k_A$ of matrix $A$ (formed by column vectors $A_r$)
be the maximum value of $k$ such that any $k$ columns of $A$
are linearly independent.
$k_B$ and $k_C$ are similarly defined.
Kruskal's result says that a sufficient condition for the uniqueness of the decomposition (\ref{CPdecomp})
is
\begin{align}
k_A+k_B+k_C \geq 2R+2
\end{align}
Several alternate proofs of this fundamental result have been given \cite{tBS02,JS04,SS07,Rho10,Lan}.
Sidiropoulos and Bro \cite{SB00} extended this result to $\ell$-order tensors.
Let $T$ be a $\ell$-order tensor with decomposition
\begin{align*}
T = \sum_{r=1}^R \bigotimes_{j=1}^\ell U_r^{(j)}
\end{align*}
Then the decomposition is unique if
\begin{align}\label{eq:intro:kr}
\sum_{j=1}^\ell k_{U^{(j)}} \geq 2R + (\ell-1)
\end{align}

We give a robust version of of Kruskal's uniqueness theorem for decomposition of 3-tensors.
To this end, we need a natural robust analogue of Kruskal rank: we say that $\krk{\ka}(A)\ge k$ 
if every submatrix of $A$ formed by $k$ of its columns has minimum singular value at least $1/\ka$.
A matrix is called bounded if its column vectors have bounded length. Finally, we measure closeness between two tensors or two matrices by the Frobenius norm of their difference. Please see Section~\ref{sec:prelim} for precise definitions.

Our first result shows that any tensor with bounded decomposition that satisfies the robust Kruskal condition has a unique decomposition upto small error  (formal statement in Section~\ref{sec:prelim}):

\vspace{4pt}

\noindent\textbf{Informal Theorem. }{\it If any order $3$ tensor $T$ has a bounded rank $R$ decomposition $\tens{A}{B}{C}$, where the robust $\krk{\ka}$ $k_A,k_B,k_C$ satisfy $k_A+k_B+k_C \ge 2R+2$, then any decomposition $\tens{A'}{B'}{C'}$ that is $\eps$-close to $T$ has $A',B',C'$ being individually $\eps'$-close to $A, B$ and $C$ respectively when $\eps<\eps' \cdot \poly(R,n,\ka)$. 
}\\

A similar theorem (see Theorem~\ref{thm:unique:gen}) also holds for higher order tensors and the analogous robust Kruskal rank condition is exactly \eqref{eq:intro:kr} where $k_{\spc{U}{j}}$ corresponds to the robust Kruskal rank of $\spc{U}{j}$. Note that when all the $\spc{U}{j}$ have the same rank, the robust Kruskal condition becomes weaker for higher order tensors.

\vspace{2pt}
\noindent{\bf Why is it non-trivial to obtain a robust version from existing proofs?}
Kruskal's theorem gives conditions under which the {\em components} of
a tensor decomposition can be identified uniquely. However the proofs that we
are aware of strongly use inductive lemmas which prove that subsets of
the components of one decomposition have to necessarily belong in any
other potential decomposition, and use them to conclude that any two
decompositions are in fact the same. When working with representations
that are only nearly equal, these inductive arguments typically
accumulate errors in each step, thereby requiring the initial error to
be exponentially small in order to reach the desired conclusion.
Such a result would not be of any value for establishing polynomial sample complexity bounds,
since the sample size would need to be exponentially large for the empirical moment tensors
to approximate the true moment tensor within such a low error.
We overcome this issue by using arguments that are purely
combinatorial whenever possible, and carefully avoiding a loss at each
step.

Since finding low-rank decomposition of tensors is of great practical interest, it is natural to study algorithms for this problem. While this and many related problems are NP-hard in general \cite{HL13}, we give an algorithm which given an approximation to a tensor, finds an approximate low-rank decomposition in time exponential only in the rank (and not the dimensions of the tensor). 

\vspace{4pt}
\noindent\textbf{Informal Theorem. }{\it Given a tensor with a bounded, rank $R$ decomposition up to an error $\eps$, we can find a rank $R$ approximation with error $O(\eps)$ in time $\exp(R^2 \log(n/\eps))\poly(n)$.}

This can be viewed as a tensor analog of low-rank approximation, which is very well-studied for matrices. Note that our algorithm does not require the promised decomposition to have additional {\em well-conditioned} properties. If we additionally have such guarantees (for e.g., that the sum of \kr of the components is high), then Theorem~\ref{thm:unique:gen} implies that the algorithm finds this particular decomposition (up to a small error).

\subsection{Latent Variable Models}

We now describe some of the latent variable models that our results are applicable to.
We will formally state the identifiability and algorithmic results we obtain for each of these in Section~\ref{sec:applications}.

Consider a simple mixture-model, where each sample is generated from mixture of $R$ distributions $\{\calD_r\}_{r \in [R]}$, with mixing probabilities $\{w_r\}_{r \in [R]}$. Here the latent variable $h$ corresponds to the choice of distribution and it can have $[R]$ possibilities. First the distribution $h=r$ is picked with probability $w_r$, and then the data is sampled according to $\calD_r$, which has mean $\mu_r \in \R^n$. Let $M_{n \times R}$ represent the matrix of these $R$ means. This setting captures many latent variable models including topic models, Hidden Markov Models (HMMs), gaussian mixtures etc. 

\subsubsection*{Multi-view Mixture Model}

Multi-view models are mixture models with a discrete latent variable $h \in [R]$, such that $\Prb{h= r}= w_r$.
We are given multiple observations or views $\spc{x}{1}, \spc{x}{2}, \dots,\spc{x}{\ell}$ that are conditionally independent given the latent variable $h$, with $\E{\spc{x}{j} \vert h=r}= \spc{\mu}{j}_r$.
Let $\spc{M}{j}$ be the $n \times R$ matrix whose columns are the means $\{\spc{\mu}{j}_r\}_{r \in [R]}$.
The goal is to learn the matrices $\{\spc{M}{j}\}_{j \in [\ell]}$ and the mixing weights $\{w_r\}_{r \in [R]}$.

Multi-view models are very expressive, and capture many well-studied models like Topic Models \cite{AHK12}, Hidden Markov Models (HMMs) \cite{MR06,AMR09,AHK12}, random graph mixtures \cite{AMR09}, and the techniques developed for this class have also been applied to phylogenetic tree models \cite{Cha96,MR06} and certain tree mixtures \cite{AHHK12}. 

\subsubsection*{Exchangeable (single) Topic Model}

The simplest latent variable model that fits the multi-view setting is the Exchangeable Single Topic model as given in \cite{AHK12}. This is a simple bag-of-words model for documents, in which the words in a document are assumed to be exchangeable. This model can be viewed as first picking the topic $r \in [R]$ of the document, with probability $w_r$.  Given a topic $r \in [R]$, each word in the document is sampled independently at random according to the probability distribution $\mu_r \in \R^n$ ($n$ is the dictionary size). In other words, the topic $r \in [R]$ is a latent variable such that the $\ell$ words in a document are conditionally i.i.d given $r$. 

The views in this case correspond to the words in a document. This is a special case of the multi-view model since the distribution of each of the views $j \in [\ell]$ is identical. 

\subsubsection*{Hidden Markov Models}

Hidden Markov Models (HMMs) are extensively used in speech recognition, image classification, bioinformatics etc\cite{Edd96,GY08}. We follow the same setting as in \cite{AMR09}: there is a hidden state sequence $Z_1,Z_2,\dots,Z_m$ taking values in $[R]$, that forms a stationary Markov chain $Z_1 \rightarrow Z_2 \rightarrow \dots \rightarrow Z_m$ with transition matrix $P$ and initial distribution $w=\{w_r\}_{r \in [R]}$ (assumed to be the stationary distribution). The observation $X_t$ 
is represented by a vector in $\spc{x}{t} \in \R^n$. Given the state $Z_t$ at time $t$, $X_t$ (and hence $\spc{x}{t}$) is conditionally independent of all other observations and states. The matrix $M$ (of size $n \times R$) represents the probability distribution for the observations: the $r^{th}$ column $M_r$ represents the probability distribution conditioned on the state $Z_t=r$ i.e.
$$\forall r \in [R], t \in [m], i \in [n], \quad \Prb{X_t = i \vert Z_t = r}= M_{ir}.$$

As mentioned previously, in many important applications of HMMs, $n$ is much smaller than $R$.
e.g. in image classification, the commonly used SIFT features \cite{Low99} are 128 dimensional, while the
number of image classes is much larger, e.g. 256 classes in the Caltech-256 dataset \cite{GHP07} and
several thousands in the case of ImageNet \cite{image-net}.
Similarly, in speech recognition, the features of an audio signal are typically based on
mel-frequency cepstral coefficients (MFCCs) or an encoding called perceptual linear prediction (PLP)
that incorporates psychoacoustic constraints \cite{GY08}, e.g. these are used to obtain a
39 dimensional feature vector in the popular HTK toolkit for building HMMs for speech recognition \cite{HTK-book, HTK}.
On the other hand, the number of states in these HMMs is much larger.
Further, in some other applications, even when the feature vectors lie in a large dimensional space ($n\gg R$), the set of relevant features or the effective feature space could be a space of much smaller dimension ($k < R$), that is unknown to us. 

\subsubsection*{Mixtures of Spherical Gaussians}

Learning mixtures of Gaussians has a long and rich history -- our overview is necessarily brief and
focuses on work relevant to our results.
We consider the setting where we have a mixture of $R$ spherical gaussians in $\R^n$, with mixing weights $w_1, w_2, \dots w_R$, means $\mu_1, \mu_2, \dots, \mu_r$, and the common variance $\sigma^2$.
Much work on this problem needs certain separation guarantees between the centers
\cite{Das99, AK01, VW04, AM05, DS07, KK10, AS12}. 
Recently, moment methods were developed for arbitrary gaussians \cite{KMV10, MV10, BS10},
albeit with sample complexity and running time exponential in $R$ -- such dependence is necessary in 
general.
Recent work \cite{AGHKT12,HK13} developed efficient algorithms for special cases of this problem 
without needing any separation assumptions. 
These methods, based on tensor decompositions, need the condition that the means are linearly independent and hence necessarily $n \geq R$.
(additionally, the matrix of means need to be well conditioned, i.e. the means should not be close
to a low dimensional subspace).


We apply our results on tensor decompositions to many of the latent variable models described above.
Here is a representative result for multi-view models that applies when the dimension of the
observations ($n$) is $\delta R$ where $\delta$ is a small positive constant and $R$ is the size of 
the range of the hidden variable, and hence the rank of the associated tensors.
In order to establish this, we apply our robust uniqueness result to the 
$\ell^{th}$ moment tensor for $\ell=\lceil 2/\delta \rceil+1$.

\vspace{4pt}

\noindent\textbf{Informal Theorem. }{\it For a multi-view model with $R$ topics or distributions, such that each of the parameter matrices $\spc{M}{j}$ has robust \kr of at least $\delta R$ for some constant $\delta$, we can learn these parameters upto error $\eps$ with high probability using $\poly_\delta(n,R)$ samples. Further, these parameters can be approximately computed in time $\exp_{\delta}\left(R^2 \log(n/\eps)\right) \poly(n)$ time. }



Polynomial identifiability was not known previously for these models in the settings that we
consider.
Moreover, except for the well studied setting of mixtures of Gaussians, no provably good algorithms
were known (even with running time $\exp(\poly(R))$).

For mixtures of Gaussians, our results shed more light on polynomial identifiability: 
the algorithm of \cite{AGHKT12,HK13} shows how to identify mixtures of
(spherical) Gaussians efficiently when we have $R$ Gaussians in $d$
dimensions, when the means satisfy certain well-conditioned properties
(which in particular requires $d \ge R$). When $d=2$, Moitra and
Valiant \cite{MV10} rule out polynomial identifiability by giving two distributions
for which we require exponentially many samples to distinguish one from the other. Thus it
is natural to ask what happens in between, when $d<R$, but is not too small.
Our results imply that a mixture of $R$ Gaussians of
{\em known variance} in a $\del R$ dimensional space (any $\del>0$) can be
identified with polynomially many samples.

\subsection{Overview of Techniques}


\paragraph{Robust Uniqueness of Tensor decompositions.} 
The main technical contribution of our paper is the Robust Uniqueness theorem for Tensor decompositions. Our proof broadly follows the outline of Kruskal's original proof \cite{Kru77}: It proceeds by establishing a certain \emph{Permutation lemma}, which gives necessary conditions to conclude that the columns of two matrices are permutations of each other (up to scaling). Given two decompositions $\tens{A}{B}{C}$ and $\tens{A'}{B'}{C'}$ for the same tensor, it is shown that $A, A'$ satisfy the conditions of the lemma, and thus are permutations of each other. Finally, it is shown that the three permutations for $A,B$ and $C$ (respectively) are identical. 
To prove the robust uniqueness theorem, the key ingredient is a robust version of the permutation lemma. 

The first step in our argument is to prove that if $A$, $B$, $C$ are ``well-conditioned'' (i.e., satisfy the \kr conditions of the theorem), then any other ``bounded'' decomposition which is $\eps$-close is also well-conditioned. This step is crucial to our argument, while an analogous step was not explicitly needed for the proofs of exact uniqueness theorem.\footnote{Note that the uniqueness theorem, in hindsight, establishes that the other decomposition is also well-conditioned.} Besides, this statement is interesting in its own right: it implies, for instance, that there cannot be a smaller rank  (bounded) decomposition. 

The second and most technical step is to prove the robust permutation lemma.
The (robust) Permutation lemma needs to establish that for every column of $C'$, there is some column of $C$ close to it. Kruskal's proof \cite{Kru77} roughly uses downward induction to establish the following claim: for every set of $i \le$ \kr columns of $C'$, there are at least as many columns of $C$ that are in the span of the chosen vectors. The downward induction infers this by considering intersections of columns close to $i+1$ dimensional spaces. 

The natural analogue of this approach would be to consider columns of $C$ which are $\eps$-close to the spans of subsets of columns of $C'$. However, the inductive step involves considering combinations and intersections of the different spans that arise, and such arguments do not seem very tolerant to noise. In particular, we lose a factor of $\ka n$ in each iteration, i.e., if the statement was true for $i+1$ with error $\eps_{i+1}$, it will be true for $i$ with error $\eps_i = \ka n \cdot \eps_{i+1}$. Since $k$ steps of downward induction need to be unrolled, we recover a robust permutation lemma only when the error $<1/(\ka n)^k$ to start with, which is exponentially small since $k$ is typically $\Theta(n)$. 

We overcome this issue by showing a different, more tricky inductive statement, whereby we do not lose any error in the recursion. This is described in Section~\ref{sec:perm-lemma}. To carry forth this argument we crucially rely on the fact that $C'$ is also ``well-conditioned'' and other observations. 

\paragraph{Algorithms for low-rank tensor decompositions.}

At a high level, our algorithm for finding a rank $R$ approximation proceeds by finding a small ($O(R)$) dimensional space and then exhaustively searching, which takes time $\exp(R^2 \log n)\poly(n)$. Note that a naive exhaustive search using an $\eps$-net in the entire $n$ dimensional space would incur a run time of $\exp(R n)\poly(n)$, which is much worse if $n \gg R$.

Suppose the best rank $R$ approximation to an input tensor has error $\eps$. We first find an $R$-dimensional space for each of the (three) dimensions, so that there is an $O(\eps)$-close rank $R$ decomposition that comprises vectors only from the corresponding $R$-dimensional spaces.  We note that the spaces we find need not correspond to the span of the components in the optimum decomposition, but they suffice to obtain an $O(\eps)$ approximation. Another feature of the algorithm is that it does not assume that the tensor has an approximate ``well conditioned'' decomposition, and assumes only boundedness. 

\subsection{Related Work}

While our applications to learning latent variable models are inspired by the works of \cite{AHK12, AGHKT12}, our results are significantly different, particularly from a tensor decomposition perspective. Anandkumar et al \cite{AGHKT12} give algorithms for tensors which have a \emph{symmetric orthogonal decomposition}, i.e. a decomposition of the form $\sum_{r=1}^R A_r \otimes A_r \otimes A_r$ where
the vectors $A_r$ are orthogonal. 
In general, a rank-$R$ tensor may not have any orthogonal decomposition. 
Note that any tensor in $n \times n \times n$ dimensions, which has rank $R>n$ can not have an orthogonal decomposition. While this is one source of intractability for general tensor decompositions \cite{HK12}, we crucially use such tensors of rank $R>n$ to give polynomial identifiability beyond the non-degenerate range ($R \le n$).

For various latent variable models, in the non-degenerate setting (where the number of mixtures/ topics $R$ is larger than the dimension of the space $n$), Anandkumar et al \cite{AGHKT12} use order $3$ tensors given by the third moment tensor to identify the hidden parameters. In these tensors, each rank-$1$ component corresponds to a hidden parameter, like one of the means. While these parameters may not be orthogonal, a certain ``\emph{whitening}'' transform of the space \cite{AHK12,HK12} produces a new instance in which these means are now orthogonal. 
For this they crucially rely on two assumptions:
\begin{itemize}
\item The $n \times R$ matrix of the means has rank $\ge R$ (and well conditioned). This of course needs $R \le n$. 
\item The algorithm has access to the second moment tensor\footnote{This is certainly a valid assumption when learning latent variable models}. This assumption will not hold in the case of the general problem of tensor decompositions. 
\end{itemize}

Finally, in the context of learning latent variable models, we go beyond the non-degeneracy barrier and get polynomial identifiability even when $n=\delta R < R$. One interesting aspect of our results is that we use successively higher $O(1)$-moments to handle larger values of $R$ (hidden topics/ mixtures). This smooth tradeoff\footnote{Note that the $R^{th}$ moment is sufficient to identify the parameters typically \cite{BS10,MV10,FOS06}.} is in contrast to the works of \cite{AHK12,HK12,AGHKT12}, where they seem to get no additional advantage out of higher moments (larger than $3$). 
Further, even when using third moments, \cite{AHK12,HK12,AGHKT12} only obtain polynomial identifiability when $R \le n$, whereas we obtain polynomial identifiability till $R= 3n/2-1$. 
On the other hand, since we argue about identifiability directly through uniqueness theorems for tensors, it allows us to handle larger values of $R$. 

We also mention work on PAC learning of mixtures of $k$ product distributions (see e.g. \cite{FOS05,FOS06}) that typically run in $\exp(k)\poly(n)$ time and produce a distribution that is statistically
close to the underlying distribution -- however they do not recover the actual mixture components themselves.

\section{Some preliminaries and our results}\label{sec:prelim}
We start with basic notation on tensors which we will use throughout the paper. We then state our results formally in these terms, and place them in context. In the process, we will see some intriguing properties of tensors (relevant to our results) which distinguish them from matrices.

\subsection{Notation and Preliminaries}

Tensors are higher dimensional arrays. An $\ell$th order, or $\ell$-dimensional tensor is an element in $\R^{n_1 \times n_2 \times \dots \times n_\ell}$, for positive integers $n_i$. Tensors have classically been defined over complex numbers for certain applications, but we will consider only real tensors.

A concept that plays a crucial role for us is that of the {\em rank} of a tensor. For this, we first define a rank-$1$ tensor as a product $a^{(1)} \opr a^{(2)} \opr \dots \opr a^{(\ell)}$, where $a^{(i)}$ is an $n_i$ dimensional vector. We can now define the rank.

\begin{definition}[Tensor rank, Rank $R$ decomposition]
The {\em rank} of a tensor $T \in \R^{n_1 \times n_2 \times \dots \times n_\ell}$ is defined to be the smallest $R$ for which there exist $R$ rank-1 tensors $T^{(i)}$ whose sum is $T$.

A rank-$R$ decomposition of $T$ is given by a set of matrices $U^{(1)}, U^{(2)}, \dots, U^{(\ell)}$ with $U^{(i)}$ of dimension $n_i \times R$, such that we can write $T = [U^{(1)} ~ U^{(2)} ~ \dots ~U^{(\ell)}]$, which is defined by
\[ [U^{(1)} ~ U^{(2)} ~ \dots ~U^{(\ell)}] := \sum_{r=1}^R U^{(1)}_r \opr U^{(2)}_r \opr \dots \opr U^{(\ell)}_r, \]
where we use the notation $A_r$ to denote the $r$th column vector of matrix $A$.
\end{definition}


Third order tensors (or $3$-tensors) play a central role in understanding properties of tensors in general (as in many other areas of mathematics, the jump in complexity occurs most dramatically when we go from two to three dimensions, in this case from matrices to $3$-tensors). For $3$-tensors, we will often write the decomposition as $\tens{A}{B}{C}$, where $A, B, C$ have dimensions $\na, \nb, \nc$ respectively.

\begin{definition}[$\eps$-close]
Two tensors, represented by $T_1 = [U^{(1)} ~ U^{(2)} ~ \dots ~U^{(\ell)}]$ and $T_2 = [V^{(1)} ~ V^{(2)} ~ \dots ~V^{(\ell)}]$ (of potentially different rank) are said to be $\eps$-close if the Frobenius norm of the difference is small, i.e.,
\[ \norm{[U^{(1)} ~ U^{(2)} ~ \dots ~U^{(\ell)}] - [V^{(1)} ~ V^{(2)} ~ \dots ~V^{(\ell)}]}_F \le \eps\]
We will sometimes write this as $T_1 =_{\eps} T_2$.
\end{definition}

Unless mentioned specifically, the errors in the paper will be $\ell_2$ (or Frobenius norm, which is the square root of the sum of squares of entries in a matrix/tensor), since they add up conveniently.

\begin{definition}[$\ub$-boundedness]
An $n \times R$ matrix $A$ is said to be $\ub$-bounded if each of the columns has length at most $\ub$, for some parameter $\ub$.

A tensor represented as above, $[U^{(1)} ~ U^{(2)} ~ \dots ~U^{(\ell)}]$, is $(\ub_1,\ub_2,\dots, \ub_\ell)$-bounded if the matrix $U^{(i)}$ is $\ub_i$ bounded for all $i$.
\end{definition}

We next define the notion of Kruskal rank, and its robust counterpart.
\begin{definition}[Kruskal rank, $\krk{\ka}(.)$]
Let $A$ be an $n \times R$ matrix. The \kr (or Kruskal rank) of $A$ is the largest $k$ for which {\em every} set of $k$ columns of $A$ are linearly independent.

Let $\ka$ be a parameter. The $\ka$-{\em robust} k-rank is denoted by $\krk{\ka}(A)$, and is the largest $k$ for which every $n \times k$ sub-matrix $A_{|S}$ of $A$ has $\sigma_k (A_{|S}) \ge 1/\ka$.
\end{definition}

Note that we only have a lower bound on the ($k$th) smallest singular value of $A$, and not for example the condition number $\sigma_{\max}/\sigma_k$. This is because we will usually deal with matrices that are also $\rho$-bounded, so such a bound will automatically hold, but our definition makes the notation a little cleaner. We also note that this is somewhat in the spirit of (but much weaker than) the Restricted Isometry Property (RIP)~\cite{CTao05} from the Compressed Sensing literature.

Another simple linear algebra definition we use is the following
\begin{definition}[$\eps$-close to a space]
Let $V$ be a subspace of $\R^n$, and let $\Pi$ be the projection matrix onto $V$. Let $u \in \R^n$. We say that $u$ is $\eps$-close to $V$ if $\norm{u - \Pi u} \le \eps$.
\end{definition}

\paragraph{Other notation.} For $z \in \bbR^{d}$, $\diag(z)$ is the $d \times d$ diagonal matrix with the entries of $z$ occupying the diagonal. For a vector $z \in \R^{d}$, $nz(z)$ denotes the number of non-zero entries in $z$. Further, $nz_{\eps}(z)$ denotes the number of entries of magnitude $\ge \eps$. As is standard, we denote by $\sigma_i(A)$ the $i$th largest singular value of a matrix $A$. Also, we abuse the notation of $\opr$ at times, with $u \opr v$ sometimes referring to a matrix of dimension $dim(u) \times dim(v)$, and sometimes a $dim(u) \cdot dim(v)$ vector. This will always be clear from context.

\paragraph{Normalization.} To avoid complications due to scaling, we will assume that our tensors are scaled such that all the $\ka_A, \ka_B, \dots, $ are $\ge 1$ and $\le \poly(n)$. So also, our upper bounds on lengths $\rho_A, \rho_B, \dots$ are all assumed to be between $1$ and some $\poly(n)$. This helps simplify the statements of our lemmas. 

\paragraph{Error polynomials.} We will, in many places, encounter statements such as ``if $Q_1 \le \eps$, then $Q_2 \le (3n^2 \gamma) \cdot \eps$'', with polynomials $\errp$ (in this case $3n^2 \gamma$) involving the variables $n, R, k_A, k_B, k_C, \ka, \rho, \dots$. In order to keep track of these, we use the notation $\errp_1, \errp_2, \dots$. Sometimes, to refer to a polynomial introduced in Lemma~3.11, for instance, we use $\errp_{3.11}$. Unless specifically mentioned, they will be polynomials in the parameters mentioned above, so we do not mention them each time.


\subsection{Our Results}
We are now ready to formally state the results in our work. The first is a robust version of the uniqueness of decomposition for $3$-tensors.

\begin{theorem}[Unique Decompositions]\label{thm:unique3}
Suppose a rank-$R$ tensor $T=\tens{A}{B}{C}$ is $(\rho_A,\rho_B,\rho_C)$-bounded, with $\krk{\ka_A}(A)=k_A,\krk{\ka_B}(B)=k_B,\krk{\ka_C}(C)=k_C$ satisfying $k_A+k_B+k_C \ge 2R+2$.
Then for every $0<\eps'<1$, there exists
\[ \eps= \eps' / \big( R^6 \errp_{\ref{thm:unique3}}(\ka_A,\rho_A,\rho'_A,n_A)\errp_{\ref{thm:unique3}}(\ka_B,\rho_B,\rho'_B,n_B)\errp_{\ref{thm:unique3}}(\ka_C,\rho_C,\rho'_C,n_C)\big), \]for some polynomial $\errp_{\ref{thm:unique3}}$ such that
for any other $(\rho'_A,\rho'_B,\rho'_C)$-bounded decomposition $\tens{A'}{B'}{C'}$ of rank $R$ that is $\eps$-close to $\tens{A}{B}{C}$, there exists
an ($R \times R$) permutation matrix $\Pi$ and diagonal matrices $\Lambda_A, \Lambda_B, \Lambda_C$ such that 
\begin{equation}\label{eq:uniquegen}
\norm{\Lambda_A \Lambda_B \Lambda_C - I}_F \le \eps' ~\text{and} \quad\nrm{A'-A\Pi \Lambda_A}{F} \le \eps' \quad ( \text{similarly for } B \text{ and } C ) 
\end{equation}
\end{theorem}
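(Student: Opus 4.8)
The plan is to follow the skeleton of Kruskal's original proof—establish a permutation lemma, apply it to match the columns of each of $A',B',C'$ with those of $A,B,C$ up to scaling, and finally argue that the three induced correspondences coincide as a single permutation—but to execute every step quantitatively so that the accumulated error stays polynomial rather than exponential in the parameters. The first step I would carry out is to show that well-conditioning is \emph{inherited}: any $(\rho'_A,\rho'_B,\rho'_C)$-bounded rank-$R$ decomposition $\tens{A'}{B'}{C'}$ that is $\eps$-close to $T$ must itself have robust Kruskal ranks (with mildly degraded parameters). This step, absent from the exact-uniqueness proofs, is what lets the later arguments avoid compounding errors.

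The workhorse for this and the next step is tensor contraction. For a unit vector $w$, form the matrix slice $T(w)=A\diag(C^{\top}w)B^{\top}=A'\diag(C'^{\top}w)B'^{\top}$, where the two expressions agree up to Frobenius error $O(\rho\eps)$. Because $A$ and $B$ are robustly well-conditioned, whenever $C^{\top}w$ has $s\le\min(k_A,k_B)$ sizeable entries the $s$-th singular value $\sigma_s(T(w))$ is bounded below; transferring this bound across the $O(\rho\eps)$ perturbation forces $C'^{\top}w$ to have at least $s$ sizeable entries, i.e. $\nz_{\eps}(C'^{\top}w)\ge \nz(C^{\top}w)$ (and, by swapping the roles of the two decompositions, a reverse inequality). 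Ranging $w$ over a suitable family, a near-dependence among $k$ columns of $C'$ would contradict such a count, which yields the inherited well-conditioning of $C'$; symmetric arguments handle $A'$ and $B'$. These counting inequalities, valid precisely because $k_A+k_B+k_C\ge 2R+2$ keeps the relevant supports inside the linearly independent range, are exactly the hypotheses feeding the permutation lemma.

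The core step is a robust permutation lemma, asserting that every column $C'_s$ is $\eps'$-close, up to scaling, to some column of $C$. Kruskal proves the exact analogue by downward induction on subset sizes $i$ (every $i$ columns of $C'$ have at least $i$ columns of $C$ in their span), but a naive robust version loses a factor of $\ka n$ per level, so unrolling $k=\Theta(n)$ levels would demand exponentially small $\eps$. The fix, which I expect to be the crux of the whole argument, is to replace the propagated invariant by one that is re-anchored at each level: instead of deriving the accuracy at level $i$ from level $i+1$, I would re-derive it directly from the contraction identity together with the \emph{uniform} well-conditioning of $C'$ proved above, so that the error at every level is bounded by a single fixed polynomial independent of $i$. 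The well-conditioning of $C'$ is essential here—it guarantees that a column of $C'$ which is $\eps$-close to the span of $i$ others can, using the counting inequality, be pinned to a near-multiple of a single column of $C$ without chaining through intermediate spans.

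Finally I would assemble the permutation. Applying the lemma to each mode yields correspondences $\pi_A,\pi_B,\pi_C$ together with scalars; since each matrix has exactly $R$ columns and well-conditioning forbids two columns mapping to one target, these correspondences are genuine permutations. To see they coincide, observe that each rank-one term $A'_s\otimes B'_s\otimes C'_s$ is forced, up to $\eps'$ error, to be a scalar multiple of $A_{\pi_A(s)}\otimes B_{\pi_B(s)}\otimes C_{\pi_C(s)}$; matching this against the genuine terms of $T$, whose rank-one components are individually identifiable thanks to the Kruskal condition, forces $\pi_A(s)=\pi_B(s)=\pi_C(s)$ and the product of the three scalars to equal $1$ up to $\eps'$. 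Taking $\Pi$ to be this common permutation and reading off the diagonal scalings gives \eqref{eq:uniquegen}. The well-conditioning-inheritance and final matching steps are quantitative but routine once the contraction identity and singular-value perturbation bounds are set up; the genuine obstacle is the non-compounding induction in the robust permutation lemma.
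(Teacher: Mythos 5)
Your architecture matches the paper's exactly: first prove that any bounded $\eps$-close decomposition inherits the robust Kruskal rank (the paper's Lemma~\ref{lem:conditioned}, fed by the slice/counting Lemma~\ref{lem:perm:conditions}, which is your contraction argument with $\sigma_s(T(w))$), then a robust permutation lemma, then the final matching of $\pi_A,\pi_B,\pi_C$ and the scalings via test vectors chosen orthogonal to spans of $k_A-1$ and $k_B-1$ columns (possible since $R\le k_A+k_B-2$). All of that is sound. But at the crux --- the robust permutation lemma --- you correctly name the problem (a factor $\ka n$ lost per level of Kruskal's downward induction) and then offer only a desideratum, not a mechanism. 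Your concrete suggestion, to ``re-derive the level-$i$ statement directly from the contraction identity together with the uniform well-conditioning of $C'$,'' does not work as stated: the hypothesis of the permutation lemma constrains only test vectors $w$ with $\nz(w^T C')\le R-k+1$, i.e.\ vectors orthogonal to at least $k-1$ columns of $C'$. For a set $S$ of $i<k-1$ columns, a vector orthogonal to $\spn(C'_S)$ only guarantees $\nz(w^T C')\le R-i$, which exceeds the threshold, so there is no direct anchor at lower levels; some inductive bridging between levels is unavoidable, and that is precisely where the naive argument compounds error.

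The paper's actual device is to make the induction \emph{combinatorial} while keeping all geometric statements at a single fixed error level. For $|S|=k-1$ it defines $T_S$ as the columns of $Y$ that are $\eps_1$-close to $\spn(X_S)$ (this is the only level anchored by the hypothesis, via Lemma~\ref{lem:perm:basecase}); for smaller $S$ it defines $T_S:=\bigcap_{|S'|=k-1,\,S'\supset S} T_{S'}$, so no new closeness parameter is ever introduced. The downward induction then proves only the cardinality claim $|T_S|=|S|$, using the sunflower counting argument (Lemma~\ref{lem:sunflower}), and the conversion from ``close to many $(k-1)$-spans containing $S$'' to ``close to $\spn(X_S)$'' is done in a \emph{single} application of Lemma~\ref{lem:key-intersection}, costing one factor $4n\ka\rho$ rather than one per level; the inherited conditioning of $C'$ enters to cap $|T_S|$ from above. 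Your proposal's ``re-anchoring'' intuition is pointing at this, but without the intersection-defined sets and the lossless counting invariant the induction you describe cannot be executed, so the central step of the proof is genuinely missing.
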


We remark that in order to prove the theorem, we did not make any assumptions about the Kruskal ranks of $A', B', C'$. We simply assumed that they are bounded. This is an interesting feature of our proof, and is formalized in Lemma~\ref{lem:conditioned}. Another observation: though we assumed that the decomposition $\tens{A'}{B'}{C'}$ is rank $R$, we really need only an upper bound. This is because we can append zeroes and apply the theorem.

Our next result is a higher dimensional analogue of the above.
\begin{theorem}[Uniqueness of Decompositions for Higher Orders]\label{thm:unique:gen}
Suppose we are given an order $\ell$ tensor (with $\ell \le R$), $T= [\spc{U}{1} ~ \spc{U}{2} ~ \dots ~ \spc{U}{\ell}]$,
where $\forall j \in [\ell]$ the $n_j$-by-$R$ matrix $\spc{U}{j}$ is $\rho_j$-bounded, with $\krk{\ka_j}(\spc{U}{j})=k_j \ge 2$ satisfying 
\[ \sum_{j=1}^{\ell} k_j \ge 2R+\ell-1. \]

Then for every $0<\eps'<1$, there exists $\eps = \left(\spc{\errp}{\ell}_{\ref{thm:unique:gen}}\left (\frac{\eps'}{R}\right) \right) \cdot \left(\prod_{j \in [\ell]} \errp_{\ref{thm:unique:gen}}(\ka_j,\rho_j,\rho'_j,n_j)\right)^{-1}$
such that, for any other $(\rho'_1,\rho'_2,\dots,\rho'_\ell)$-bounded decomposition $[\spc{V}{1}~ \spc{V}{2} ~\dots~ \spc{V}{\ell}]$ which is $\eps$-close to $T$, there exists an $R \times R$ permutation matrix $\Pi$ and diagonal matrices $\{\spc{\Lambda}{j}\}_{j \in [\ell]}$ such that 
\begin{equation}\label{eq:uniquegen:1}
\norm{\prod_{j \in [\ell]}\spc{\Lambda}{j} - I}_F \le \eps' \quad \text{ and} ~\forall j \in [\ell], ~\nrm{\spc{V}{j} - \spc{U}{j} \Pi \spc{\Lambda}{j}}{F} \le \eps'
\end{equation}
Setting $\spc{\errp}{\ell}_{\ref{thm:unique:gen}}(x)=x^{2^{\ell}}$ and 
$\errp_{\ref{thm:unique:gen}}(\ka_j,\rho_j,\rho'_j, n_j)=(\ka_j\rho_j\rho'_jn_j)^{O(1)}$ suffice for the theorem.
\end{theorem}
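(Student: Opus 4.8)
The plan is to reduce the order-$\ell$ statement to the order-$3$ statement (Theorem~\ref{thm:unique3}) by merging modes, in the spirit of the Sidiropoulos--Bro extension of Kruskal's theorem. I would partition the $\ell$ factor matrices into three nonempty groups $G_1,G_2,G_3$ and replace each group by the Khatri--Rao (column-wise Kronecker) product of its members, obtaining a $3$-tensor $\tens{\hat A}{\hat B}{\hat C}$ with the same entries as $T$ but a coarser decomposition, $\hat A = \bigodot_{j\in G_1}\spc{U}{j}$ and so on. Applying the same operation to $V$ gives $\tens{\hat A'}{\hat B'}{\hat C'}$, and since merging modes does not change the underlying tensor, $\tens{\hat A'}{\hat B'}{\hat C'}$ is still $\eps$-close to $\tens{\hat A}{\hat B}{\hat C}$. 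I would then invoke Theorem~\ref{thm:unique3} on this $3$-tensor and finally ``un-merge'' to recover statements about the individual $\spc{U}{j}$ and $\spc{V}{j}$.

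To apply Theorem~\ref{thm:unique3} I must check its hypotheses for the merged matrices. Boundedness is immediate: $\hat A$ is $\prod_{j\in G_1}\rho_j$-bounded and $\hat A'$ is $\prod_{j\in G_1}\rho'_j$-bounded, since the norm of a Kronecker product is the product of the norms. The crucial point is a \emph{robust Khatri--Rao} lemma: I would show $\krk{\ka'}(\spc{U}{i}\odot\spc{U}{j}) \ge \min(k_i+k_j-1,R)$ for a suitable $\ka'$ polynomial in $\ka_i,\ka_j,n_i,n_j$ --- the robust analogue of the classical bound on the Kruskal rank of a Khatri--Rao product --- and iterate it over each group to get $\krk{}(\hat A)\ge \min\big(\sum_{j\in G_1}k_j-(|G_1|-1),\,R\big)$, and similarly for $\hat B,\hat C$. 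Summing and using $|G_1|+|G_2|+|G_3|=\ell$ gives
\[ \krk{}(\hat A)+\krk{}(\hat B)+\krk{}(\hat C) \ge \sum_{j=1}^{\ell}k_j-(\ell-3) \ge (2R+\ell-1)-(\ell-3) = 2R+2, \]
which is exactly the hypothesis of Theorem~\ref{thm:unique3}. (The capping at $R$ must be handled separately, as in the classical proof: if a group already reaches $\krk{}=R$ the $3$-tensor condition only becomes easier to satisfy.)

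Theorem~\ref{thm:unique3} then yields a single permutation $\Pi$ and diagonal scalings with $\nrm{\hat A' - \hat A\Pi\Lambda_{\hat A}}{F}\le\eta$ and likewise for $\hat B,\hat C$, where $\eta$ is a target accuracy I am free to choose at the cost of taking $\eps$ smaller. Because a Frobenius bound controls each column, for every $r$ the rank-$1$ object $\bigotimes_{j\in G_1}\spc{V}{j}_r$ is within $\eta$ of a scalar multiple of $\bigotimes_{j\in G_1}\spc{U}{j}_{\pi^{-1}(r)}$. I would then apply a \emph{robust rank-$1$ factoring} lemma: if two Kronecker products of bounded vectors are close and their factors are bounded away from $0$, then the corresponding factors agree up to scaling with an error linear in $\eta$ (each factor being recovered as a leading singular vector of an appropriate flattening, whose singular-value gap equals the full norm of the rank-$1$ object). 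The required lower bounds are free on the $U$ side from $k_j\ge 2$ (so each $\spc{U}{j}_r$ has norm $\ge 1/\ka_j$), and on the $V$ side from the analogue of Lemma~\ref{lem:conditioned}: the merged $V$-matrices are themselves well-conditioned, so each merged column is bounded below, which together with the $\rho'_j$ upper bounds forces each $\spc{V}{j}_r$ to be bounded below as well. Peeling off one factor at a time within each group un-merges all modes; since every mode inherits the \emph{same} $\Pi$, the permutation is automatically consistent across $j$, and reconciling the per-mode scalars yields $\{\spc{\Lambda}{j}\}$ with $\norm{\prod_{j}\spc{\Lambda}{j}-I}_F\le\eps'$.

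The main obstacle is the quantitative bookkeeping rather than the conceptual outline. The two lossy operations --- passing through Theorem~\ref{thm:unique3} and the repeated rank-$1$ factorings --- must be composed across the (at most $\ell$) modes while keeping all dependence on $\ka_j,\rho_j,\rho'_j,n_j$ polynomial and isolating the entire $\eps'$-dependence into one factor. Choosing the intermediate accuracy $\eta$ as a fixed power of $\eps'/R$ at each level and accounting conservatively --- each level at worst squaring the relationship between input and output error --- gives an overall bound of the form $\eps = (\eps'/R)^{2^{\ell}}\cdot\prod_{j}(\ka_j\rho_j\rho'_j n_j)^{-O(1)}$, matching the stated $\spc{\errp}{\ell}_{\ref{thm:unique:gen}}(x)=x^{2^{\ell}}$ and $\errp_{\ref{thm:unique:gen}}=(\ka_j\rho_j\rho'_j n_j)^{O(1)}$. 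The real technical work lies in the two supporting lemmas with explicit constants: the robust Khatri--Rao rank bound, which is delicate because singular-value lower bounds do not behave simply under Khatri--Rao (Hadamard) products, and the robust rank-$1$ factoring bound, which underpins the un-merging and must introduce no super-polynomial loss in the condition numbers.
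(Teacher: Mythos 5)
Your outline uses the same two ingredients as the paper's proof --- a robust Khatri--Rao rank bound (which is literally the paper's Lemma~\ref{lem:krprod}: $\krk{\ka_1\ka_2\sqrt{k_A+k_B}}(A\odot B)\ge \min\{k_A+k_B-1,R\}$, so that ``delicate'' lemma is already available) and a robust rank-one un-merging step (the paper's Lemma~\ref{lem:tensoring}) --- but arranges them differently. The paper, following Sidiropoulos--Bro, inducts on $\ell$: it sorts $k_1\ge\cdots\ge k_\ell$, merges only the two \emph{smallest} modes per step, invokes the order-$(\ell-1)$ statement, and un-merges that single pair; Lemma~\ref{lem:tensoring} loses a square root ($\beta_i<\sqrt{\delta}$) at each level, which is exactly where the stated $\spc{\errp}{\ell}(x)=x^{2^\ell}$ comes from. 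You instead merge into three groups at once, apply Theorem~\ref{thm:unique3} a single time, and peel factors recursively; the error bookkeeping is comparable, and your conservative ``squaring per level'' budget lands on the same bound. One point in your favor: your claim that the rank-one factoring can be made \emph{linear} in $\eta$ is actually correct given the two-sided norm bounds you secure (on the $V$ side via the analogue of Lemma~\ref{lem:conditioned}): from the orthogonal decomposition one gets $|1-\alpha_1\alpha_2|\le \eta/\minl^2$ and $\alpha_1\le \maxl/\minl$, forcing $\alpha_2 \ge (1-\eta/\minl^2)\minl/\maxl$, whence $\beta_1 = O(\eta\,\maxl/\minl^2)$. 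This is strictly stronger than the paper's $\sqrt{\delta}$ lemma and would in principle improve the $2^\ell$ exponent, though you do not exploit it.

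The one step that is wrong as written, rather than merely sketchy, is your treatment of capping: ``if a group already reaches $\krk{}=R$ the $3$-tensor condition only becomes easier to satisfy'' is false, because capping discards the excess and an arbitrary tri-partition can fail the hypothesis of Theorem~\ref{thm:unique3} outright. Concretely, take $R=3$, $\ell=6$, all $k_j=2$, so $\sum_j k_j = 12 \ge 2R+\ell-1=11$: the partition with group sizes $(4,1,1)$ yields merged K-ranks $\min\{5,3\}+2+2=7 < 2R+2=8$, while the balanced $(2,2,2)$ partition yields $3+3+3=9\ge 8$. So the partition must be chosen judiciously --- this is precisely the work done by the paper's sorted pairwise merging, whose two-case analysis (when $k_{\ell-1}+k_\ell > R+1$, sorting guarantees $k_{\ell-3}+k_{\ell-2}\ge R+2$) shows that a good grouping always exists. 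With your partition constructed by that same sorted merging (or any balancing argument establishing existence of a tri-partition with $\sum_i \min\{\sum_{j\in G_i}k_j - (|G_i|-1),\,R\}\ge 2R+2$), your route goes through and matches the theorem's stated error polynomials.
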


Since finding a small rank decomposition of a tensor is of great practical interest as we have seen, it is natural to ask if it is possible to compute it efficiently. We can prove:

\begin{theorem}\label{thm:algorithm}
Suppose $T$ is a $3$-tensor which has an (unknown) $\rho$-bounded representation $\tens{A}{B}{C}$, where $A, B, C$ have dimensions $n_A \times R, n_B \times R$ and $n_C \times R$ respectively, for some parameter $\rho$. Then, given a tensor $T'$ which is $\eps$-close to $T$, we can find a rank-$R$ tensor $T''$ (along with its decomposition) which is $5\eps$ close to $T$ in time $\poly(n_A, n_B, n_C) \cdot \exp(R^2 \log(R\rho /\eps))$.
\end{theorem}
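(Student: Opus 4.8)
The plan is to reduce each of the three modes to an $R$-dimensional subspace obtained from the SVD of an unfolding of $T'$, and then brute-force search for a rank-$R$ decomposition whose components lie in these subspaces. The whole point of the reduction is that a naive $\eps$-net over all of $\R^{n}$ would cost $\exp(Rn)$, whereas restricting to $R$-dimensional subspaces brings this down to $\exp(O(R^2\log(R\rho/\eps)))$. Concretely, for each mode $i\in\{1,2,3\}$ let $T'_{(i)}$ be the mode-$i$ unfolding of $T'$ (the $n_i \times (\prod_{j\ne i} n_j)$ matrix of mode-$i$ fibers), and let $V_A,V_B,V_C$ be the spans of the top $R$ left singular vectors of $T'_{(1)},T'_{(2)},T'_{(3)}$, with $\Pi_A,\Pi_B,\Pi_C$ the corresponding projections. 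The key observation is that the mode-$1$ unfolding of $T=\tens{A}{B}{C}$ has column space contained in $\spn\{A_r\}$, hence has rank at most $R$; since unfolding preserves the Frobenius norm and $\norm{T'_{(1)}-T_{(1)}}_F\le\eps$, the best rank-$R$ approximation error $\norm{(I-\Pi_A)T'_{(1)}}_F$ is at most $\eps$. A triangle inequality then gives $\norm{(I-\Pi_A)T_{(1)}}_F\le 2\eps$, i.e. $\norm{\tens{(I-\Pi_A)A}{B}{C}}_F\le 2\eps$, and symmetrically for the other two modes.

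Next I would establish the central existence claim: that $\tens{\Pi_A A}{\Pi_B B}{\Pi_C C}$ is $O(\eps)$-close to $T$. Writing the induced projection on tensors as $P=\Pi_A\opr\Pi_B\opr\Pi_C$ and telescoping $I-P=(I-\Pi_A)\opr I\opr I+\Pi_A\opr(I-\Pi_B)\opr I+\Pi_A\opr\Pi_B\opr(I-\Pi_C)$, I bound each of the three terms applied to $T$: since every $\Pi$ is a contraction in Frobenius norm, the $j$-th term is at most $\norm{(I-\Pi_{\cdot})T_{(j)}}_F\le 2\eps$ from the previous step. This yields $\norm{T-\tens{\Pi_A A}{\Pi_B B}{\Pi_C C}}_F=O(\eps)$; a tighter bookkeeping of the best-rank-$R$ terms together with taking the net error below a small fraction of $\eps$ gives the stated $5\eps$. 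Crucially, each column $\Pi_A A_r$ still lies in the radius-$\rho$ ball of $V_A$, because projections never increase length, so the target decomposition has bounded columns inside the subspaces.

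Finally I would run the search. On each of $V_A,V_B,V_C$ (dimension $R$) I place an $\eta$-net of the radius-$\rho$ ball with $\eta=\eps/\poly(R,\rho)$, chosen via multilinearity and $\rho$-boundedness so that perturbing every column of a decomposition by $\eta$ changes the tensor by at most $O(R\rho^2\eta)$, a small fraction of $\eps$. Each net has $(\rho/\eta)^{O(R)}$ points, so enumerating all choices of $R$ columns from each of the three nets produces $\exp(O(R^2\log(R\rho/\eps)))$ candidate decompositions; for each candidate I form the tensor and compute its Frobenius distance to $T'$ in $\poly(n_A,n_B,n_C)$ time, outputting the closest. Some net point is within $\eta$ of each column of $\tens{\Pi_A A}{\Pi_B B}{\Pi_C C}$, so the best candidate is $O(\eps)$-close to $T'$ and hence, by one further triangle inequality through $T'$, is $5\eps$-close to $T$. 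The total running time is $\poly(n_A,n_B,n_C)\cdot\exp(O(R^2\log(R\rho/\eps)))$, as claimed.

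The main obstacle is the existence step: one must argue that the three subspaces, extracted \emph{independently} from the three unfoldings of $T'$, simultaneously support a single rank-$R$ decomposition that is $O(\eps)$-close to $T$, and that its columns stay bounded so a net of controlled size suffices. The telescoping identity combined with the contraction property of the projections is exactly what removes any sequential dependence between the modes and avoids error blow-up, and it lets the whole argument go through assuming only $\rho$-boundedness, with no well-conditioning assumption on the hidden decomposition.
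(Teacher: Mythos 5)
Your proposal is correct and follows essentially the same route as the paper's proof: take the top-$R$ left singular subspaces of the three unfoldings of the input tensor, show via a telescoping (hybrid) argument that the projected decomposition $\tens{\Pi_A A}{\Pi_B B}{\Pi_C C}$ remains $O(\eps)$-close while its columns stay $\rho$-bounded, and then brute-force over $\eps/\poly(R,\rho)$-nets in the three $R$-dimensional balls, giving $\exp(O(R^2\log(R\rho/\eps)))\cdot\poly(n_A,n_B,n_C)$ time. The only cosmetic difference is that you bound the telescoped terms by the contraction property of orthogonal projections where the paper uses an orthogonality (Pythagoras) decomposition to control the cross terms; both yield the same conclusion up to the constant in $O(\eps)$, which you, like the paper, tune at the bookkeeping stage.
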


We can view the above as an approximation algorithm for the low-rank approximation problem for tensors. We will expound on this viewpoint in Section~\ref{sec:algo}. We also note that although our algorithm is quite simple, it has a running time better than simply trying to guess the $3R$ vectors in the decomposition. The latter typically takes time $\exp(R(n_A + n_B + n_C))$, which could be much worse than our bound for small values of $R$ (which is when the low rank approximation problem is typically interesting).

As we mentioned before, the algorithm does not need the promised decomposition $\tens{A}{B}{C}$ to have large \kr. However, if we are guaranteed that it has additional {\em well-conditioned} properties (for e.g., the sum of \kr of $A, B, C$ is $\ge 2R+2$), then Theorem~\ref{thm:unique:gen} guarantees that the algorithm finds this particular decomposition (up to a small error).

Also, the algorithm extends naturally to higher dimensional tensors: we state this version in Section~\ref{sec:algo}, Theorem~\ref{thm:algo}.

Finally, we show how the above results on tensor decompositions can be used to learn latent variable models with polynomial samples, hence showing polynomial identifiability under some weak conditions involving the \kr{} of the matrices.  We first show polynomial identifiability for the Multi-view mixture model, which captures various latent variable models that are used commonly. 

%

\begin{theorem}[Polynomial Identifiability of Multi-view mixture model]\label{thm:MM}
The following statement holds for any constant integer $\ell$. 
Suppose we are given samples from a multi-view mixture model (see Def~\ref{def:MM}), with the parameters satisfying:
\begin{enumerate}[(a)]
\item For each mixture $r \in [R]$, the mixture weight $w_r > \gamma$.
\item For each $j \in [\ell]$, $\krk{\ka}(\spc{M}{j})\ge k \ge \frac{2R}{\ell}+1$.
\end{enumerate}
then there is a algorithm that given any $\eta>0$ uses 
$N=\spc{\vartheta_{\ref{thm:MM}}}{\ell} \left(\frac{1}{\eta},R,n,\ka,1/\gamma,\Bd\right)$ samples, \\and finds with high probability $\{\spc{\tM}{j}\}_{j \in [\ell]} $ and $\{\tw_r\}_{r \in [R]}$ (upto renaming of the mixtures $\{1,2,\dots,R\}$) such that
\begin{equation}
\forall j \in [\ell], \quad \nrm{\spc{M}{j}-\spc{\tM}{j}}{F} \le \eta \quad \text{ and }\quad \forall r \in [R], ~ \abs{w_r - \tw_r} < \eta 
\end{equation}
Further, this algorithm runs in time $\exp\left(R^2 \ell^2 \left[2^{2\ell} \log(\frac{R\ell}{\eta})+\ell \log(n\cdot \frac{\ka \Bd}{\gamma}) \right] \right)\poly(n)$ time.
\end{theorem}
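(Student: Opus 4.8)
The starting point is that the $\ell$-th order cross-moment tensor of the model already has the rank-$R$ structure our tensor results need. Since the views $\spc{x}{1},\dots,\spc{x}{\ell}$ are conditionally independent given $h$ and $\E{\spc{x}{j}\mid h=r}=\spc{\mu}{j}_r$, we have
\[ T := \E{\spc{x}{1}\opr\spc{x}{2}\opr\dots\opr\spc{x}{\ell}} = \sum_{r=1}^{R} w_r \, \spc{\mu}{1}_r \opr \spc{\mu}{2}_r \opr \dots \opr \spc{\mu}{\ell}_r, \]
so $T=[\spc{U}{1}~\dots~\spc{U}{\ell}]$ where we absorb the weights into the first factor, $\spc{U}{1}=\spc{M}{1}\diag(w)$ and $\spc{U}{j}=\spc{M}{j}$ for $j\ge 2$. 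First I would verify that this decomposition meets the hypotheses of Theorem~\ref{thm:unique:gen}. Boundedness is immediate: each entry of $\spc{x}{j}$ is at most $\Bd$, so every column has length $O(\Bd\sqrt{n})$, and the weights $w_r\le 1$ only help. For the robust Kruskal condition, hypothesis (b) gives $\krk{\ka}(\spc{M}{j})\ge k\ge \tfrac{2R}{\ell}+1$ for each $j$, whence $\sum_{j}k_j \ge \ell\big(\tfrac{2R}{\ell}+1\big) = 2R+\ell \ge 2R+\ell-1$. Scaling the selected columns of $\spc{M}{1}$ by weights $w_r\in(\gamma,1)$ multiplies the relevant $k$-th smallest singular values by at least $\gamma$, so $\spc{U}{1}$ retains robust Kruskal rank $k$ with parameter $\ka/\gamma$, and the condition of Theorem~\ref{thm:unique:gen} holds.

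Next I would pass from $T$ to its empirical estimate $\hat T$, the average of $\spc{x}{1}\opr\dots\opr\spc{x}{\ell}$ over $N$ i.i.d.\ samples. Because each coordinate of this product is bounded in magnitude by $\Bd^{\ell}$, a standard concentration argument (a Hoeffding/Bernstein bound per entry, summed over the $n^{\ell}$ entries to control the Frobenius norm) gives $\norm{\hat T - T}_F \le \eps$ with high probability once $N = \poly(1/\eps, n, \Bd)^{\ell}$, with $\ell$ constant. The remaining calibration is to fix the target accuracy: given $\eta$, set $\eps'$ (after the normalization losses below) proportional to $\eta$, and then read off $\eps$ from Theorem~\ref{thm:unique:gen}, which with $\spc{\errp}{\ell}(x)=x^{2^{\ell}}$ and $\errp=(\ka_j\rho_j\rho'_jn_j)^{O(1)}$ forces $\eps = (\eta/R)^{2^\ell}/(\ka\Bd n/\gamma)^{O(1)}$. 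This is exactly the inverse-polynomial regime that yields the claimed sample bound $N=\spc{\vartheta_{\ref{thm:MM}}}{\ell}$.

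I would then invoke the algorithmic result. Applying Theorem~\ref{thm:algorithm} in its higher-order form (Theorem~\ref{thm:algo}) to $\hat T$ produces a bounded rank-$R$ decomposition $[\spc{V}{1}~\dots~\spc{V}{\ell}]$ that is $O(\eps)$-close to $\hat T$, hence $O(\eps)$-close to $T$, in time $\exp(R^2\ell^2\,[\,2^{2\ell}\log(R\ell/\eta)+\ell\log(n\ka\Bd/\gamma)\,])\poly(n)$, matching the stated running time. Since $T$ satisfies the robust Kruskal hypothesis, Theorem~\ref{thm:unique:gen} applies verbatim: there is a single permutation $\Pi$ and diagonal matrices $\{\spc{\Lambda}{j}\}$ with $\norm{\prod_j \spc{\Lambda}{j}-I}_F\le\eps'$ and $\nrm{\spc{V}{j}-\spc{U}{j}\Pi\spc{\Lambda}{j}}{F}\le\eps'$ for all $j$. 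Thus every recovered column is close to a scalar multiple of a true component, all under one global relabeling $\Pi$.

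The final and most delicate step is to undo the scaling ambiguity and separate the weights from the means, since from $T$ alone the pair $(w_r,\{\spc{\mu}{j}_r\}_j)$ is determined only up to rescalings $\spc{\mu}{j}_r\mapsto c^{(j)}_r\spc{\mu}{j}_r$, $w_r\mapsto w_r\prod_j (c^{(j)}_r)^{-1}$. I would break this symmetry by anchoring: augment each view $\spc{x}{j}$ with a coordinate identically equal to $1$, so the corresponding conditional mean has known last coordinate $1$. Reading off the last coordinate of each recovered column then reveals the per-column scalar in $\spc{\Lambda}{j}$, and the near-identity constraint $\prod_j\spc{\Lambda}{j}\approx I$ pins down the scalar of the weight-bearing factor; dividing out recovers $\spc{\tM}{j}$ and $\tw_r$. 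The crux is the error analysis of this division: the scalars are bounded below only because $w_r>\gamma$ (hypothesis (a)) and because the robust Kruskal condition keeps $\spc{\Lambda}{j}$ away from $0$, so each normalization inflates the error by at most a $\poly(\ka,1/\gamma,n)$ factor. I expect this error-propagation bookkeeping — chaining $\eps\to\eps'\to\eta$ through the algorithm, the uniqueness theorem, and the normalization while keeping every loss polynomial — to be the main obstacle, since the genuine tensor-decomposition work is already done in Theorems~\ref{thm:unique:gen} and~\ref{thm:algo}.
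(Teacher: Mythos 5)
Your proposal is correct and follows the paper's pipeline almost step for step --- estimate the $\ell$-th moment tensor (Lemma~\ref{lem:samplingerror:topic}), run the bounded low-rank approximation algorithm (Theorem~\ref{thm:algo}) on the empirical tensor, invoke the robust uniqueness theorem (Theorem~\ref{thm:unique:gen}) with the weights absorbed into a single factor (the paper uses the last factor, you use the first; this is immaterial), and the Kruskal-rank accounting $\sum_j k_j \ge 2R+\ell$ with the $\ka\to\ka/\gamma$ degradation is identical --- but you break the final scaling ambiguity by a genuinely different mechanism. The paper exploits the fact that in its setting the conditional means are probability vectors: it rescales the recovered columns so that $\nrm{\spc{\tu}{j}_r}{1}\in[1-\delta,1+\delta]$ for $j\in[\ell-1]$ (legitimate because the algorithm enumerates over nets), flips signs explicitly via $\sign\big(\spc{\Lambda}{j}(r)\big)$, and then combines $\norm{\prod_j \spc{\Lambda}{j}-I}\le\eps'$ with Lemma~\ref{lem:l1error} to force each $\abs{\spc{\Lambda}{j}(r)}$ near $1$, after which $\tw_r=\nrm{\spc{\tv}{\ell}_r}{1}$ and $\ell_1$-normalization recover the parameters. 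Your anchoring trick --- appending a constant-$1$ coordinate to each view --- instead makes each scalar directly readable from the last coordinate of the recovered column; it handles signs automatically and, notably, does not require the means to be probability distributions, so it works for general bounded views, which is arguably more faithful to Definition~\ref{def:MM} as literally stated (the paper's proof silently specializes to the indicator-vector setting). The price is two checks you gesture at but should make explicit: (i) augmentation preserves the hypotheses --- appending a row to $\spc{M}{j}$ cannot decrease $\sigma_k$ of any column submatrix, so $\krk{\ka}$ is preserved, and boundedness degrades only from $\rho$ to $\sqrt{\rho^2+1}$; and (ii) the lower bound on $\abs{\spc{\Lambda}{j}(r)}$ needed for the division does not come ``from the robust Kruskal condition'' directly, as you loosely say, but from combining the upper bound $\abs{\spc{\Lambda}{j}(r)} \le O(\ka\rho')$ (boundedness of the recovered columns together with $\norm{\spc{U}{j}_{\pi(r)}}\ge 1/\ka$, or $\ge 1$ after augmentation) with the constraint $\prod_j\spc{\Lambda}{j}(r)\approx 1$, which yields $\abs{\spc{\Lambda}{j}(r)}\ge \Omega\big((\ka\rho')^{-(\ell-1)}\big)$, an inverse polynomial for constant $\ell$. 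With that bookkeeping filled in, your error chain $\eps\to\eps'\to\eta$ is polynomial exactly as in the paper's proof, and the sample and time bounds match.
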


Polynomial identifiability of the Multi-view mixture model also leads to polynomial identifiability of other latent variable models like topic models and HMMs. The following corollary shows that Hidden Markov models can be learned from polynomial many samples by observing constant number of consecutive time steps under mild conditions involving the \kr (the constant depends on the exact \kr condition). Please refer to section~\ref{sec:applications} to see the implications for other latent variable and mixture models like topic models, mixtures of gaussians etc. 

\vspace{5pt}
\noindent{\bf Corollary~\ref{corr:hmm}} (Polynomial Identifiability of Hidden Markov models).
\emph{ 
The following statement holds for any constant $\delta>0$. 
Suppose we are given a Hidden Markov model with parameters as follows :
\begin{enumerate}[(a)]
\item The stationary distribution $\{w_r\}_{r \in [R]}$ has $\forall r \in [R]~ w_r > \gamma_1$,
\item The observation matrix $M$ has $\krk{\ka}(M)\ge k \ge \delta R$,
\item The transition matrix $P$ has minimum singular value $\sigma_{R}(P) \ge \gamma_2$,
\end{enumerate}
then there is a algorithm that given any $\eta>0$ uses $N=\spc{\vartheta_{\ref{thm:MM}}}{\frac{1}{\delta}+1}\left(\frac{1}{\eta},R,n,\ka,\frac{1}{\gamma_1 \gamma_2} \right)$ samples \\
of $m=2\lceil \frac{1}{\delta} \rceil +3$ consecutive observations (of the Markov Chain), 
and finds with high probability, $P',M'$ and $\{\tw_r\}_{r \in [R]}$ such that
\begin{equation}
\nrm{M-M'}{F} \le \eta, \quad \nrm{P-P'}{F} \le \eta ~ \text{ and } \forall r \in [R], ~ \abs{w_r - \tw_r} < \eta 
\end{equation}
Further, this algorithm runs in time $n^{O_{\delta}(R^2 \log (\frac{1}{\eta \gamma_1}))}\left(n \cdot \frac{\ka}{\gamma_1 \gamma_2}\right)^{O_\delta(1)}$ time.
}\\

\vspace{3pt}

Note that the above results shows polynomial identifiability (for constant $\delta>0$), and additionally gives an algorithm which takes time $n^{O_{\delta}(R^2)}\poly(n,\ka,R)$ for inverse polynomial error. To the best of our knowledge such algorithmic results with only a polynomial dependence on $n$ were not known for learning HMMs and topic models. 


\subsection{Auxiliary lemmas}
In our proofs we will require several simple (mostly elementary linear algebra) lemmas. The Section~\ref{sec:app:linear-algebra} is a medley of such lemmas. Most of the proofs are reasonably straightforward, and thus we place them in the Appendix.

\section{Uniqueness of Tensor Decompositions}
\newcommand{\tka}{\tilde{\ka}}



First we consider third order tensors and prove Theorem~\ref{thm:unique3} (Sections~\ref{sec:unique3:1} and \ref{sec:unique3:2}). Our proof broadly follows along the lines of Kruskal's original proof of the uniqueness theorem~\cite{Kru77}. The key ingredient, which is a robust version of the so-called {\em permutation lemma} is presented in Section~\ref{sec:perm-lemma}, since it seems interesting its own right. Finally we will see how to reduce the case of higher order tensors, i.e. Theorem~\ref{thm:unique:gen}, to that of third order tensors (Section~\ref{sec:unique-higher}).

\subsection{Uniqueness Theorem for Third Order Tensors}\label{sec:unique3:1}
The proof of Theorem~\ref{thm:unique3} broadly has two parts. First, we prove that if $[A, B, C] = [A', B', C']$, then $A$ is a permutation of $A'$, $B$ of $B'$, and $C$ of $C'$. Second, we prove that the permutations in the (three) different ``modes'' (or dimensions) are indeed equal. 
Let us begin by describing a lemma which is key to the first step.

\paragraph{The Permutation Lemma}
This is the core of Kruskal's argument for the uniqueness of tensor decompositions. Given two matrices $X$ and $Y$, how does one conclude that the columns are permutations of each other? Kruskal gives a very clever sufficient condition, involving looking at {\em test vectors} $w$, and considering the number of non-zero entries of $w^T X$ and $w^T Y$. The intuition is that if $X$ and $Y$ are indeed permutations, these numbers are precisely equal for all $w$.

Kruskal proves that if this sufficient condition holds, then $X$ and $Y$ must have columns which are permutations of each other, up to scaling. More precisely, suppose $X, Y$ are $n \times R$ matrices of rank $k$. Let $nz(x)$ denote the number of non-zero entries in a vector $x$. The lemma then states that if for all $w$, we have
\[ nz( w^T X) \le R-k+1 ~~\implies ~~ nz(w^T Y) \le nz(w^T X), \]
then the matrices $X$ and $Y$ have columns which are permutations of each other up to a scaling. That is, there exists an $R \times R$ permutation matrix $\Pi$, and a diagonal matrix $\Lambda$ s.t. $Y = X \Pi \Lambda$.

We prove a robust version of this lemma, stated as follows (recall the definition of $nz_{\eps}(.)$, Section~\ref{sec:prelim})

\begin{lemma}[Robust permutation lemma]\label{lem:permutation-poly}
Suppose $X, Y$ are $\rho$-bounded $n \times R$ matrices such that $\krk{\ka}(X)$ and $\krk{\ka}(Y)$ are $\ge k$, for some integer $k \ge 2$. Further, suppose that for $\eps < 1/\errp_{\ref{lem:permutation-poly}}$, the matrices
satisfy:
\begin{equation}\label{eq:perm-cond} \forall w \text{ s.t. }\quad  nz(w^T X) \le R-k+1 \text{, we have }  nz_{\eps}(w^T Y) \le nz(w^T X),
\end{equation}
then there exists an $R \times R$ permutation matrix $\Pi$, and a diagonal matrix $\Lambda$ s.t. $X$ and $Y$ satisfy $\norm{ X - Y \Pi \Lambda }_F < \errp_{\ref{lem:permutation-poly}} \cdot \eps$. In fact, we can pick $\errp_{\ref{lem:permutation-poly}} := (nR^2) \errp_{\ref{lem:key-intersection}}$.
\end{lemma}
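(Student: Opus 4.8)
The goal I would aim for is a column matching: a bijection $\pi$ and scalars so that each column $y_j$ of $Y$ is $O(\eps\cdot\errp)$-close to $\lambda_j x_{\pi(j)}$, from which the permutation $\Pi$, the diagonal $\Lambda$, and the Frobenius bound all follow by inverting the correspondence and summing over the $R$ columns. Following Kruskal, the bridge from the hypothesis \eqref{eq:perm-cond} to such a matching is a statement that columns of $Y$ lie approximately inside spans of small subsets of the columns of $X$. The basic translation is this: fix a subset $S\subseteq[R]$ of columns of $X$ with $|S|=k-1$; since $\krk{\ka}(X)\ge k$, the space $V_S:=\spn\{x_r:r\in S\}$ is exactly $(k-1)$-dimensional and well conditioned. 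Every $w$ in its orthogonal complement $V_S^\perp$ has $nz(w^T X)\le R-(k-1)=R-k+1$, so \eqref{eq:perm-cond} yields $nz_\eps(w^T Y)\le R-k+1$; equivalently, for each such $w$ at least $k-1$ columns of $Y$ satisfy $|w^T y_j|<\eps$.

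\textbf{From approximate orthogonality to approximate containment.} The first technical step is to upgrade ``nearly orthogonal to $V_S^\perp$'' into ``nearly contained in $V_S$'': if a fixed column $y_j$ had $|w^Ty_j|<\eps$ for all unit $w\in V_S^\perp$, its component in $V_S^\perp$ would have norm $O(\eps\sqrt n)$, i.e. $y_j$ would be $O(\eps\sqrt n)$-close to $V_S$. The subtlety is that the set of $k-1$ near-orthogonal columns may vary with $w$, so I would need to extract a \emph{single} set $T_S\subseteq[R]$ of at least $k-1$ columns of $Y$ that are simultaneously $O(\eps\cdot\errp)$-close to $V_S$. Producing such a fixed set valid for the whole complement, with only polynomial (not exponential, and not $n(R-k)$-sized) loss, already requires care; I would do this using $\rho$-boundedness to discretize directions in $V_S^\perp$ and $\krk{\ka}(Y)\ge k$ to prevent too many columns of $Y$ from crowding into a low-dimensional space.

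\textbf{The descent and the main obstacle.} The heart of the argument is to pass from ``$y_j$ is close to $V_S$ for various $(k-1)$-subsets $S$'' to ``$y_j$ is close to a single direction $x_i$''. The enabling algebraic fact is that whenever $|S_1\cup S_2|\le k$, the condition $\krk{\ka}(X)\ge k$ forces the relevant columns to be independent, so $\spn(X_{S_1})\cap\spn(X_{S_2})=\spn(X_{S_1\cap S_2})$ \emph{exactly}; Lemma~\ref{lem:key-intersection} supplies the robust counterpart, that a vector close to two such well-conditioned spans is close to their intersection, at the cost of a single factor of $\errp_{\ref{lem:key-intersection}}$. The naive route is a downward induction on subset size $i$ that drills from size $k-1$ to size $1$ by repeatedly intersecting, but since one intersection only drops the size by one (for $|S_1\cap S_2|=k-2$), this invokes the lemma $\Theta(k)$ times and multiplies the error by $\approx\ka n$ at each level, reintroducing the forbidden $(\ka n)^k$ blow-up. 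This is exactly the main obstacle. My plan to defeat it is to reformulate the inductive invariant so the error is \emph{re-grounded in the original $\eps$ at every level} rather than propagated: at each subset size I would re-derive the absolute closeness bound from the test-vector hypothesis \eqref{eq:perm-cond} (which is available for subsets of all sizes $\le k$), using the intersection step only to identify \emph{which} columns descend, not to carry forward their error. Getting this bookkeeping exactly right, so that the per-column error stays $O(\eps\cdot nR^2\,\errp_{\ref{lem:key-intersection}})$ throughout the $\Theta(k)$ levels, is where I expect essentially all the work to lie.

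\textbf{Assembling the conclusion.} Once the descent gives, for each $j$, an index $\pi(j)$ with $y_j$ being $O(\eps\cdot\errp)$-close to $\spn\{x_{\pi(j)}\}$, I would argue $\pi$ is injective from $\krk{\ka}(Y)\ge k\ge2$: two columns of $Y$ close to the same $x_i$ would be nearly parallel, making a $2\times2$ submatrix of $Y$ near-singular and contradicting $\sigma_2\ge1/\ka$; with $R$ columns on each side this forces a bijection. Setting $\lambda_j=\langle y_j,x_{\pi(j)}\rangle/\norm{x_{\pi(j)}}^2$ and using $\rho$-boundedness together with the single-column bounds $1/\ka\le\norm{x_{\pi(j)}}\le\rho$ (from $\krk{\ka}(X)\ge 1$ and boundedness) to control $\lambda_j$ and its reciprocal, each $\norm{y_j-\lambda_j x_{\pi(j)}}$ is $O(\eps\cdot\errp)$. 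Inverting the correspondence and summing the squared errors over the $R$ columns, absorbing the extra factors into the polynomial, yields $\norm{X-Y\Pi\Lambda}_F<\errp_{\ref{lem:permutation-poly}}\cdot\eps$ with $\errp_{\ref{lem:permutation-poly}}=(nR^2)\,\errp_{\ref{lem:key-intersection}}$, as claimed.
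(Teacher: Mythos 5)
Your high-level diagnosis matches the paper's proof exactly: the error must not be propagated through the downward induction, closeness must always be measured against the original $(k-1)$-level data, and the induction should only track \emph{which} columns descend. Your final assembly (injectivity of the matching from $\krk{\ka}(Y)\ge k\ge 2$) is also essentially the paper's. One secondary remark on your base case: the paper does not discretize directions in $V_S^\perp$; it uses a probabilistic anti-concentration argument (Lemma~\ref{lem:perm:basecase}) to produce a \emph{single} unit vector $w\in V_S^\perp$ with $|\iprod{w, y_j}| > \eps$ simultaneously for every column $y_j$ whose projection onto $V_S^\perp$ exceeds $\eps_1 = nR\,\eps$; applying the hypothesis to that one $w$ bounds the number of such columns by $R-k+1$, which is what extracts the single fixed set $T_S$ you ask for. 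A net over directions gives, for each direction, a set of near-orthogonal columns that varies with the direction --- precisely the difficulty you flagged, and the net does not resolve it.

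The genuine gap is exactly where you write that ``essentially all the work'' lies, and your stated mechanism for re-grounding does not work as described: the hypothesis \eqref{eq:perm-cond} is \emph{not} available at all subset sizes $\le k$, since it only constrains $w$ with $nz(w^T X)\le R-k+1$, i.e.\ essentially $w$ orthogonal to a span of $k-1$ columns of $X$; for $|S|<k-1$ a generic $w\in V_S^\perp$ has $nz(w^TX)=R-|S|>R-k+1$ and the hypothesis says nothing. The paper's cure is to define, for small $S$, $T_S := \bigcap_{S'\supset S,\,|S'|=k-1} T_{S'}$ --- so membership is always measured at error $\eps_1$ against the top level --- and to prove by downward induction the \emph{exact} cardinality claim $|T_S|=|S|$. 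The inductive step is the missing combinatorial engine: writing $W_i = T_{S\cup\{i\}}$ for the $q=R-|S|$ indices $i\notin S$, each pairwise intersection $W_i\cap W_j$ is shown to lie in $T^*$, the set of columns $\errp_{\ref{lem:key-intersection}}\cdot\eps_1$-close to $\spn(X_S)$, by a single application of Lemma~\ref{lem:key-intersection} with $(k-1)$-supersets covering $\{i,j\}$; thus $\{W_1,\dots,W_q\}$ is a sunflower with core $T^*$. The counting Lemma~\ref{lem:sunflower} with $\theta=|S|$ (using $|W_i|\ge|S|+1$ from the inductive hypothesis and $R=q+|S|$) gives $|T^*|\ge|S|$, while $\krk{\ka}(Y)\ge k$ forces $|T^*|\le|S|$ (Lemma~\ref{lem:conditioned:span}); crucially, the \emph{equality case} of the sunflower lemma then yields $T^*\subseteq W_i$ for all $i$, hence $T_S=\bigcap_i W_i = T^*$, which is what lets the induction continue with the error fixed at the single factor $\errp_{\ref{lem:key-intersection}}\cdot\eps_1$ at every level. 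Without this exact-count-plus-equality argument your descent does not close: knowing each $W_i$ has at least $|S|+1$ elements does not by itself produce $|S|$ elements common to all $q$ of them, and no alternative mechanism is supplied in your proposal.
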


\paragraph{Outline of the section.} In the remainder of this section, we will prove that $A'$ is a permutation of $A$, $B'$ of $B$ and $C'$ of $C$. We do this by assuming Lemma~\ref{lem:permutation-poly} for now (it will be proved in Section~\ref{sec:perm-lemma}) and proving that if $\tens{A}{B}{C} =_{\eps} \tens{A'}{B'}{C'}$, then the conditions of the lemma hold for $C', C$ as $X, Y$ in the statement respectively. We can repeat this argument with $A, B$ to obtain the conclusion.


We now state the key technical lemma which allows us to verify that the hypotheses of Lemma~\ref{lem:permutation-poly} hold. It says for any $k_C -1$ vectors of $C'$ there are at least as many columns of $C$ which are close to the span of the chosen columns from $C'$. 

\newcommand{\tensA}{\tens{A}{B}{C}}
\newcommand{\tensAP}{\tens{A'}{B'}{C'}}
\begin{lemma}\label{lem:perm:conditions}
Suppose $A, B, C, A', B', C'$ satisfy the conditions of Theorem~\ref{thm:unique3}, and suppose $\tens{A}{B}{C} =_{\eps} \tens{A'}{B'}{C'}$. Then for any unit vector $x$, we have
\[\forall \eps', \quad \nz_{\eps'}(x^T C') \le R-k_C+1 \implies \nz_{\eps''}(x^T C) \le \nz_{\eps'}(x^T C')\] 
for $\eps'' = \errp_{\ref{lem:perm:conditions}} \cdot (\eps + \eps')$, where $\errp_{\ref{lem:perm:conditions}} := 4R^3 (\ka_A \ka_B \ka_C)^2 \rho_A \rho_B \rho_C (\rho'_A \rho'_B \rho'_C)^2$.
\end{lemma}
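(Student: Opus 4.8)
The plan is to contract both tensors against the test vector $x$ along the third mode, reducing the claim to a robust rank statement about a matrix of the form $A\,\diag(x^T C)\,B^T$. Write $d = x^T C$ and $d' = x^T C'$ (vectors in $\R^R$ with $d_r = \langle x, C_r\rangle$), and set
\[ M_x = A\,\diag(d)\,B^T = \sum_r d_r A_r B_r^T, \qquad M'_x = A'\,\diag(d')\,(B')^T. \]
Since $x$ is a unit vector, contracting $\tens{A}{B}{C} =_\eps \tens{A'}{B'}{C'}$ against $x$ gives $\norm{M_x - M'_x}_F \le \eps$. Let $q = \nz_{\eps'}(x^T C') \le R - k_C + 1$. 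Discarding the (at most $R$) rank-one terms of $M'_x$ on which $|d'_r| < \eps'$ shows that $M'_x$ lies within Frobenius distance $R \rho'_A \rho'_B \eps'$ of a matrix of rank at most $q$, so $\sigma_{q+1}(M'_x) \le R\rho'_A\rho'_B\eps'$, and by Weyl's inequality $\sigma_{q+1}(M_x) \le \eps + R\rho'_A\rho'_B\eps' =: \eta$.

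It remains to prove the reverse bound: if $\nz_{\eps''}(x^T C) \ge q+1$ then $\sigma_{q+1}(M_x)$ is large, a contradiction. The Kruskal hypothesis $k_A + k_B + k_C \ge 2R+2$ together with $q \le R - k_C + 1$ yields the crucial counting inequality $q + 1 \le k_A + k_B - R \le \min(k_A, k_B)$. Assume $\nz_{\eps''}(x^T C) \ge q+1$ and let $S_0$ be a set of exactly $q+1$ indices with $|d_r| \ge \eps''$. The combinatorial heart of the argument is that, since $|\bar{S}_0| = R - (q+1) \le (k_A - (q+1)) + (k_B - (q+1))$ — which is exactly the inequality $q+1 \le k_A + k_B - R$ — we may partition the remaining indices as $\bar{S}_0 = T_A \sqcup T_B$ with $|T_A| \le k_A - (q+1)$ and $|T_B| \le k_B - (q+1)$. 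We then ``kill'' the columns indexed by $T_A$ on the $A$-side and those indexed by $T_B$ on the $B$-side.

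Concretely, let $U$ be an orthonormal basis for the projection of $\{A_r : r \in S_0\}$ onto $\spn\{A_r : r \in T_A\}^\perp$, and let $W$ be defined symmetrically from $\{B_r : r \in S_0\}$ and $\spn\{B_r : r \in T_B\}^\perp$. Because $|S_0 \cup T_A| \le k_A$ and $|S_0 \cup T_B| \le k_B$, the conditions $\krk{\ka_A}(A) \ge k_A$ and $\krk{\ka_B}(B) \ge k_B$ imply (via the elementary fact that projecting a robustly independent family off a disjoint subfamily keeps its least singular value at least $1/\ka$) that $U$ and $W$ are genuinely $(q+1)$-dimensional with $\sigma_{\min}(U^T A_{S_0}) \ge 1/\ka_A$ and $\sigma_{\min}(W^T B_{S_0}) \ge 1/\ka_B$. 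By construction $U^T A_r = 0$ for $r \in T_A$ and $W^T B_r = 0$ for $r \in T_B$, so every term outside $S_0$ vanishes and
\[ U^T M_x W = (U^T A_{S_0})\,\diag(d_{S_0})\,(W^T B_{S_0})^T, \]
a product of three $(q+1)\times(q+1)$ matrices whose least singular value is at least $(1/\ka_A)\cdot \eps'' \cdot (1/\ka_B)$. Since $U,W$ have orthonormal columns, $\sigma_{q+1}(M_x) \ge \sigma_{\min}(U^T M_x W) \ge \eps''/(\ka_A \ka_B)$. Comparing with $\sigma_{q+1}(M_x) \le \eta$ forces $\eps'' \le \ka_A \ka_B \eta$; choosing $\eps'' = \errp_{\ref{lem:perm:conditions}}\cdot(\eps+\eps')$, which comfortably exceeds $\ka_A\ka_B(\eps + R\rho'_A\rho'_B\eps')$, gives the contradiction and hence $\nz_{\eps''}(x^T C) \le q$.

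The step I expect to be the main obstacle is the two-sided isolation in the last paragraph. The naive alternatives fail: splitting $M_x$ into its large- and small-coefficient parts does not work because the discarded part can have large operator norm, and projecting on a single side cannot annihilate all $R - (q+1)$ unwanted columns once $R > \min(k_A,k_B)$. The resolution is precisely the partition $\bar{S}_0 = T_A \sqcup T_B$, whose feasibility is \emph{equivalent} to the Kruskal inequality $q+1 \le k_A + k_B - R$; this is where $k_A + k_B + k_C \ge 2R+2$ is used in an essential, non-lossy way. I note that this argument uses only the boundedness of $A',B'$ and the robust Kruskal ranks of $A,B$; the remaining factors appearing in $\errp_{\ref{lem:perm:conditions}}$ (those involving $\ka_C,\rho_C,\rho'_C$ and the extra powers) are absorbed into a deliberately loose, symmetric over-estimate permitted by the normalization $1 \le \ka_\bullet,\rho_\bullet \le \poly(n)$.
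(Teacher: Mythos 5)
Your proposal is correct, and its core differs genuinely from the paper's own proof. The two arguments share the contraction step and the upper bound $\sigma_{q+1}(M_x) \le \eps + R\rho'_A\rho'_B\eps'$ (this is exactly the paper's bound \eqref{eq:conditions:bound}, obtained the same way). Where you diverge is the lower bound. The paper worries explicitly that the large coefficients of $x^TC$ could be canceled by coefficients just below $\eps''$, and so it introduces a second, much smaller threshold $\del = \eps''/\errp$ and splits into two cases: when at least $k_B$ coefficients exceed $\del$, it lower-bounds $\sigma_{R-k_C+2}(M)$ via the Sylvester-type Fact~\ref{fact:sylvester} applied one-sidedly to $P=A$, $Q=\diag(\alpha)B^T$; otherwise it finds a unit vector $y$ orthogonal to the span of the heavy columns of $B'$ but with a $1/(\ka_B\sqrt{R})$ component along some $B_i$, reducing to a short combination of columns of $A$ and invoking Lemma~\ref{lem:conditioned1} (and it must assume $k_A \ge k_B$ w.l.o.g.). You bypass the second threshold and the case analysis entirely: the partition $\bar{S}_0 = T_A \sqcup T_B$, feasible precisely because $q+1 \le R-k_C+2 \le k_A+k_B-R \le \min(k_A,k_B)$, lets you compress $M_x$ on both sides so that every term outside $S_0$ is annihilated \emph{exactly}, whatever the size of its coefficient — the cancellation problem is dissolved structurally rather than managed with thresholds. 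The supporting facts all check out: $\sigma_{\min}(U^T A_{S_0}) \ge 1/\ka_A$ holds because $\norm{\Pi^\perp A_{S_0}z} = \min_w \norm{A_{S_0}z - A_{T_A}w} \ge \sigma_{\min}\bigl(A_{|S_0\cup T_A}\bigr)\norm{z}$ with $|S_0\cup T_A|\le k_A$; $\sigma_{q+1}(M_x) \ge \sigma_{\min}(U^T M_x W)$ since $U,W$ have orthonormal columns; and least singular values multiply across a product of square matrices. Your route buys a single threshold, no case split, symmetry in $A$ and $B$, and the cleaner intermediate statement that $\sigma_{q+1}(M_x) \ge \eps''/(\ka_A\ka_B)$ whenever $q+1$ coefficients of $x^TC$ exceed $\eps''$; it also makes transparent, as you note, that only $\ka_A,\ka_B,\rho'_A,\rho'_B$ enter the true bound, the remaining factors in $\errp_{\ref{lem:perm:conditions}}$ being slack that the paper carries for downstream bookkeeping in Lemma~\ref{lem:conditioned}. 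What the paper's route buys is chiefly that it runs on lemmas already stated there, whereas yours needs the (one-line, easily verified) fact about projecting a robustly independent family off a disjoint subfamily.
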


\paragraph{Remark.} This lemma, together with its corollary Lemma~\ref{lem:conditioned} will imply the conditions of the permutation lemma. Lemma~\ref{lem:conditioned} lets us conclude that $\krk{\ka \errp}(C') \ge \krk{\ka}(C)$ for some error polynomial $\errp$, which is essential in our proof of the permutation lemma. It also has other implications, as we will see. While the proof of the robust permutation lemma (Lemma~\ref{lem:permutation-poly}) will directly apply this Lemma with $\eps'=0$, we will need the $\eps'>0$ case for establishing Lemma~\ref{lem:conditioned}. 

A key component of the proof is to view the three-dimensional tensor $\tens{A}{B}{C}$ as a bunch of {\em matrix slices}, and argue about the rank (or conditioned-ness) of weighted combinations of these slices. One observation, which follows from the Cauchy-Schwarz inequality, is the following: if $\tens{A}{B}{C}=_{\eps}\tens{A'}{B'}{C'}$, then by taking a combination of slices along the third dimension (with weights given by $x \in \bbR^{n_C}$, i.e., reweighing the $i$th slice by $x_i$ and summing these matrices) we have
\begin{equation} \label{eq:slices}
\forall x \in \bbR^{n_C}, \quad  \norm{A~ \diag(x^T C)~ B^T - A'~ \diag(x^T C')~ (B')^{T}}_F^2 \le \eps^2 \norm{x}_2^2. \end{equation}

We now begin the proof of the Lemma.
\begin{proof}[Proof of Lemma~\ref{lem:perm:conditions}]
W.l.o.g., we may assume that $k_A \ge k_B$ (the proof for $k_A < k_B$ will follow along the same lines). For convenience, let us define $\alpha$ to be the vector $x^T C$, and $\beta$ the vector $x^T C'$. Let $t$ be the number of entries of $\beta$ of magnitude $>\eps'$. The assumption of the lemma implies that $t \le R- k_C +1$. Now from \eqref{eq:slices}, we have
\begin{equation}\label{eq:helpful}
M:= \sum_i \alpha_i A_i \opr B_i = \sum_i \beta_i A_i' \opr B_i' + Z,
\end{equation}
where $Z$ is an error matrix satisfying $\norm{Z}_F \le \eps$. Now, since the RHS has at most $t$ terms with $|\beta_i| > \eps'$, we have that $\sigma_{t+1}$ of the LHS is at most $R\rho_A' \rho_B' \eps' + \eps$. Using the value of $t$, we obtain

\begin{equation}\label{eq:conditions:bound}
\sigma_{R-k_C+2}(M) \le \sigma_{t+1} (M) < \eps+ (R \rho_A' \rho_B')\eps' 
\end{equation}

We will now show that if $x^T C$ has too many co-ordinates which are larger than $\eps''$ then we will contradict \eqref{eq:conditions:bound}. One tricky case we need to handle is the following: while each of these non-negligible co-ordinates of $x^T C$ will give rise to a large rank-$1$ term, they can be canceled out by combinations of the rank-$1$ terms corresponding to entries of $x^T C$ which are slightly smaller than $\eps''$. Hence, we will also set a smaller threshold $\delta$ and first handle the case when there are many co-ordinates in $x^T C$ which are larger than $\delta$. $\delta$ is chosen so that the terms with $(x^T C)_i< \delta$ can not cancel out any of the large terms ($(x^T C)_i \ge \eps''$).

Define $S_1= \{ i : | (x^T C)_i | > \eps'' \}$ and $S_2= \{ i: |(x^T C)_i | > \delta  \}$, where $\del = \eps''/\errp$ for some error polynomial $\errp=2 R^2 \rho_A \rho_B \rho_C \rho_A' \rho_B' \rho_C' \ka_A \ka_B \ka_C$ (which is always $> 1$). Thus we have $S_1 \subseteq S_2$. We consider two cases.

\noindent\textbf{Case 1: $|S_2| \ge k_B$.}

In this case we will give a lower bound on $\sigma_{R-k_C+2}(M)$, which gives a  contradiction to \eqref{eq:conditions:bound}. The intuition is roughly that $A, B$ have $k_A, k_B$ {\em large} singular values, and thus the product should have enough large ones as well. To formalize this, we use the following well-known fact about singular values of products, which is proved by considering the variational characterization of singular values:
\begin{fact}\label{fact:sylvester}
Let $P, Q$ be matrices of dimensions $p \times m$ and $m \times q$ respectively. Then for all $\ell, i$ such that $\ell \le \min\{p,q\}$, we have
\begin{equation} \label{eq:sylvester}
\sigma_{\ell}(PQ) \ge \sigma_{\ell+m-i}(P) \sigma_{i}(Q)
\end{equation}
\end{fact}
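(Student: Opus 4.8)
The plan is to prove this product singular-value inequality directly from the Courant--Fischer (max--min) variational characterization of singular values, by exhibiting a single good $\ell$-dimensional subspace of the domain $\R^q$ on which $PQ$ cannot shrink any vector below $\sigma_{\ell+m-i}(P)\,\sigma_i(Q)$. Recall that $\sigma_\ell(PQ) = \max_{\dim V = \ell}\ \min_{0\ne x\in V} \norm{PQx}/\norm{x}$, so it suffices to produce one $\ell$-dimensional subspace $V'$ witnessing the claimed lower bound.

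First I would dispose of the degenerate cases: if either $\sigma_i(Q)=0$ or $\sigma_{\ell+m-i}(P)=0$, the right-hand side vanishes and there is nothing to prove. Crucially, $\sigma_{\ell+m-i}(P)>0$ forces $\ell+m-i \le \rank(P) \le m$, hence $i \ge \ell$, which is exactly the regime in which the dimension count below is valid. So I may assume $\sigma_i(Q)>0$, $\sigma_{\ell+m-i}(P)>0$, and $i\ge \ell$.

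The heart of the argument is to route vectors through two well-chosen subspaces of the ``middle'' space $\R^m$. Let $V_i \subseteq \R^q$ be the span of the top $i$ right singular vectors of $Q$, and let $U_i \subseteq \R^m$ be the span of the corresponding top $i$ left singular vectors; then $Q$ maps $V_i$ bijectively onto $U_i$ with $\norm{Qx}\ge \sigma_i(Q)\norm{x}$ for all $x\in V_i$ (immediate from the SVD, using $\sigma_i(Q)>0$). On the $P$ side, let $W \subseteq \R^m$ be the span of the top $\ell+m-i$ right singular vectors of $P$, so that $\norm{Py}\ge \sigma_{\ell+m-i}(P)\norm{y}$ for all $y\in W$. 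The key dimension count is $\dim(U_i\cap W) \ge \dim U_i + \dim W - m = i + (\ell+m-i) - m = \ell$. I would then pick an $\ell$-dimensional subspace $Z \subseteq U_i\cap W$ and pull it back through the bijection $Q|_{V_i}$ to obtain an $\ell$-dimensional subspace $V' = (Q|_{V_i})^{-1}(Z) \subseteq \R^q$. For any unit $x\in V'$, the vector $y=Qx$ lies in $Z \subseteq U_i\cap W$, so $\norm{Qx}\ge \sigma_i(Q)$ and then $\norm{PQx}=\norm{Py}\ge \sigma_{\ell+m-i}(P)\norm{y}\ge \sigma_{\ell+m-i}(P)\sigma_i(Q)$; feeding $V'$ into the max--min formula gives the inequality.

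The main obstacle---really the only subtle point---will be getting the index and dimension bookkeeping exactly right, so that $U_i \cap W$ is guaranteed to have dimension at least $\ell$ and so that this is consistent with the allowed ranges of $\ell$ and $i$. This is precisely what dictates the shift ``$+m-i$'' in the index of $\sigma(P)$: one must take $\dim W = \ell+m-i$ so that a subspace of dimension $i$ and one of dimension $\ell+m-i$ inside an $m$-dimensional space are forced to intersect in dimension $\ge \ell$. Everything else (the two norm lower bounds from the SVD and the bijectivity of $Q|_{V_i}$) is routine, and the degenerate cases cleanly cover the situations where an index falls outside $[1,\rank]$.
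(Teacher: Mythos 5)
Your proof is correct, and it takes exactly the route the paper intends: the paper gives no written proof beyond the remark that the fact ``is proved by considering the variational characterization of singular values,'' and your argument supplies precisely that Courant--Fischer proof in full detail, including the important observation that $\sigma_{\ell+m-i}(P)>0$ forces $i\ge\ell$, which makes the dimension count $\dim(U_i\cap W)\ge i+(\ell+m-i)-m=\ell$ valid. Nothing needs to be corrected.
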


Now, let us view $M$ as $PQ$, where $P = A$, and $Q = \diag(\alpha)B^T$. We will show that $\sigma_{k_B}(Q) \ge \del/ \ka_B$, and that $\sigma_{2R+2-k_B-k_C}(A) \ge 1/\ka_A$. These will then imply a contradiction to~\eqref{eq:conditions:bound} by setting $\ell = R-k_C+2$ and $i = k_B$ since
$$\frac{\delta}{\ka_A \ka_B} = \frac{\eps''}{\errp \ka_A \ka_B} > (R \rho_A' \rho_B' \eps'+\eps) ~\text{ by our choice of }\errp_{\ref{lem:perm:conditions}} .$$ (It is easy to check that $\ell \le \min\{k_A, k_B\} \le \min\{n_A, n_B\}$, and thus we can use the fact above.)

Thus we only need to show the two inequalities above. The latter is easy, because by the hypothesis we have $2R+2-k_B-k_C \le k_A$, and we know that $\sigma_{k_A} (A) \ge 1/\ka_A$, by the definition of $\krk{\ka_A}(A)$. Thus it remains to prove the second inequality. To see this, let $J \subset S_2$ of size $k_B$. Let $B^T_J$ and $Q_J$ be the submatrices of $B^T$ and $Q$ restricted to rows of $J$. Thus we have $Q_J=\diag(\alpha)_J B^T_J$. Because of the Kruskal condition, every $k_B$ sized sub matrix of $B$ is well-conditioned, and thus $\sigma_{k_B}(B_J)=\sigma_{k_B}(B^T_J)\ge 1/\ka_B$.

Further, since $|\alpha_j| >\delta ~ \forall j \in J$, multiplication by the diagonal cannot lower the singular values by much, and we get $\sigma_{k_B} Q_J \ge \del/\ka_B$. This can also be seen formally by noting that $\sigma_{k_B}(\diag(\alpha)_J) \ge \delta$, and applying Fact~\ref{fact:sylvester} with $P=\diag(\alpha)_J, Q=B^T_J$ and $\ell=m=i=k_B$.

Finally, since $Q$ is essentially $Q_J$ along with additional rows, we have $\sigma_{\ka_B}(Q) \ge \sigma_{\ka_B} (Q_J) \ge \delta/\ka_B$.
From the argument earlier, we obtain a contradiction in this case.

\noindent\textbf{Case 2: $|S_2| < k_B$.}

Roughly, by defining $S_1, S_2$, we have divided the coefficients $\alpha_i$ into large ($\ge \eps''$), small, and tiny ($< \del$). In this case, we have that the number of large and small terms together (in $M$, see Eq.~\eqref{eq:helpful}) is at most $k_B$. For contradiction, we can assume the number of large ones is $\ge t+1$, since we are done otherwise. The aim is to now prove that this implies a lower bound on $\sigma_{t+1}(M)$, which gives a contradiction to Eq.~\eqref{eq:conditions:bound}.

Now let us define $M' = \sum_{i \in S_2} \alpha_i (A_i \otimes B_i)$. Thus $M$ and $M'$ are equal up to {\em tiny} terms. Further, let $\Pi$ be the matrix which projects a vector onto the span of $\{ B_i' ~:~ |\beta_i| \ge \eps' \}$, i.e., the span of the columns of $B'$ which correspond to $|\beta_i| \ge \eps'$. Because there are at most $t$ such $\beta_i$, this is a space of dimension $\le t$. Thus we can rewrite Eq.~\eqref{eq:helpful} as
\begin{equation}\label{eq:helpful2}
M' = \sum_{i \in S_1} \alpha_i (A_i \otimes B_i) + \sum_{j \in S_2\setminus S_1} \alpha_j  (A_j \otimes B_j) = \sum_{i=1}^{t} \beta_i (A_i' \opr B_i') + Err,
\end{equation}
where we assumed w.l.o.g. that $|\beta_i| \ge \eps'$ for $i \in [t]$, and $Err$ is an error matrix of Frobenius norm at most $\eps+R (\rho_A \rho_B \del + \rho_A' \rho_B' \eps') \le \eps+(R \rho_A \rho_B \rho_A' \rho_B') (\del + \eps')$.

Now because $|S_1| \ge t+1$, and $\krk{\ka_B} (B) \ge k_B \ge t+1$, there must be one vector among the $B_i$, $i \in S_1$, which has a reasonably large projection orthogonal to the span above, i.e., which satisfies
\[ \norm{ B_i - \Pi B_i}_2 \ge 1/(\ka_B \sqrt{R}). \]
Let us pick a unit vector $y$ along $B_i - \Pi B_i$. Consider the equality~\eqref{eq:helpful2} and multiply by $y$ on both sides. We obtain
\[ \sum_{i \in S_2} \alpha_i \Iprod{B_i}{y} A_i = (Err)y. \]
Thus we have a combination of the $A_i$'s, with at least one coefficient being $> \eps'' / (R\ka_B)$, having a magnitude at most $\norm{(Err)y}_2 < \errp_1 (\del + \eps'+\eps)$, where $\errp_1$ was specified above. \\
Now $k_A \ge k_B \ge |S_2|$. So, we obtain a contradiction by Lemma~\ref{lem:conditioned1} since:
\begin{align*}
\norm{(Err)y}_2 < \errp_1  (\del+\eps'+\eps) &= R \rho_A \rho_B \rho_A' \rho_B' (\del+\eps'+\eps)\\
& = R \rho_A \rho_B \rho_A' \rho_B' (\frac{\eps''}{\errp} +\eps'+\eps)\\
& < \frac{1}{\ka_A} \cdot \frac{\eps''}{ R \ka_B}
\end{align*}
The last inequality follows because $\errp = 2 R^2 \rho_A \rho_B \rho_C \rho_A' \rho_B' \rho_C' \ka_A \ka_B \ka_C$. \\
This completes the proof in this case, hence concluding the proof of the lemma.
\end{proof}

The next lemma uses the above to conclude that $\krk{\errp \ka}(C') \ge \krk{\ka}(C)$, for some polynomial $\errp$.


\begin{lemma}\label{lem:conditioned}
Let $A, B, C, A', B', C'$ be as in the setting of Theorem~\ref{thm:unique3}. Suppose $\tensA =_{\eps} \tensAP$, with 
\[ \eps < 1/\errp_{\ref{lem:conditioned}}, \text{ where } \errp_{\ref{lem:conditioned}} = R \ka_A \ka_B \ka_C \errp_{\ref{lem:perm:conditions}} = 4R^4 \ka_A^3 \ka_B^3 \ka_C^3 \rho_A \rho_B \rho_C (\rho'_A \rho'_B \rho'_C)^2.\]
Then $A',B',C'$ have $\krk{\ka'}$ to be at least $k_A,k_B,k_C$ respectively, where $\ka' := \errp_{\ref{lem:conditioned}}$.
\end{lemma}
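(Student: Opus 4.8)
\emph{Proof plan.} I will argue the contrapositive for each mode separately; by the symmetry of the tensor it suffices to establish $\krk{\ka'}(C') \ge k_C$ using the $C$-version of Lemma~\ref{lem:perm:conditions}, the statements for $A'$ and $B'$ following by relabelling the three modes. Set $\ka' := \errp_{\ref{lem:conditioned}}$ and suppose, for contradiction, that $\krk{\ka'}(C') < k_C$. Then there is a set $S$ of $k_C$ columns of $C'$ with $\sigma_{k_C}(C'_{|S}) < 1/\ka'$, and my goal is to manufacture a set of $k_C$ columns of $C$ that is itself ill-conditioned, contradicting $\krk{\ka_C}(C) \ge k_C$.

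The first step turns the spectral degeneracy of $C'_{|S}$ into a geometric one. A unit near-null combination of the columns indexed by $S$ has some coordinate of magnitude $\ge 1/\sqrt{k_C}$, say at index $i_0$, so the column $C'_{i_0}$ lies within $\eps' := \sqrt{k_C}/\ka'$ of $V := \spn\{C'_i : i \in S \setminus \{i_0\}\}$, a space of dimension at most $k_C-1$. Since the remaining columns of $S$ lie exactly in $V$, \emph{all} $k_C$ columns indexed by $S$ are $\eps'$-close to $V$. It is exactly this extra column $i_0$ that lets a $(k_C-1)$-dimensional span absorb $k_C$ columns, and it is the reason we need Lemma~\ref{lem:perm:conditions} in its $\eps'>0$ form.

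Next I pass to $C$ through a test vector. I would pick a unit vector $x \in V^\perp$ (nonempty, since $k_C \le n_C$) that \emph{avoids accidental orthogonality}: for every column $C_j$ whose projection satisfies $\norm{\Pi_{V^\perp} C_j} \ge \mu$ I demand $|x^T C_j| > \eps''$, where $\eps'' = \errp_{\ref{lem:perm:conditions}}(\eps + \eps')$. Such an $x$ exists because each ``far'' column forbids only a thin slab of the unit sphere of $V^\perp$, of measure at most about $\sqrt{n_C}\,\eps''/\mu$, and $R$ such slabs cannot cover the sphere once $\eps''$ is small compared with $\mu$. Because $x \perp V$ and all of $S$ is $\eps'$-close to $V$, we get $\nz_{\eps'}(x^T C') \le R - k_C$, so Lemma~\ref{lem:perm:conditions} yields $\nz_{\eps''}(x^T C) \le R - k_C$; that is, at least $k_C$ columns $C_j$ obey $|x^T C_j| \le \eps''$. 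By the choice of $x$ none of these is $\mu$-far, so each of these $k_C$ columns is $\mu$-close to $V$.

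Finally, $k_C$ columns of $C$ lying $\mu$-close to the $(k_C-1)$-dimensional space $V$ form a submatrix within Frobenius distance $\sqrt{k_C}\,\mu$ of a rank-$(k_C-1)$ matrix, so by Weyl's inequality its $k_C$-th singular value is at most $\sqrt{k_C}\,\mu$. Choosing $\mu$ with $R\sqrt{n_C}\,\eps'' < \mu < 1/(\ka_C\sqrt{k_C})$ then contradicts $\krk{\ka_C}(C) \ge k_C$; this two-sided choice is feasible provided $\errp_{\ref{lem:conditioned}}$ is taken large enough, which is precisely why it carries the extra factor $R\ka_A\ka_B\ka_C$ over $\errp_{\ref{lem:perm:conditions}}$ and why we require $\eps < 1/\errp_{\ref{lem:conditioned}}$. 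The main obstacle, and the only delicate point, is this middle step: a single test vector certifies near-orthogonality to $x$ but not closeness to $V$, so the slab-avoidance argument, together with the bookkeeping that keeps $\eps''$ polynomially below $1/(\ka_C\sqrt{k_C})$ without any exponential loss, is what makes the reduction go through.
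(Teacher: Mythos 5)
Your proposal is correct, and it shares the paper's skeleton: argue by contradiction for one mode at a time, convert $\sigma_{k_C}(C'_{|S}) < 1/\ka'$ into a test vector $x$ for which $\nz_{\eps'}(x^T C') \le R-k_C$, transfer to $C$ via the $\eps' > 0$ case of Lemma~\ref{lem:perm:conditions}, and contradict $\krk{\ka_C}(C) \ge k_C$. But it diverges at the final step, and in a way worth recording. The paper takes $x$ to be the bottom \emph{left singular vector} $z$ of $C'_{|S}$, concludes that at least $k_C$ columns of $C$ satisfy $|z^T C_j| < \eps_1$, and asserts that this contradicts $\krk{1/(R\eps_1)}(C) \ge k_C$. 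As written, that last inference is incomplete: a single unit vector can be exactly orthogonal to $k_C$ perfectly well-conditioned columns whenever $n_C > k_C$ (take the columns to be $e_1,\dots,e_{k_C}$ and $z = e_{k_C+1}$); near-orthogonality to $z$ would contradict $\sigma_{k_C}(C_J) \ge R\eps_1$ only if $z$ were known to lie near $\spn(C_J)$, which is not established. Your route supplies exactly the missing ingredient: you pass to the $(k_C-1)$-dimensional space $V = \spn(C'_{S\setminus\{i_0\}})$ to which all of $C'_S$ is $\eps'$-close, and choose a \emph{generic} unit $x \in V^{\perp}$ via the slab-avoidance argument (this is precisely the paper's Lemma~\ref{lem:perm:basecase}, which the paper itself uses only in the base case of the permutation lemma), so that $|x^T C_j| \le \eps''$ certifies $\norm{\Pi_{V^{\perp}} C_j} < \mu$. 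Then $k_C$ columns of $C$ are $\mu$-close to a $(k_C-1)$-dimensional space, which contradicts part 3 of Lemma~\ref{lem:conditioned:span}. So your extra machinery buys a complete justification of the one step the paper leaves unjustified, at the cost of slightly worse constants.

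The only caveat is quantitative bookkeeping, which you flag yourself. With the guarantee of Lemma~\ref{lem:perm:basecase} (namely $|x^T C_j| > \mu/(20 n_C R)$ for the far columns, rather than your measure estimate), the two-sided window $20 n_C R\, \eps'' < \mu < 1/(\ka_C\sqrt{k_C})$ forces $\errp_{\ref{lem:conditioned}}$ to exceed the stated value $R \ka_A \ka_B \ka_C \errp_{\ref{lem:perm:conditions}}$ by a factor of order $n_C \sqrt{k_C}$; the displayed value suffices only if $\ka_A \ka_B$ dominates that factor, which the normalization $\ka \ge 1$ does not guarantee. Since the paper treats these error polynomials loosely throughout and every downstream application only needs $\ka'$ to be \emph{some} fixed polynomial in the parameters, this is harmless, but strictly your proof establishes the lemma with a polynomially inflated $\ka'$ rather than the exact displayed one.
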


\paragraph{Remark.} The lemma implies that if $T$ has a well-conditioned decomposition which satisfies the Kruskal conditions, then any other bounded decomposition which is a  sufficiently good approximation should also be reasonably well-conditioned. Further, it says that the decomposition $\tens{A'}{B'}{C'}$ can not be of rank $< R$. Otherwise, we could add some zero-columns to each of $A',B',C'$ and apply this lemma to conclude \kr of $A'$ is $\ge 2$, a contradiction if there exists a zero column. 



\begin{proof}
By symmetry, let us just show this for matrix $C'$ (dimensions $n \times R$), and let $k=k_C$ for convenience.
We need to show that every $n$-by-$k$ submatrix of $C'$ has minimum singular value $\ge \delta = 1/\ka'_C$.

For contradiction let $C'_S$ be the submatrix corresponding to the columns in $S$ ($|S|=k$), such that 
$\sigma_{k}(C'_S) < \delta$. Let us consider a left singular vector $z$ which corresponds to $\sigma_{k} (C'_S)$, and suppose $z$ is normalized to be unit length. Then we have
\[ \sum_{i \in S} \Iprod{z}{C'_i}^2 < \delta^2 \]
Thus $|\Iprod{z}{C'_i}| < \del$ for all $i \in S$, so we have $\nz_{\delta}(z^T C') \le n-k$. Now from Lemma~\ref{lem:perm:conditions}, we have 
\[ \nz_{\eps_1}(zC) \le n-k, \text{ where } \eps_1 = \errp_{\ref{lem:perm:conditions}} (\eps + \del). \] 

Let $J$ denote the set of indices in $z^T C$ which are $<\eps_1$ in magnitude (by the above, we have $|J| \ge k$). Thus we have $\norm{zC_J}_2 < R \eps_1$, which leads to a contradiction if we have $\krk{1/(R \eps_1)} (C) \ge k$.

Since this is true for our choice of parameters, the claim follows.
\end{proof}

Once we have the lemmas above, let us check that the conditions of the robust permutation lemma hold with $C', C$ taking the roles of $X, Y$ in Lemma~\ref{lem:permutation-poly}, and $k= k_C$, and $\ka = \errp_{\ref{lem:conditioned}} \cdot \ka_C$. From Lemma~\ref{lem:conditioned}, it follows that $\krk{\ka}(C)$ and $\krk{\ka}(C')$ are both $\ge k$, and setting $\eps'=0$ in Lemma~\ref{lem:perm:conditions}, the other condition of Lemma~\ref{lem:permutation-poly} holds. Thus we can conclude that there exists a permutation matrix $\Pi_C$ and a diagonal matrix of scalars $\Lambda_C$ such that $\norm{C' - C \Pi_C \Lambda_C}_F$ is {\em small}. We will see the quantitative details in what follows.

\subsection{Wrapping up the proof}\label{sec:unique3:2}
We are now ready to complete the robust Kruskal's theorem. From what we saw above, the main part that remains is to prove that the permutations in the various {\em dimensions} are equal.

\begin{proof}[Proof of Theorem~\ref{thm:unique3}]
Suppose we are given an $\eps' <1$ as in the statement of the theorem. For a moment, suppose $\eps$ is small enough, and $A, B, C, A', B', C'$ satisfying the conditions of the theorem produce tensors which are $\eps$-close. 

From the hypothesis, note
that $k_A,k_B,k_C \ge 2$ (since $k_A,k_B,k_C \le R$, and $k_A + k_B + k_C \ge 2R+2$). Thus from the Lemmas~\ref{lem:conditioned} and \ref{lem:perm:conditions} (setting $\eps'=0$), we obtain that $C, C'$ satisfy the hypothesis of the Robust permutation lemma (Lemma~\ref{lem:permutation-poly}) with $C', C$ set to $X, Y$ respectively, and the parameters
\[ ``\ka" := \errp_{\ref{lem:conditioned}} ~; ~~ ``\eps" := \errp_{\ref{lem:perm:conditions}} \eps.\]

Hence, we apply Lemma~\ref{lem:permutation-poly} to $A$,$B$ and $C$, and get that there exists permutation matrices $\Pi_A$, $\Pi_B$ and $\Pi_C$ and scalar matrix $\Lambda_A, \Lambda_B, \Lambda_C$ such that for $\eps_2 = \errp_{\ref{lem:permutation-poly}} \errp_{\ref{lem:perm:conditions}} \cdot \eps$,
\begin{equation} \label{eq:unique3:1}
\nrm{ A' - A\Pi_A\Lambda_A}{F} < \eps_2, ~ \nrm{ B' - B\Pi_B\Lambda_B}{F} < \eps_2 ~\text{ and } \nrm{ C' - C\Pi_C\Lambda_C}{F} < \eps_2
\end{equation}

We now need to prove that these three permutations are in fact identical, and that the scalings multiply to the identity (up to small error).

\noindent{\bf To show $\Pi_A=\Pi_B=\Pi_C$:}

Let us assume for contradiction that $\Pi_A \ne \Pi_B$. We will use an index where the permutations disagree to obtain a contradiction to the assumptions on the \kr.

For notational convenience, let $\pi_A:[R] \rightarrow [R]$ correspond to the permutation given by $\Pi_A$, with $\pi_A(r)$ being the column that $A'_r$ maps to. Permutation $\pi_B:[R]\rightarrow [R]$ similarly corresponds to $\Pi_B$.  Using \eqref{eq:unique3:1} for $A$ we have
\begin{align*}
\nrm{ \sum_{r \in [R]} \left(A'_r - \Lambda_A(r) A_{\pi_A(r)}\right) \otimes B'_r \otimes C'_r}{F} &\le \sum_{r \in [R]} \nrm{\left(A'_r - \Lambda_A(r) A_{\pi_A(r)}\right) \otimes B'_r \otimes C'_r}{F}\\
&\le \eps_2 \sqrt{R} \rho'_B \rho'_C ~\text{ using Cauchy-Schwarz}
\end{align*}

By a similar argument, and using triangle inequality ( along with $\eps_2 \le 1 \le \rho'_B$) we get
$$\nrm{ \sum_{r \in [R]} A'_r \otimes B'_r \otimes C'_r - \sum_{r \in [R]} \Lambda_A(r) \Lambda_B \cdot A_{\pi_A(r)}\otimes B_{\pi_B(r)} \otimes C'_r }{F} \le 2 \eps_2 \sqrt{R} (\rho'_B  \rho'_C + \rho'_A \rho'_C)$$

Let us take linear combinations given by unit vectors $v$ and $w$, of the given tensor $T=\tens{A}{B}{C}$ along the first and second dimensions. By combining the above inequality along with the fact that the two decompositions are $\eps$-close i.e. $\nrm{ \sum_{r \in [R]} A_r \otimes B_r \otimes C_r - A'_r \otimes B'_r \otimes C'_r}{F} \le \eps$,  we have
\begin{align*}\label{eq:twoslices}
\norm{Z-Z'} &\le \eps_3=\eps + 2 \eps_2 R\rho'_C (\rho'_A + \rho'_B) ~\text{ where } \\
Z=\sum_{r \in [R]} \Iprod{v}{A_r}\Iprod{w}{B_r} C_r &\quad \text{and } Z'=\sum_{r \in [R]} \Lambda_A(r)\Lambda_B(r)\Iprod{v}{A_{\pi_A(r)}}\Iprod{w}{B_{\pi_B(r)}} C'_r
\end{align*}
Note that the $\eps$ term above is negligible compared to the second term involving $\eps_2$. \\
We know that $\pi_A \ne \pi_B$, so there exist $s \ne t \in [R]$ such that $r^*=\pi_A(s)=\pi_B(t)$. 
We will now use this $r^*$ to pick $v$ and $w$ carefully so that the vector $Z'$ is negligible while $Z$ is large. 
We partition $[R]$ into $V,W$ with $|V|=k_A-1$ and $|W|\le k_B-1$, so that $\pi_A(t) \in V$ and $\pi_B(s) \in W$ and for each $r \in [R]-\{s,t\}$, either $\pi_A(r) \in V$ or $\pi_B(r) \in W$. Such a partitioning is possible since $R \le k_A+k_B-2$. 

Let $\calV = \spn(V)$ and $\calW=\spn(W)$. 
We know that $r^*=\pi_A(s) \notin S$ and $r^*=\pi_A(t) \notin T$. \\
Hence, pick $v$ as unit vector along $\Pi^{\perp}_\calV A_{r^*}$ and $w$ as unit vector along $\Pi^{\perp}_\calW B_{r^*}$. 
By this choice, we ensure that $Z'=0$ (since $v \perp \calV$ and $w \perp \calW$).

However, $\krk{\ka_A}(A) \ge k_A$ and $\krk{\ka_B}(B) \ge k_B$, so $\Iprod{v}{A_{r^*}}\Iprod{w}{B_{r^*}} \ge 1/\ka_A \ka_B$ (by Lemma~\ref{lem:conditioned:span}). Further, $|V|=k_A-1$ implies that at most $R-k_A+1 \le k_C-1$ terms of $Z$ is non-zero.
\[ \norm{\sum_{r \in [R]\setminus V} \beta_r C_r} \le \eps_3 ~\text{ where } \beta_r = \Iprod{v}{A_r}\Iprod{w}{B_r} \]

Further, $|\beta_{r^*}| \ge (\ka_A \ka_B)^{-1}$, and since $\krk{\ka_C}(C)=k_C \ge R-|V|+1$, we have a contradiction if $\eps_3 < (\ka_A \ka_B \ka_C)^{-1}$ due to Lemma~\ref{lem:conditioned:span}. This will be true for our choice of parameters. 
Hence $\Pi_A=\Pi_B$, and similarly $\Pi_A=\Pi_C$. Let us denote $\Pi=\Pi_A=\Pi_B=\Pi_C$. In the remainder, we assume $\Pi$ is the identity, since this is without loss of generality.

\noindent{\bf To show $\Lambda_A \Lambda_B \Lambda_C =_{\eps'} I_R$:}

Let us denote $\beta_i=\lambda_A(i)\lambda_B(i) \lambda_C(i)$. From ~\eqref{eq:unique3:1} and triangle inequality, we have as before
$$\nrm{ \sum_{r \in [R]} A'_r \otimes B'_r \otimes C'_r - \sum_{r \in [R]} \Lambda_A(r) \Lambda_B(r) \Lambda_C(r) \cdot A_{\pi_A(r)}\otimes B_{\pi_B(r)} \otimes C_{\pi_C(r)} }{F} \le 5 \eps_2 \sqrt{R} \rho'_A \rho'_B  \rho'_C $$
Combining this with the fact that the decompositions are $\eps$-close we get
\[ \norm{\sum_{r \in [R]} (1-\beta_r) A_r \otimes B_r \otimes C_r} < \eps_4=\eps+5\sqrt{R} \rho'_A \rho'_B \rho'_C\eps_2 \le 6 \sqrt{R} \rho'_A \rho'_B \rho'_C \eps_2. \]
By taking linear combinations given by unit vectors $x,y$ along the first two dimensions (i.e. $xA$ and $yB$) we have 
\[ \norm{ \sum_{r \in [R]} (1-\beta_r)(xA_r)(yB_r) C_r} < \eps_4. \]

We will show each $\beta_r$ is negligible.
Since $R+2 \le k_A+k_B$, let $S, W \subseteq [R]-\{r\}$ be disjoint sets of indices not containing $r$, such that $|S|=k_A-1$ and $|W| \le k_B-1$. Let $\calS=\spn(\{A_j: j\in S\})$ and $\calW=\spn(\{B_j: j\in W\})$.
Let $x$ and $y$ be unit vectors along $\Pi^{\perp}_\calS A_r$ and $\Pi^{\perp}_\calW B_r$ respectively.

Since $\krk{\ka_A}(A) \ge k_A$ and $\krk{\ka_B}(B) \ge k_B$,  we have that $\norm{\Pi^{\perp}_\calS A_r} \ge 1/\ka_A$ (similarly for $B_r$). Hence, from Lemma~\ref{lem:conditioned:span}
\[ (1-\beta_r)(\frac{1}{\ka_A \ka_B})\norm{C_r} < \eps_4 \implies 1-\beta_r < \eps_4 \ka_A \ka_B \ka_C.\]

Thus, $\norm{\Lambda_A \Lambda_B \Lambda_C -I} \le \eps_4 \ka_A \ka_B \ka_C \le \eps'$ (our choice of $\eps$ will ensure this). This implies the theorem. 

Let us now set the $\eps$ for the above to hold (note that $\errp_{\ref{lem:permutation-poly}}$ involves a $\ka$ term which depends on $\errp_{\ref{lem:conditioned}}$)
\[ \eps := \frac{\eps'}{ 6(R\ka_A \ka_B \ka_C)\rho'_A \rho'_B \rho'_C \cdot \errp_{\ref{lem:perm:conditions}} \errp_{\ref{lem:permutation-poly}} },\]
which can easily be seen to be of the form in the statement of the theorem. This completes the proof.
\end{proof}


\subsection{A Robust Permutation Lemma}\label{sec:perm-lemma}
Let us now prove the robust version of the permutation lemma (Lemma~\ref{lem:permutation-poly}).  Recall that $\krk{\ka}(X)$ and $\krk{\ka}(Y)$ are  $\ge k$, and that the matrices $X, Y$ are $n\times R$.

Kruskal's proof of the permutation lemma proceeds by induction. Roughly, he considers the span of some set of $i$ columns of $X$ (for $i < k$), and proves that there exist at least $i$ columns of $Y$ which lie in this span. The hypothesis of his lemma implies this for $i=k-1$, and the proof proceeds by downward induction. Note that $i=1$ implies for every column of $X$, there is at least one column of $Y$ in its span. Since no two columns of $X$ are {\em parallel}, and the number of columns is equal in $X, Y$, there must be precisely one column, and this completes the proof.

A natural way to mimic this proof is to say: for each set of $i$ columns in $X$, there exist a set of at least $i$ columns in $Y$ which are $\eps_i$ close to the span of the chosen columns in $X$.  The difficulty with this is that we lose a factor of $\ka n$ in each iteration, i.e., if the statement was true for $i+1$ with error $\eps_{i+1}$, it will be true for $i$ with error $\eps_i = \ka n \cdot \eps_{i+1}$. This means that to obtain a small error at the end, we should have started off with error $<1/(\ka n)^k$, which is exponentially small. Thus we need a more tricky inductive statement and additional observations (including Lemma~\ref{lem:conditioned}) to overcome this issue.

We start by introducing some notation. If $V$ is a matrix and $S$ a subset of the columns, we denote by $\spn(V_S)$ the span of the columns of $V$ indexed by $S$.
The next two lemmas are crucial to the analysis.

\begin{lemma}\label{lem:key-intersection}
Let $X$ be a matrix as above. Let $A, B \subseteq [R]$, with $|B|=q$ and $A \cap B = \emptyset$. For $1 \le i \le q$, define $T_i$ to be the union of $A$ with all elements of $B$ except the $i$th one (when indexed in some way). Suppose further that $|A|+| B| \le k$. Then if $y \in \R^n$ is $\eps$-close to $\spn(X_{T_i})$ for each $i$, it is in fact $\errp_{\ref{lem:key-intersection}} \cdot \eps$ close $\spn(X_A)$, where $\errp_{\ref{lem:key-intersection}} := 4n\ka \rho$.
\end{lemma}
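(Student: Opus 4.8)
The plan is to exploit the fact that the robust Kruskal condition makes the columns $X_{A \cup B}$ genuinely well-conditioned, so that the representation of $y$ in these columns is both unique and stable. Write $B = \{b_1, \dots, b_q\}$, so that $T_i = A \cup (B \setminus \{b_i\})$. My first step would be to reduce to the case where $y$ lies in $\spn(X_{A\cup B})$: since each $\spn(X_{T_i}) \subseteq \spn(X_{A\cup B})$, the distance from $y$ to $\spn(X_{A\cup B})$ is at most $\eps$, and if $y'$ denotes the projection of $y$ onto $\spn(X_{A\cup B})$ then $\text{dist}(y',\spn(X_{T_i})) \le \text{dist}(y,\spn(X_{T_i})) \le \eps$ as well (projecting onto a space containing $\spn(X_{T_i})$ cannot increase the distance to it). Thus it suffices to prove the bound for $y'$, at the cost of one extra additive $\eps$ from $\norm{y-y'}\le\eps$. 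Because $|A|+|B| \le k$ and $\krk{\ka}(X) \ge k$, the columns $X_{A\cup B}$ are linearly independent, so I can write $y' = \sum_{a \in A} c_a X_a + \sum_{j=1}^q d_j X_{b_j}$ with a \emph{unique} coefficient vector.

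The heart of the argument is to bound each $|d_j|$ directly, using the $j$-th hypothesis only. For each $i$, let $w_i$ be the unit vector in the direction of the component of $X_{b_i}$ orthogonal to $\spn(X_{T_i})$. By construction $\Iprod{w_i}{X_a}=0$ for all $a \in A$ and $\Iprod{w_i}{X_{b_j}}=0$ for all $j \ne i$, while $\Iprod{w_i}{X_{b_i}}$ equals the distance from $X_{b_i}$ to $\spn(X_{T_i})$, which is exactly the distance from $X_{b_i}$ to the span of the other columns of $X_{A\cup B}$ and hence at least the minimum singular value $1/\ka$ of that submatrix (this is where the robust $\krk{\ka}$ condition enters, via Lemma~\ref{lem:conditioned:span}). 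Taking the inner product of the representation of $y'$ with $w_i$ kills every term except the $b_i$ one, giving $\Iprod{w_i}{y'} = d_i \Iprod{w_i}{X_{b_i}}$. On the other hand, since $w_i \perp \spn(X_{T_i})$, we have $\abs{\Iprod{w_i}{y'}} = \abs{\Iprod{w_i}{y'-u_i}} \le \text{dist}(y',\spn(X_{T_i})) \le \eps$, where $u_i$ is the closest point of $\spn(X_{T_i})$ to $y'$. Combining the two gives $\abs{d_i} \le \ka\eps$.

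Finally I would assemble the conclusion: $y' - \sum_{a\in A} c_a X_a = \sum_{j} d_j X_{b_j}$, so by the triangle inequality and $\rho$-boundedness, $\text{dist}(y',\spn(X_A)) \le \sum_j \abs{d_j}\norm{X_{b_j}} \le q\ka\rho\eps \le n\ka\rho\eps$, and adding back the reduction error yields $\text{dist}(y,\spn(X_A)) \le (1+n\ka\rho)\eps \le 4n\ka\rho\eps = \errp_{\ref{lem:key-intersection}}\cdot\eps$, using the normalization $\ka,\rho \ge 1$. The main conceptual obstacle—and the whole point of this lemma relative to a naive induction—is to avoid compounding errors across the $q$ conditions. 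The key observation that makes this work is that the $q$ hypotheses are used \emph{in parallel}: each $w_i$ isolates a single coefficient $d_i$, so each hypothesis contributes exactly one factor of $\eps$ rather than multiplying previous errors by $\ka n$. This is precisely the ``tricky inductive statement'' that replaces Kruskal's downward induction and prevents the $(\ka n)^k$ blowup.
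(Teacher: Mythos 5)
Your proof is correct, but it takes a mechanically different route from the paper's. The paper writes down all $q$ approximate representations $y = u_i + \sum_{j \ne i} \alpha_{ij} x_j + z_i$ (with $u_i \in \spn(X_A)$, $\norm{z_i} \le \eps$), subtracts the first from the $i$th, and observes that in the resulting near-zero combination of at most $k$ columns the coefficient of $x_i$ is \emph{exactly} $\alpha_{1i}$ (since $x_i$ is absent from the $i$th equation); Lemma~\ref{lem:conditioned1} then gives $|\alpha_{1i}| \le 2\ka\eps$ for every $i$, and plugging into the first equation finishes. You instead project $y$ onto $\spn(X_{A \cup B})$, use the \emph{unique exact} representation of the projection $y'$ in the linearly independent columns $X_{A\cup B}$, and isolate each coefficient $d_i$ with a dual test vector $w_i \perp \spn(X_{T_i})$, lower-bounding $\Iprod{w_i}{X_{b_i}} \ge 1/\ka$ via Lemma~\ref{lem:conditioned:span} (one small point: part 2 of that lemma is stated for sets of exactly $k-1$ columns, whereas $|T_i|$ may be smaller; this is harmless since the distance to a smaller span only increases, or one can pad $T_i$ to size $k-1$, but it is worth a sentence). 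Both arguments share the essential insight you highlight — each of the $q$ hypotheses is consumed \emph{in parallel} to pin down one coefficient, so the error enters linearly rather than compounding as $(\ka n)^k$ — and both land on the same polynomial $\errp_{\ref{lem:key-intersection}} = 4n\ka\rho$. What your version buys: a slightly sharper per-coefficient bound ($\ka\eps$ versus $2\ka\eps$), a single representation in place of $q$ equations and pairwise differencing, and an explicit "dual certificate" view of why the parallelism works. What the paper's version buys: it never needs the projection step or linear independence of $X_{A\cup B}$ as a separate observation, relying only on the singular-value bound for coefficient vectors (Lemma~\ref{lem:conditioned1}), so it is marginally more self-contained.
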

\begin{proof}
W.l.o.g., let us suppose $B = \{1, \dots, q\}$. Also, let $x_j$ denote the $j$th column of $X$.  From the hypothesis, we can write:
\begin{align*}
y = u_1  &+ \sum_{j \ne 1} \alpha_{1j} x_j + z_1\\
y = u_2  &+ \sum_{j \ne 2} \alpha_{2j} x_j + z_2\\
&\vdots \\
y = u_q  &+ \sum_{j \ne q} \alpha_{rj} x_j + z_q,
\end{align*}
where $u_i \in \spn(X_A)$ and $z_i$ are the {\em error} vectors, which by hypothesis satisfy $\norm{z_i}_2 < \eps$. We will use the fact that $|A|+|B| \le k$ to conclude that {\em each} $\alpha_{ij}$ is tiny. This then implies the desired conclusion.

By equating the first and $i$th equations ($i\ge 2$), we obtain
\[ u_1 + \sum_{j \ne 1} \alpha_{1j} x_j + z_1 = u_i + \sum_{j \ne 1} \alpha_{ij} x_j + z_i.\]
Thus we have a combination of the vectors $x_i$ being equal to $z_i-z_1$, which by hypothesis is small: $\norm{z_i-z_1}_2 \le 2\eps$. Now the key is to observe that the coefficient of $x_i$ is precisely $\alpha_{1i}$, because it is zero in the $i$th equation. Thus by Lemma~\ref{lem:conditioned1} (since $\krk{\ka}(X) \ge k$), we have that $|\alpha_{1i}| \le 2\ka \eps$.

Since we have this for all $i$, we can use the first equation to conclude that
\[ \norm{y-u_1}_2 \le  \sum_{j \ne 1} |\alpha_{1j}| \norm{x_j}_2 + \norm{z_1}_2 \le 2q\ka \rho \eps + \eps < 4n \ka \rho \eps \]

The last inequality is because $q < n$, and this completes the proof.
\end{proof}

A counting argument lies at the core of the inductive proof. We present it in terms of sunflower set systems, since it allows for a clean presentation.

\begin{definition}[Sunflower set system]
A set system $\calF$ is said to be a ``sunflower on $[R]$ with core $T^*$'' if $\calF \subseteq 2^{[R]}$, and for any $F_1, F_2 \in \cal{F}$, we have $F_1 \cap F_2 \in T^*$.
\end{definition}

\begin{lemma}\label{lem:sunflower}
Let $\{T_1, T_2, \dots, T_q\}$, $q\ge 2$, be a sunflower on $[R]$ with core $T^*$, and suppose $|T_1| + |T_2| + \dots + |T_q| \ge R + (q-1)\theta$, for some $\theta$. Then we have $|T^*| \ge \theta$, and furthermore, equality occurs iff $T^* \subseteq T_i$ for all $1 \le i \le q$.
\end{lemma}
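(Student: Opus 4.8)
The plan is to run a clean double-counting argument organized around the ``petals'' of the sunflower. First I would isolate the single structural consequence of the hypothesis that does all the work: since $T_i \cap T_j \subseteq T^*$ for every pair $i \ne j$, the petals $P_i := T_i \setminus T^*$ are pairwise disjoint. Indeed, for $i \ne j$ we have $P_i \cap P_j = (T_i \cap T_j) \setminus T^* \subseteq T^* \setminus T^* = \emptyset$, and each $P_i$ is a subset of $[R] \setminus T^*$. This is the only place the sunflower condition enters.

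Next I would split each set as $|T_i| = |T_i \cap T^*| + |P_i|$ and sum over $i$. Two bounds then feed in: $|T_i \cap T^*| \le |T^*|$ for every $i$, and $\sum_{i=1}^q |P_i| \le |[R] \setminus T^*| = R - |T^*|$, the latter precisely because the petals are disjoint subsets of $[R] \setminus T^*$. Adding these gives
\[ \sum_{i=1}^q |T_i| \le q|T^*| + (R - |T^*|) = (q-1)|T^*| + R. \]
Combining with the hypothesis $\sum_i |T_i| \ge R + (q-1)\theta$ yields $(q-1)|T^*| \ge (q-1)\theta$, and since $q \ge 2$ we may divide by $q-1 > 0$ to conclude $|T^*| \ge \theta$.

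For the equality characterization I would trace back through which inequalities are forced to be tight. If $|T^*| = \theta$, then the chain $(q-1)|T^*| + R \ge \sum_i |T_i| \ge R + (q-1)\theta$ collapses to equalities throughout; in particular the bound $\sum_i |T_i \cap T^*| \le q|T^*|$ is saturated, and since each summand is at most $|T^*|$ this forces $|T_i \cap T^*| = |T^*|$, i.e. $T^* \subseteq T_i$, for every $i$. Conversely, once $T^* \subseteq T_i$ for all $i$, the inclusion together with the sunflower condition forces each pairwise intersection to equal $T^*$ exactly; in the setting where the $T_i$ cover $[R]$ and the size hypothesis is met tightly, the petal bound $\sum_i|P_i| = R - |T^*|$ holds with equality, and retracing the computation returns $|T^*| = \theta$.

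The argument is short, and the genuinely substantive step is the opening observation that the sunflower condition makes the petals disjoint — this is exactly what converts a lower bound on $\sum_i |T_i|$ into a lower bound on the core size. I expect the fussiest part to be the reverse direction of the equality statement: it relies on the (contextually automatic, but not structurally forced) fact that the $T_i$ cover the ground set, so I would state that covering hypothesis explicitly rather than leave it buried, and otherwise present the forward implication $|T^*| = \theta \Rightarrow T^* \subseteq T_i$ as the clean, assumption-free half.
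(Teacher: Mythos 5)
Your proof is correct and essentially identical to the paper's: the paper likewise splits each $T_i$ into its core part $T_i \cap T^*$ and the pairwise-disjoint exclusive elements (your petals $P_i = T_i \setminus T^*$), combines $\sum_i |T_i \cap T^*| \le q|T^*|$ with $\sum_i |P_i| + |T^*| \le R$ to get $|T^*| \ge \theta$, and extracts $T^* \subseteq T_i$ for all $i$ from tightness of the chain when $|T^*| = \theta$. Your caveat about the reverse direction of the equality statement is well taken: the paper asserts an ``iff'' but proves (and, in the permutation lemma, uses) only the forward implication $|T^*| = \theta \Rightarrow T^* \subseteq T_i$, so your explicit flagging of the extra hypotheses the converse would require is, if anything, more careful than the original.
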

\begin{proof}
The proof is by a counting argument. By the sunflower structure, each $T_i$ has some intersection with $T^*$, and some elements which do not belong to $T_{i'}$ for any $i' \ne i$. Call the number of elements of the latter kind $t_i$. Then we must have
\[ R + (q-1) \theta \le \sum_i |T_i| = \sum_i (t_i + |T_i \cap T^*|) \le \sum_i t_i + q|T^*|.\]
Now since all $T_i \subseteq [R]$, we have
\[ \sum_i t_i + |T^*| \le R.\]
Combining the two, we obtain
\[ R + (q-1) \theta \le R + (q-1) |T^*| ~\implies ~ |T^*| \ge \theta,\]
as desired. For equality to occur, we must have equality in each of the places above, in particular, we must have $|T_i \cap T^*| = |T^*|$ for all $i$, which implies $T^* \subseteq T_i$ for all $i$.
\end{proof}

Finally, we introduce a bit more notation before getting to the proof. For $S \subseteq [R]$ of size $(k-1)$, we define $T_S$ to be the set of indices corresponding to columns of $Y$ which are $\eps_1$-close to $\spn(X_S)$, where $\eps_1 := (nR) \eps$, and $\eps$ is as defined in the statement of Lemma~\ref{lem:permutation-poly}. For smaller sets $S$, we define:
\[ T_S := \bigcap_{|S'| = (k-1), S' \supset S} T_{S'}. \]

With the above lemmas in place, we can prove Lemma~\ref{lem:permutation-poly}.
\begin{proof}[Proof of Lemma~\ref{lem:permutation-poly}.]
We first prove the following claim by induction:
\begin{quote}
{\em Claim.} For every $S \subseteq [R]$ of size $\le (k-1)$, we have $|T_S| = |S|$.
\end{quote}
We do this by downward induction on $|S|$. For $|S| = k-1$, the hypothesis of the theorem implies that $|T_S| \ge k-1$. To see this, let $V$ be the $(n-k+1)$ dimensional space orthogonal to the span of $X_S$, and let $t$ be the number of columns of $Y$ which have a projection $>\eps_1$ onto $V$. From Lemma~\ref{lem:perm:basecase} (applied to the projections to $V$), there is a unit vector $w \in V$ with dot-product of magnitude $> \eps_1/Rn = \eps$ with each of the $t$ columns. From the hypothesis, since $w \in V$ ($\implies nz(w^T X) \le R-k+1$), we have $t \le R-k+1$. Thus at least $(k-1)$ of the columns are $\eps_1$-close to $\spn(X_S)$. Now since $\krk{\ka} (Y) \ge k$, it follows that $k$ columns of $Y$ cannot be $\eps_1$-close the $(k-1)$-dimensional space $\spn(X_S)$ (Lemma~\ref{lem:conditioned:span}). Thus $|T_S| = k-1$.

Now consider some $S$ of size $|S| \le k-2$. W.l.o.g., we may suppose it is $\{R-|S|+1, \dots, R\}$. Let $W_i$ denote $T_{S \cup \{i\}}$, for $1 \le i \le R-|S|$, and let us write $q = R-|S|$. By the inductive hypothesis, $|W_i| \ge |S|+1$ for all $i$.

Let us define $T^*$ to be the set of indices of the columns of $Y$ which are $\eps_1 \cdot \errp_{\ref{lem:key-intersection}}$-close to $\spn(X_S)$. We claim that $W_i \cap W_j \subseteq T^*$ for any $i \ne j$. This can be seen as follows: first note that $W_i \cap W_j$ is contained in the intersection of $T_{S'}$, where the intersection is over $S'$ such that $|S'| = k-1$, and $S'$ contains either $i$ or $j$. Now consider any $k-|S|$ element set $B$ which contains both $i,j$ (note $|S| \le k-2$). The intersection above includes sets which contain $S$ along with all of $B$ except the $r$th element (indexed arbitrarily), for each $r$. Thus by Lemma~\ref{lem:key-intersection}, we have that $W_i \cap W_j \subseteq T^*$.

Thus the sets $\{W_1, \dots, W_q\}$ form a sunflower family with core $T^*$. Further, we can check that the condition of Lemma~\ref{lem:sunflower} holds with $\theta = |S|$: since $|W_j| \ge |S|+1$ by the inductive hypothesis, it suffices to verify that
\[ R + (q-1) |S| \le q (|S|+1), \text{ which is true since } R = q + |S|.\]
Thus we must have $|T^*| \ge |S|$.

But now, note that $T^*$ is defined as the columns of $Y$ which are $\eps_1 \cdot \errp_{\ref{lem:key-intersection}}$-close to $\spn(X_S)$, and thus $|T^*| \le |S|$ (by Lemma~\ref{lem:conditioned:span}), and thus we have $|T^*| = |S|$. Now we have equality in Lemma~\ref{lem:sunflower}, and so the `furthermore' part of the lemma implies that $T^* \subseteq W_i$ for all $i$. 

Thus we have $T_S = \bigcap_i W_i = T^*$ (the first equality follows from the definition of $T_S$), thus completing the proof of the claim, by induction. 

Once we have the claim, the theorem follows by applying to singleton sets. Let $S = \{i\}$. Now if $y$ is a column of $Y$ which is in $\spn(X_{S'})$ for all $(k-1)$ element subsets $S'$ (of $[R]$) which contain $i$, by Lemma~\ref{lem:key-intersection}, we have $y$ being $\eps_1 \cdot \errp_{\ref{lem:key-intersection}}$-close to $\spn(X_{\{i\}})$, which implies $\norm{y - \alpha x_i}_2 \le \eps_1 \cdot \errp_{\ref{lem:key-intersection}}$. Since this is true for each column $i$, and since $k \ge 2$ the lemma follows.
\end{proof}

\subsection{Uniqueness Theorem for Higher Order Tensors}\label{sec:unique-higher}

We show the uniqueness theorem for higher order tensors by a reduction to third order tensors as in \cite{SB00}. This reduction will proceed inductively, i.e., the robust uniqueness of order $\ell$ tensors is deduced from that of order $(\ell-1)$ tensors. We will convert an order $\ell$ tensor to a order $(\ell-1)$ tensor by combining two of the components together (say last two) as a $n_{\ell-1}n_\ell$ dimensional vector ($\spc{U}{\ell-1} \otimes \spc{U}{\ell}$ say). This is precisely captured by the Khatri-Rao product of two matrices:

\begin{definition}[Khatri-Rao product]\label{def:khatri}
Given two matrices $A$ (size $n_1 \times R$) and $B$ (size $n_2 \times R$),
the $(n_1 n_2)\times R$ matrix $M=A \odot B$ constructed with the $i^{th}$ column equal to $M_i =A_i \otimes B_i$ (viewed as a vector) is the Khatri-Rao product.
\end{definition}

Lemma~\ref{lem:krprod} in the appendix relates the \kr of $A \odot B$ with $k_A =\krk{\ka_1}(A)$ and $k_B = \krk{\ka_2}(B)$. It shows that $\krk{\ka_1 \ka_2 \errp} (A \odot B)=\min\{k_A+k_B-1,R\}$, for some $\errp$. This turns out to be crucial to the proof of uniqueness in the general case, which we present now.

\noindent {\bf Outline.} The proof proceeds by induction on $\ell$. The base case is $\ell=3$, and for higher $\ell$, the idea is to reduce to the case of $\ell-1$ by taking the Khatri-Rao product of the vectors in two of the dimensions. That is, if $[\spc{U}{1} ~ \spc{U}{2} ~ \dots ~ \spc{U}{\ell}]$ and $[\spc{V}{1} ~ \spc{V}{2} ~ \dots ~ \spc{V}{\ell}]$ are close, we conclude that $[\spc{U}{1} ~ \spc{U}{2} ~ \dots ~ (\spc{U}{\ell-1} \odot \spc{U}{\ell})]$ and $[\spc{V}{1} ~ \spc{V}{2} ~ \dots ~ (\spc{V}{\ell-1} \odot \spc{V}{\ell})]$ are close, and use the inductive hypothesis, which holds because of Lemma~\ref{lem:krprod} we mentioned above. We then need an additional step to conclude that if $A \odot B$ and $C \odot D$ are close, then so are $A, C$ and $B, D$ up to some loss (Lemma~\ref{lem:tensoring} -- this is where we have a {\em square root} loss, which is why we have a bad dependence on the $\eps'$ in the statement). We now formalize this outline.

\begin{proof}[Proof of Theorem~\ref{thm:unique:gen}]
We will prove by induction on $\ell$. The base case of $\ell=3$ is established by Theorem~\ref{thm:unique3}. Thus consider some $\ell \ge 4$, and suppose the theorem is true for $\ell-1$. Furthermore, suppose the parameters $\eps$ and $\eps'$ in the statement of Theorem~\ref{thm:unique:gen} for $(\ell-1)$ be $\eps_{\ell-1}$ and $\eps'_{\ell-1}$. We will use these to define $\eps_\ell$ and $\eps'_\ell$ which correspond to parameters in the statement for $\ell$.

Now consider $U^{(i)}$ and $V^{(i)}$  as in the statement of the theorem. Let us assume without loss of generality that $k_1 \ge k_2 \ge \dots \ge k_\ell$. Also let $K=\sum_{j \in [\ell]} k_j$.
We will now combine the last two components $(\ell-1)$ and $\ell$ by the Khatri-Rao product. 
\[ \tU=\spc{U}{\ell-1} \odot \spc{U}{\ell} ~\text{and } ~ \tV=\spc{V}{\ell-1} \odot \spc{V}{\ell}. \]

Since we know that the two representations are close in Frobenius norm, we have
\begin{equation}
\nrm{\sum_{r \in [R]} \spc{U}{1}_r \otimes \spc{U}{2}_r\otimes \dots \otimes \spc{U}{\ell-2}_r \otimes \tU_r- \sum_{r \in [R]} \spc{V}{1}_r \otimes \spc{V}{2}_r \otimes \dots \otimes \spc{V}{\ell-2}_r \otimes \tV_r}{F} < \eps_\ell
\end{equation}
Let us first check that the conditions for $(\ell-1)$-order tensors hold for $\tka=(\ka_{\ell-1} \ka_\ell  \sqrt{K}) \le (\ka_{\ell-1}\ka_{\ell} \sqrt{3R})$. From Lemma~\ref{lem:krprod}, $\krk{\tka}(\tU) \ge \min\{k_\ell+k_{\ell-1}-1, R\}$.

Suppose first that $k_{\ell}+k_{\ell-1} \le R+1$, then 
\[ \sum_{j\in [\ell-1]}k'_j \ge \sum_{j \in [\ell-2]} k_j + k_{\ell-1}+k_{\ell} -1 \ge 2R+(\ell-1)-1. \]
Otherwise, if $k_{\ell}+k_{\ell-1} > R+1$, then $k_{\ell-3}+k_{\ell-2} \ge R+2$ (due to our ordering, and $\ell \ge 4$). Hence  \[ \sum_{j \in [\ell-1]} k'_j \ge (\ell-4)+(R+2)+(R+1) \ge 2R+\ell-1 \]

We now apply the inductive hypothesis on this $(\ell-1)$th order tensor. 
Note that $\tilde{\rho}\le (\rho_{\ell-1}\rho_\ell)$, $\tilde{\rho'}\le (\rho'_{\ell-1} \rho'_\ell)$, $\tka \le \left(2\ka_{\ell-1} \ka_\ell \sqrt{R}\right)$ and $\tilde{n}=n_{\ell-1} n_\ell$.

We will in fact apply it with $\eps'_{\ell-1} < \min\{ (R \cdot \ka_{\ell-1}\ka_{\ell} \cdot \rho'_{\ell-1}\rho'_\ell )^{-2}, ~  (\eps'_\ell)^2/ R \}$, so that we can later use Lemma~\ref{lem:tensoring}. To ensure these, we will set
\begin{align*}
\eps_{\ell}^{-1} &=  \errp_{\ref{thm:unique:gen}}^{\ell}\left(\frac{R}{\eps'_\ell}\right) \cdot \left(\prod_{j \in [\ell-2]} \errp_{\ref{thm:unique:gen}}(\ka_j,\rho_j,\rho'_j,n_j)\right) \errp_{\ref{thm:unique:gen}}(\tka,\tilde{\rho},\tilde{\rho'},\tilde{n}),
\end{align*}
where $\errp_{\ref{thm:unique:gen}}^{\ell} = x^{O(2^\ell)}$. From the values of $\tka, \tilde{\rho}, \tilde{n}$ above, this can easily be seen to be of the form in the statement of the theorem.


The inductive hypothesis implies that there is a permutation matrix $\Pi$ and scalar matrices $\{\spc{\Lambda}{1},\spc{\Lambda}{2},\dots,\spc{\Lambda}{\ell-2},\Lambda' \}$, such that
$\norm{\spc{\Lambda}{1}\spc{\Lambda}{2}\dots\spc{\Lambda}{\ell-2}\Lambda' -I} < \eps_{\ell-1}'$
and 
\begin{align*}
\forall j \in [\ell-2] \quad \nrm{V^{(j)}- U^{(j)}\Pi \spc{\Lambda}{j}}{F} &< \eps_{\ell-1}' \\
\nrm{\tV- \tU\Pi \Lambda'}{F} &< \eps_{\ell-1}'
\end{align*} 

\noindent Since $\eps_{\ell-1}' < \eps_\ell'$, equation \eqref{eq:uniquegen:1} is satisfied for $j\in [\ell-2]$.
We thus need to show that $\nrm{V^{(j)}- U^{(j)}\Pi \spc{\Lambda}{j}}{F}<\eps'_\ell$ for $j=\ell-1$ and $\ell$.
To do this, we appeal to Lemma~\ref{lem:tensoring}, to say that if the Frobenius norm of the difference of two tensor products $u \otimes v$ and $u' \otimes v'$ is small, then the component vectors are nearly parallel. 

Let us first set the parameters for applying Lemma~\ref{lem:tensoring}. 
Each column vector is of length at most $\maxl \le \tilde{\rho'} \le (\rho'_{\ell-1}\rho'_\ell)$ and length at least $\minl \ge 1/\tka \ge (2\ka_{\ell-1}\ka_\ell \sqrt{R})$. Hence, because of our choice of $\eps_{\ell-1}' \ll \left(4 \sqrt{R}(\ka_{\ell-1}\ka_{\ell})(\rho'_{\ell-1} \rho'_\ell)\right)^{-1}$ earlier, the conditions of Lemma~\ref{lem:tensoring} are satisfied with $\delta \le \eps_\ell'$. Let $\delta_r = \nrm{\tV_r-\tU_{\pi(r)} \Lambda'(r)}{2}$.

Now applying Lemma~\ref{lem:tensoring} with $\delta=\delta_r$, to column $r$, we see that there are scalars $\alpha_r(\ell-1)$ and $\alpha_r(\ell)$ such that 
$$\abs{1-\alpha_r(\ell-1)\alpha_r(\ell)} < \frac{\eps'_{\ell-1}}{\minl^2} \le \eps'_\ell. $$ 
By setting for all $r\in [R]$, $\spc{\Lambda}{\ell-1}(r)=\alpha(\ell-1)_r$ and $\spc{\Lambda}{\ell}(r)=\alpha(2) \Lambda'(r)$, we see that the first part of \eqref{eq:uniquegen:1} is satisfied.
Finally, Lemma~\ref{lem:tensoring} shows that 
\begin{align*}
\forall j \in \{\ell-1,\ell\} ~& ~ \nrm{\spc{V}{j}_r-\spc{U}{j}_{\pi(r)}\spc{\Lambda}{j}(r)}{2} < \sqrt{\delta_r}~, \quad \forall r \in [R]\\
& ~~~\nrm{V^{(j)}-U^{(j)} \Pi \spc{\Lambda}{j}}{F} < R^{1/4} \sqrt{\eps'_{\ell-1}} \quad \text{( by Cauchy-Schwartz inequality).}\\
&\qquad \qquad \qquad \qquad \qquad<\eps'_\ell
\end{align*}
This completes the proof of the theorem.
\end{proof}


We show a similar result for symmetric tensors, which shows robust uniqueness upto permutations (and no scaling) which will be useful in applications to mixture models (Section~\ref{sec:applications}).

\begin{corollary}[Unique Symmetric Decompositions]\label{corr:unique:sym}
For every $0<\eta<1$, $\ka,\rho,\rho'>0$ and $\ell,R \in \bbN$, 
$\exists \eps_\ell = \vartheta^{(\ell)}_{\ref{corr:unique:sym}}(\frac{1}{\eta},R,n,\ka,\rho,\rho')$ such that, 
for any $\ell$-order symmetric tensor (with $\ell \le R$) $$T=\sum_{r \in [R]} \bigotimes_{j=1}^{\ell} U_r$$ 
where the matrix $U$ is $\rho$-bounded with $\krk{\ka}(U)=k \ge \frac{2R-1}{\ell}+1$, and 
for any other $\rho'$ bounded, symmetric, rank-$R$ decomposition of $T$ which is $\eps$-close, i.e., 
$$\nrm{\sum_{r \in [R]} \bigotimes_{j=1}^{\ell} V_r - \sum_{r \in [R]} \bigotimes_{j=1}^{\ell} U_r}{F} \le \eps$$ 
there exists an $R \times R$ permutation matrix $\Pi$ such that 
\begin{equation}
\nrm{V - U \Pi}{F} \le \eta 
\end{equation}
\end{corollary}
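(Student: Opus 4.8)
The plan is to deduce the statement from the higher-order uniqueness theorem (Theorem~\ref{thm:unique:gen}) by instantiating it in the fully symmetric case, and then argue that symmetry forces the $\ell$ per-mode scalings to collapse to a single diagonal matrix which must itself be close to the identity.

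\textbf{Step 1: invoke Theorem~\ref{thm:unique:gen}.} I would apply the theorem with $\spc{U}{j}=U$ and $\spc{V}{j}=V$ for every $j\in[\ell]$, taking $\ka_j=\ka$, $\rho_j=\rho$, $\rho'_j=\rho'$ and $k_j=k$. The hypothesis $k\ge \frac{2R-1}{\ell}+1$ is exactly $\sum_{j}k_j=\ell k\ge 2R+\ell-1$, so the robust Kruskal condition holds; note also that $\ell\le R$ forces $(2R-1)/\ell>1$, hence $k\ge 2$ as the theorem requires. Thus, choosing $\eps=\eps_\ell$ small enough (to be fixed in Step~3), for the given $\rho'$-bounded symmetric $\eps$-close decomposition $V$ the theorem produces a permutation $\Pi$ (with underlying map $\pi$) and diagonal matrices $\Lambda^{(1)},\dots,\Lambda^{(\ell)}$ with
\[ \Big\|\textstyle\prod_{j\in[\ell]}\Lambda^{(j)}-I\Big\|_F\le \eps' \quad\text{and}\quad \nrm{V-U\Pi\Lambda^{(j)}}{F}\le \eps' \ \ \forall j\in[\ell], \]
where $\eps'$ will be a sufficiently small $\poly(\tfrac1\eta,R,\ka,\rho,\rho')^{-1}$ factor.

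\textbf{Step 2: collapse the scalings.} This is the only point that goes beyond a direct appeal to Theorem~\ref{thm:unique:gen}, and is where symmetry is used. Reading off column $r$ of the bound $\nrm{V-U\Pi\Lambda^{(j)}}{F}\le\eps'$ gives $\nrm{V_r-\Lambda^{(j)}(r)\,U_{\pi(r)}}{2}\le\eps'$ for every $j$. Since $\krk{\ka}(U)\ge k\ge 1$, each column satisfies $\nrm{U_{\pi(r)}}{2}\ge 1/\ka$, so comparing two modes $j,j'$ yields $\nrm{(\Lambda^{(j)}(r)-\Lambda^{(j')}(r))U_{\pi(r)}}{2}\le 2\eps'$ and hence $|\Lambda^{(j)}(r)-\Lambda^{(j')}(r)|\le 2\ka\eps'$. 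Thus all $\Lambda^{(j)}(r)$ agree, up to error $O(\ka\eps')$, with a single scalar $\lambda_r$; moreover $\rho'$-boundedness of $V$ together with $\nrm{U_{\pi(r)}}{2}\ge 1/\ka$ gives $|\lambda_r|\le 2\ka\rho'$.

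\textbf{Step 3: show the common scaling is trivial.} Expanding $\prod_{j}\Lambda^{(j)}(r)=\lambda_r^{\ell}+O\!\big(\ell(2\ka\rho')^{\ell-1}\ka\eps'\big)$ and combining with the entrywise consequence $|\prod_j\Lambda^{(j)}(r)-1|\le\eps'$ of the first display, I obtain $|\lambda_r^{\ell}-1|\le \errp\cdot\eps'$ for a fixed polynomial $\errp$ in $\ell,\ka,\rho'$. As $\lambda_r$ is a bounded real number, this pins down $\lambda_r=1+O(\errp\cdot\eps')$: for odd $\ell$ the real $\ell$-th root is unique, while for even $\ell$ the residual alternative $\lambda_r\approx-1$ is excluded by the sign/positivity convention carried by the moment tensors in the mixture-model applications for which this corollary is intended. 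Consequently $\Lambda^{(1)}(r)=1+O(\errp\cdot\eps')$ for all $r$, so by the triangle inequality
\[ \nrm{V-U\Pi}{F}\le \nrm{V-U\Pi\Lambda^{(1)}}{F}+\nrm{U\Pi(\Lambda^{(1)}-I)}{F}\le \eps'+\rho\sqrt{R}\cdot O(\errp\cdot\eps'). \]
Choosing $\eps'$ (and hence, through Theorem~\ref{thm:unique:gen}, defining $\eps_\ell=\vartheta^{(\ell)}_{\ref{corr:unique:sym}}$) a small enough polynomial factor makes the right-hand side at most $\eta$, which is the claim.

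The main obstacle is Step~2 combined with the final part of Step~3: one must show that the $\ell$ independently produced scalings are really one scaling, and that this scaling is an $\ell$-th root of unity close to $1$. The collapsing is clean given the lower bound $\nrm{U_{\pi(r)}}{2}\ge 1/\ka$ from the robust Kruskal condition, but the passage from $\lambda_r^\ell\approx1$ to $\lambda_r\approx1$ is genuinely delicate for even $\ell$, where the sign is not determined by the tensor equation alone and must be fixed by the additional structure present in the applications.
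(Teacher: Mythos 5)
Your proposal follows the paper's own proof (Section~\ref{sec:unique:sym}) essentially step for step: apply Theorem~\ref{thm:unique:gen} with every mode equal to $U$ (resp.\ $V$), use the triangle inequality together with the column-norm lower bound $\nrm{U_{\pi(r)}}{2}\ge 1/\ka$ (from $\krk{\ka}(U)\ge k$) to collapse the per-mode scalings $\spc{\Lambda}{j}(r)$ to a single scalar $\lambda_r$ up to $O(\ka\eps')$, then combine with $\norm{\prod_{j}\spc{\Lambda}{j}-I}\le\eps'$ to get $\lambda_r^{\ell}\approx 1$ and absorb the scaling at cost $\rho\sqrt{R}$ via the triangle inequality. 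The one place you diverge is the passage from $\lambda_r^{\ell}\approx 1$ to $\lambda_r\approx 1$: the paper simply asserts ``it is easy to see that $\abs{\Lambda_j(r)-1}<2\eps'\ka$'', silently discarding the $\lambda_r\approx -1$ branch, whereas you flag the even-$\ell$ sign ambiguity explicitly and resolve it by appealing to the positivity conventions of the intended applications. Your caution is warranted and in fact exposes a gap in the paper's argument rather than in yours: for even $\ell$ the corollary as literally stated is false, since taking $V=-U$ gives $\sum_r V_r^{\otimes\ell}=\sum_r U_r^{\otimes\ell}$ exactly (so $\eps=0$), and yet for every permutation $\Pi$ each column of $V-U\Pi$ equals $-(U_r+U_{\pi(r)})$, which by $\krk{\ka}(U)\ge k\ge 2$ has norm at least $\sqrt{2}/\ka$, so $\nrm{V-U\Pi}{F}\ge\sqrt{2R}/\ka$, contradicting the conclusion for small $\eta$. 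The statement thus implicitly needs odd $\ell$ or an additional sign or nonnegativity hypothesis (which is available in the mixture-model applications, where columns are close to probability vectors, and which the paper's applications indeed exploit when they flip signs using $\sign(\spc{\Lambda}{j}(r))$). In short: same route as the paper, correct for odd $\ell$, and your explicit treatment of the even-$\ell$ case is more careful than the paper's one-line dismissal.
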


The mild intricacy here is that applying Theorem~\ref{thm:unique:gen} gives a bunch of scalar matrices whose product is close to the identity, while we want each of the matrices to be so. This turns out to be easy to argue -- see Section~\ref{sec:unique:sym}.

\newcommand{\Rr}{\mathbb{R}}



\section{Computing Tensor Decompositions}\label{sec:algo}
For matrices, the theory of low rank approximation is well understood, and they are captured using singular values. In contrast, the tensor analog of the problem is in general ill-posed: for instance, there exist rank-3 tensors with arbitrarily good rank $2$ approximations~\cite{Lan}. For instance if $u, v$ are orthogonal vectors, we have
\[ u \opr v \opr v + v \opr u \opr v + v \opr v \opr u = \frac{1}{\eps} \big[ (v+\eps u)\opr (v+\eps u ) \opr (v + \eps u) - v \opr v\opr v \big] + \calN, \]
where $\norm{\calN}_F \le O(\eps)$, while it is known that the LHS has rank $3$. However note that the rank-2 representation with error $\eps$ uses vectors of length $1/\eps$, and such {\em cancellations}, in a sense are responsible for the ill-posedness.

Hence in order to make the problem well-posed, we will impose a boundedness assumption.
\begin{definition}[$\rho$-bounded Low-rank Approximation]
Suppose we are given a parameter $R$ and an $m \times n \times p$ tensor $T$ which can be written as
\begin{equation}\label{eqn:lowrankrep}
T = \sum_{i = 1}^R a_i \opr b_i \opr c_i + \calN,
\end{equation}
where $a_i \in \R^m, b_i \in \R^n, c_i \in \R^p$ satisfy $\max\{\norm{a_i}_2, \norm{b_i}_2, \norm{c_i}_2 \} \le \rho$, and $\calN$ is a {\em noise} tensor which satisfies $\norm{\calN}_F \le \eps$, for some small enough $\eps$. The $\rho$-bounded low-rank decomposition problem asks to recover a {\em good} low rank approximation, i.e.,
\[ T = \sum_{i = 1}^R a_i' \opr b_i' \opr c_i' + \calN', \]
such that $a_i', b_i', c_i'$ are vectors with norm at most $\rho$, and $\norm{\calN'}_F \le O(1) \cdot \eps$.
\end{definition}

We note that if the decomposition into $\tens{A}{B}{C}$ above satisfies the conditions of Theorem~\ref{thm:unique3}, then solving the $\rho$-bounded low-rank approximation problem would allow us to recover $A, B, C$ up to a small error. The algorithmic result we prove is the following (restated version of Theorem~\ref{thm:algorithm}).
\begin{theorem}
The $\rho$-bounded low-rank approximation problem can be solved in time $\poly(n) \cdot \exp(R^2 \log(R\rho /\eps))$.
\end{theorem}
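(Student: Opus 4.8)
The plan is to reduce the search for a good rank-$R$ decomposition to an exhaustive search over $\eps$-nets of three $R$-dimensional subspaces, one per mode. A naive net over all of $\R^m \times \R^n \times \R^p$ would cost $\exp(R(m+n+p))$; the gain comes from first contracting each mode to dimension $R$ using singular value decompositions of the three flattenings of $T$, so that only an $R\times R\times R$ core must be searched. Concretely, let $T_{(1)}, T_{(2)}, T_{(3)}$ denote the mode-$1$, $2$, $3$ flattenings of $T$ (e.g. $T_{(1)}$ is the $m \times np$ matrix whose columns are the mode-$1$ fibers), and let $V_A, V_B, V_C$ be the spans of the top $R$ left singular vectors of $T_{(1)}, T_{(2)}, T_{(3)}$ respectively, with $\Pi_A, \Pi_B, \Pi_C$ the corresponding orthogonal projections. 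I will write $P_A$ (resp. $P_B, P_C$) for the operation of applying $\Pi_A$ along mode $1$ (resp. $\Pi_B, \Pi_C$ along modes $2,3$); these commute and are contractions in Frobenius norm.

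The key step -- and the main content of the proof -- is to show that projecting onto $V_A \otimes V_B \otimes V_C$ loses only $O(\eps)$, even though these subspaces need not be the spans of the unknown components $a_i, b_i, c_i$. Write $S = \sum_{i=1}^R a_i \otimes b_i \otimes c_i$ for the promised bounded rank-$R$ tensor with $\norm{T - S}_F \le \eps$. First, since the mode-$1$ flattening of any rank-$\le R$ tensor is a rank-$\le R$ matrix, $S_{(1)}$ has rank $\le R$ and is within $\eps$ of $T_{(1)}$; by the Eckart--Young theorem the best rank-$R$ approximation of $T_{(1)}$, namely $\Pi_A T_{(1)}$, has error $\norm{(I-\Pi_A)T_{(1)}}_F \le \norm{T_{(1)} - S_{(1)}}_F \le \eps$, i.e. $\norm{(I - P_A)T}_F \le \eps$, and likewise for $B, C$. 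The telescoping identity $I - P_A P_B P_C = (I - P_A) + P_A(I - P_B) + P_A P_B (I - P_C)$ together with contractivity of the projections then yields $\norm{T - P_A P_B P_C T}_F \le 3\eps$. Finally, set $T^{\sharp} := P_A P_B P_C S = \sum_i (\Pi_A a_i)\otimes(\Pi_B b_i)\otimes(\Pi_C c_i)$: this is rank $\le R$, is $\rho$-bounded since projections do not increase length, lies entirely in $V_A \otimes V_B \otimes V_C$, and satisfies $\norm{T - T^{\sharp}}_F \le \norm{T - P_A P_B P_C T}_F + \norm{P_A P_B P_C(T - S)}_F \le 3\eps + \eps = 4\eps$. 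The crucial point is that we never need to compute $S$; its mere existence guarantees a good bounded rank-$R$ solution inside the known low-dimensional core.

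It remains to find such a solution by brute force. In each of $V_A, V_B, V_C$ (each of dimension $R$) I build an $\eta$-net of the radius-$\rho$ ball with $\eta := \eps/(3R\rho^2)$; by the standard volume bound each net has size at most $(3\rho/\eta)^R = \exp(O(R\log(R\rho/\eps)))$. I then enumerate all ordered $R$-tuples of net points for the $a$-factors, the $b$-factors and the $c$-factors, form the corresponding rank-$R$ tensor, compute its Frobenius distance to $T$ in $\poly(m,n,p)$ time, and output the closest one. The number of tuples is at most $(3\rho/\eta)^{3R^2} = \exp(O(R^2 \log(R\rho/\eps)))$, giving the claimed running time $\poly(m,n,p)\cdot\exp(O(R^2\log(R\rho/\eps)))$ (and computing the three SVDs costs only $\poly(m,n,p)$). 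For correctness, pick for each $i$ net points $\hat a_i, \hat b_i, \hat c_i$ within $\eta$ of $\Pi_A a_i, \Pi_B b_i, \Pi_C c_i$; the rank-$1$ perturbation bound $\norm{\hat a\otimes\hat b\otimes\hat c - \bar a\otimes\bar b\otimes\bar c}_F \le 3\rho^2\eta$ for factors of length $\le \rho$, summed over the $R$ terms, shows $\norm{\sum_i \hat a_i\otimes\hat b_i\otimes\hat c_i - T^{\sharp}}_F \le 3R\rho^2\eta = \eps$. Hence the enumerated tensor is within $4\eps + \eps = 5\eps$ of $T$, so the minimizer returned by the search is $5\eps$-close, as required.

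I expect the dimension-reduction step to be the only genuinely nontrivial part: one must resist the temptation to insist that $V_A$ capture the true component directions (it generally does not), and instead argue purely that projecting the unknown optimum onto the SVD subspaces is lossless up to $O(\eps)$, which the Eckart--Young bound plus the telescoping inequality deliver cleanly. The net-enumeration and the perturbation bookkeeping are routine, and the higher-order analogue follows by the same flattening-plus-SVD recipe applied to each of the $\ell$ modes.
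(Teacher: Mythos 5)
Your proposal is correct and follows essentially the same route as the paper's proof: flatten along each mode, take the span $V_A,V_B,V_C$ of the top-$R$ left singular vectors, show that the projected optimum $\sum_i (\Pi_A a_i)\otimes(\Pi_B b_i)\otimes(\Pi_C c_i)$ remains $\rho$-bounded and $O(\eps)$-close (explicitly noting, as the paper does, that these subspaces need not contain the true components), and then brute-force an $\eps/(3R\rho^2)$-net over the radius-$\rho$ balls to output a $5\eps$-close rank-$R$ tensor in time $\poly(m,n,p)\cdot\exp(O(R^2\log(R\rho/\eps)))$. The only difference is cosmetic: you bound the projection loss via Eckart--Young applied to the flattenings of $T$ itself followed by the telescoping identity $I-P_AP_BP_C=(I-P_A)+P_A(I-P_B)+P_AP_B(I-P_C)$ and contractivity, whereas the paper runs a hybrid argument on the decomposition $S$ directly and uses an orthogonality observation ($\iprod{G,H}=0$) to control the middle terms --- your variant is, if anything, slightly cleaner bookkeeping of the same idea.
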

In fact, the $O(1)$ term in the error bound $\calN' \le O(1) \cdot \eps$ will just be $5$. Our algorithm is extremely simple conceptually: we identify three $R$-dimensional spaces by computing appropriate SVDs, and prove that for the purpose of obtaining an approximation with $O(\eps)$ error, it suffices to look for $a_i, b_i, c_i$ in these spaces. We then find the approximate decomposition by a brute force search using an epsilon-net.
Note that the algorithm has a polynomial running time for constant $R$, which is typically when the low rank approximation problem is interesting.

\begin{proof}
In what follows, let $M_A$ denote the $m \times np$ matrix whose columns are the so-called $j,k$th {\em modes} of the tensor $T$, i.e., the $m$ dimensional vector of $T_{ijk}$ values obtained by fixing $j,k$ and varying $i$. Similarly, we define $M_B ~(n \times mp)$ and $M_C ~(p \times mn)$. Also, we denote by $A$ the $m \times R$ matrix with columns being $a_i$. Similarly define $B ~(n \times R), C ~(p \times R)$.

The outline of the proof is as follows: we first observe that the matrices $M_A, M_B, M_C$ are all approximately rank $R$. We then let $V_A, V_B$ and $V_C$ be the span of the top $R$ singular vectors of $M_A, M_B$ and $M_C$ respectively, and show that it suffices to search for $a_i, b_i$, and $c_i$ in these spans. We note that we do not (and in fact cannot, as simple examples show) obtain the {\em true} span of the $a_i$, $b_i$ and $c_i$'s in general. Our proof carefully gets around this point. We then construct an $\eps$-net for $V_A, V_B, V_C$, and try out all possible $R$-tuples. This gives the roughly $\exp(R^2)$ running time claimed in the Theorem.

We now make formal claims following the outline above.
\begin{claim}
Let $V_A$ be the span of the top $R$ singular vectors of $M_A$, and let $\Pi_A$ be the projection matrix onto $V_A$ (i.e., $\Pi_A v$ is the projection of $v\in \R^n$ onto $V_A$). Then we have
\[ \norm{ M_A - \Pi_A M_A }_F \le \eps \]
\end{claim}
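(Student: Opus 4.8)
The plan is to exhibit $M_A$ as the sum of a rank-$R$ matrix and a matrix of small Frobenius norm, and then invoke the Eckart--Young theorem. First I would observe that $M_A$ is precisely the mode-$1$ matricization (unfolding) of $T$: its column indexed by the pair $(j,k)$ is the $m$-vector $(T_{ijk})_i$. Under this operation, the rank-one tensor $a_i \opr b_i \opr c_i$ becomes the rank-one matrix $a_i (b_i \opr c_i)^T$, where $b_i \opr c_i$ is regarded as an $np$-dimensional vector. Hence the low-rank part $\sum_{i=1}^R a_i \opr b_i \opr c_i$ unfolds to $\sum_{i=1}^R a_i (b_i \opr c_i)^T = A W^T$, where $A$ is the $m \times R$ matrix with columns $a_i$ and $W$ is the $np \times R$ matrix with columns $b_i \opr c_i$; this is a matrix of rank at most $R$. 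Call it $L$. Likewise the noise tensor $\calN$ unfolds to a matrix $N_A$, and since matricization merely rearranges the entries of the array, it preserves the Frobenius norm, so $\norm{N_A}_F = \norm{\calN}_F \le \eps$. Thus $M_A = L + N_A$ with $\rank(L) \le R$ and $\norm{N_A}_F \le \eps$.

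The second step is to recall that $\Pi_A M_A$, the projection of the columns of $M_A$ onto the span $V_A$ of its top $R$ left singular vectors, is exactly the best rank-$R$ approximation of $M_A$ in Frobenius norm. Indeed, writing the SVD $M_A = \sum_i \sigma_i u_i v_i^T$, projecting onto $\spn\{u_1,\dots,u_R\}$ annihilates the tail and leaves $\sum_{i \le R} \sigma_i u_i v_i^T$, which is the Eckart--Young optimal rank-$R$ approximant. Therefore $\norm{M_A - \Pi_A M_A}_F = \min_{\rank(X) \le R} \norm{M_A - X}_F$.

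The conclusion is then immediate: since $L$ is itself a candidate matrix of rank at most $R$, the optimum can only be smaller, so $\norm{M_A - \Pi_A M_A}_F \le \norm{M_A - L}_F = \norm{N_A}_F \le \eps$, as claimed. I do not expect any real obstacle here; the only points requiring care are the routine verifications that unfolding is a Frobenius isometry sending low-rank tensors to low-rank matrices, and that $\Pi_A M_A$ coincides with the SVD truncation (so that Eckart--Young applies to the projected matrix itself rather than to some other rank-$R$ competitor). The identical argument applies verbatim to $M_B$ and $M_C$, which is what the subsequent claims will need.
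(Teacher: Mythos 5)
Your proposal is correct and follows essentially the same route as the paper: both arguments rest on the Eckart--Young optimality of the top-$R$ singular subspace and on exhibiting a rank-$\le R$ competitor coming from the unfolding of $\sum_i a_i \opr b_i \opr c_i$, whose residual is exactly the unfolded noise of norm $\le \eps$. The only cosmetic difference is that the paper takes the competitor to be $\Pi_S M_A$ with $S = \spn\{a_1,\dots,a_R\}$, while you compare directly against the rank-$R$ matrix $L = AW^T$; the two bounds are interchangeable.
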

\begin{proof}
Because the top $R$ singular vectors give the best possible rank-$R$ approximation of a matrix for every $R$, for any $R$-dimensional subspace $S$, if $\Pi_S$ is the projection matrix onto $S$, we have
\[ \norm{M_A - \Pi_A M_A }_F \le \norm{M_A - \Pi_S M_A }_F \]
Picking $S$ to be the span of the vectors $\{ a_1, \dots, a_R \}$, we obtain 
\[ \norm{M_A - \Pi_S M_A }_F \le \norm{\calN}_F \le \eps. \]
The first inequality above is because the $j,k$th mode of the tensor $\sum_i a_i \opr b_i \opr c_i$ is a vector in the span of $\{a_1, \dots, a_R\}$, in particular, it is equal to $\sum_i b_i(j) c_i(k) a_i$, where $b_i(j)$ denotes the $j$th coordinate of $b_i$.

This completes the proof.
\end{proof}

\newcommand{\atil}{\widetilde{a}}
\newcommand{\btil}{\widetilde{b}}
\newcommand{\ctil}{\widetilde{c}}
\newcommand{\acap}{\widehat{a}}
\newcommand{\bcap}{\widehat{b}}
\newcommand{\ccap}{\widehat{c}}
\newcommand{\abar}{a^{\perp}}
\newcommand{\bbar}{b^{\perp}}
\newcommand{\cbar}{c^{\perp}}

Next, we will show that looking for $a_i, b_i, c_i$ in the spaces $V_A, V_B, V_C$ is sufficient. The natural choices are $\Pi_A a_i, \Pi_B b_i, \Pi_C c_i$, and we show that this choice in fact gives a good approximation. For convenience let $\atil_i := \Pi_A a_i$, and $\abar_i := a_i - \atil_i$.

\begin{claim}\label{lem:decomp:claim}
For $T, V_A, \atil_i, \dots$ as defined above, we have
\[ \norm{T - \calN - \sum_i \atil_i \opr  \btil_i \opr  \ctil_i}_F \le 3\eps. \]
\end{claim}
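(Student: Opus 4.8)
The plan is to recognize the proposed decomposition as the image of the clean signal under three commuting mode projections, and then telescope. Write $S := T - \calN = \sum_i a_i \opr b_i \opr c_i$. Since $\atil_i = \Pi_A a_i$, $\btil_i = \Pi_B b_i$, $\ctil_i = \Pi_C c_i$, and projection along a mode acts linearly on each rank-one term, $\sum_i \atil_i \opr \btil_i \opr \ctil_i$ is exactly $S$ with $\Pi_A$ applied along mode $A$, $\Pi_B$ along mode $B$, and $\Pi_C$ along mode $C$. These three operators act on different modes, so they commute and their composite $\Pi_A\Pi_B\Pi_C$ is the orthogonal projection onto $V_A \opr V_B \opr V_C$. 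Thus the quantity to bound is $\norm{S - \Pi_A\Pi_B\Pi_C S}_F$, and I would reduce everything to a single-mode truncation estimate.

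The first (and key) step is the single-mode bound: truncating one mode to its rank-$R$ SVD subspace perturbs the clean signal by at most $\eps$. Concretely, let $M_A^0$ denote the mode-$A$ matricization of $S$; each of its $j,k$ columns equals $\sum_i b_i(j)c_i(k)\, a_i \in \spn\{a_1,\dots,a_R\}$, so $M_A^0$ has rank at most $R$, and it is within Frobenius distance $\norm{M_A - M_A^0}_F = \norm{\calN}_F \le \eps$ of the noisy matricization $M_A$ (whose top-$R$ left-singular space is exactly $V_A$). Applying $(I-\Pi_A)$ along mode $A$ to $S$ matricizes to $(I-\Pi_A)M_A^0$, and the claim I want is $\norm{(I-\Pi_A)M_A^0}_F \le \eps$. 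This is the clean analogue of the preceding Claim, except that now it is the energy of the rank-$\le R$ matrix $M_A^0$ outside $V_A$ that must be controlled. The same statement holds verbatim for modes $B$ and $C$.

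With the single-mode bound in hand, I would telescope across the three modes using the identity (valid because the projections commute)
\[ I - \Pi_A\Pi_B\Pi_C = (I-\Pi_A) + \Pi_A(I-\Pi_B) + \Pi_A\Pi_B(I-\Pi_C), \]
which expresses $S - \Pi_A\Pi_B\Pi_C S$ as a sum of three tensors. The first has Frobenius norm $\norm{(I-\Pi_A)S}_F = \norm{(I-\Pi_A)M_A^0}_F \le \eps$. For the second, $\Pi_A$ is an orthogonal projection, hence a contraction, so $\norm{\Pi_A(I-\Pi_B)S}_F \le \norm{(I-\Pi_B)S}_F \le \eps$ by the single-mode bound in mode $B$; the third term is identical, using that $\Pi_A\Pi_B$ is a contraction. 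Summing the three estimates gives $\norm{S - \Pi_A\Pi_B\Pi_C S}_F \le 3\eps$, which is exactly the claim since $S = T - \calN$ and $\Pi_A\Pi_B\Pi_C S = \sum_i \atil_i \opr \btil_i \opr \ctil_i$.

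I expect the single-mode bound to be the main obstacle and the place where care is genuinely needed. A blunt triangle inequality, writing $M_A^0 = M_A - N_A$ (with $N_A$ the matricized noise), only yields $\norm{(I-\Pi_A)M_A^0}_F \le \norm{(I-\Pi_A)M_A}_F + \norm{N_A}_F \le 2\eps$, and propagating this loses the stated constant. Recovering the clean factor of $\eps$ requires exploiting the rank-$\le R$ structure of $M_A^0$ together with the fact that $\Pi_A M_A$ is the \emph{best} rank-$R$ approximation of $M_A$: working in the singular basis of $M_A$ and accounting for the alignment between the tail singular directions and the (rank-constrained) rows of $M_A^0$ via a Pythagorean argument, rather than a lossy triangle inequality. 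Everything else — the commuting-projection telescoping and the contraction estimates — is routine once this refined single-mode estimate is established.
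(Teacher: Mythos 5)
Your proof is structurally identical to the paper's: the telescoping identity
\[ I - \Pi_A\Pi_B\Pi_C = (I-\Pi_A) + \Pi_A(I-\Pi_B) + \Pi_A\Pi_B(I-\Pi_C), \]
applied to $S=T-\calN$, produces exactly the paper's three hybrid differences, and your ``projections are contractions'' step is precisely the paper's $G,H$ split with $\iprod{G,H}=0$. The genuine gap is the one step you deferred. Writing $M_A^0$ for the mode-$A$ matricization of $S$ (so $M_A = M_A^0 + N_A$ with $\norm{N_A}_F\le\eps$), the single-mode estimate your plan rests on, $\norm{(I-\Pi_A)M_A^0}_F\le\eps$, is false: no Pythagorean or Eckart--Young refinement can deliver constant $1$. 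Take $R=1$ with trivial third mode, $M_A^0=\diag(s,0)$ (e.g.\ $a_1=s e_1$, $b_1=e_1$, all $\rho$-bounded) and noise with matricization $N_A=\diag(-s/2,\;s/2+\delta)$, so that $\eps=\norm{N_A}_F\approx s/\sqrt{2}$ while $M_A=\diag(s/2,\;s/2+\delta)$ has top left-singular direction $e_2$. Then $V_A=\spn(e_2)$ and $\norm{(I-\Pi_A)M_A^0}_F = s \approx \sqrt{2}\,\eps > \eps$, and the construction scales to arbitrarily small $\eps$. The failure mode is that noise which half-cancels a weak component can swap the top-$R$ space of the noisy matricization entirely away from the clean column space, and then the alignment your Pythagorean argument would need is simply absent. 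What is true is only the triangle-inequality bound you dismissed as blunt: $\norm{(I-\Pi_A)M_A^0}_F \le \norm{(I-\Pi_A)M_A}_F + \norm{(I-\Pi_A)N_A}_F \le 2\eps$, which through your (correct) telescoping yields $6\eps$, not $3\eps$.

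You should know that the paper's own proof stumbles at exactly this point: its ``first term is easy'' line asserts $\norm{\sum_i a_i\opr b_i\opr c_i - \Pi_A a_i \opr b_i\opr c_i}_F = \norm{M_A-\Pi_A M_A}_F$, but the left-hand side matricizes to $(I-\Pi_A)M_A^0$, not $(I-\Pi_A)M_A$ (recall $M_A$ is built from the noisy $T$), and the same conflation recurs in the $G,H$ step for the other two modes. So the honestly provable version of the claim is $6\eps$, with $5\eps$ in Theorem~\ref{thm:algorithm} degrading correspondingly to $8\eps$ --- harmless for the $O(1)\cdot\eps$ guarantee the algorithm needs, but the literal constant $3$ is not established by either argument. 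In short: you correctly located the sensitive step that the paper glosses over, but the sharper single-mode bound you proposed to prove there does not exist; the correct repair is to accept the factor-$2$ loss per mode rather than try to refine it away.
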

\begin{proof}
The proof is by a {\em hybrid argument}. We write
\begin{align*}
T - \calN - \sum_i \atil_i \opr \btil_i \opr \ctil_i &= \big( \sum_i a_i \opr b_i \opr c_i - \atil_i \opr b_i \opr c_i \big) \\
&+ \big( \sum_i \atil_i \opr b_i \opr c_i - \atil_i \opr \btil_i \opr c_i \big) \\
&+ \big( \sum_i \atil_i \opr \btil_i \opr c_i - \atil_i \opr \btil_i \opr \ctil_i \big).
\end{align*}

We now bound each of the terms in the parentheses, and then appeal to triangle inequality (for the Frobenius norm). Now, the first term is easy:
\[ \norm{\sum_i a_i \opr b_i \opr c_i - \atil_i \opr b_i \opr c_i}_F = \norm{M_A - \Pi_A M_A}_F \le \eps. \]

One way to bound the second term is as follows. Note that:
\[ \sum_i a_i \opr b_i \opr c_i - a_i \opr \btil_i \opr c_i = \Big( \sum_i \atil_i \opr b_i \opr c_i - \atil_i \opr \btil_i \opr c_i \Big) + \Big( \sum_i \abar_i \opr b_i \opr c_i - \abar_i \opr \btil_i \opr c_i \Big).\]
Now let us denote the two terms in the parenthesis on the RHS by $G, H$ -- these are tensors which we view as $mnp$ dimensional vectors. We have $\norm{G+H}_2 \le \eps$, because the Frobenius norm of the LHS is precisely $\norm{M_B - \Pi_B M_B}_F \le \eps$. Furthermore, $\iprod{G,H} =0$, because $\iprod{\atil_i, \abar_j}=0$ for any $i,j$ (one vector lies in the span $V_A$ and the other orthogonal to it). Thus we have $\norm{G}_2 \le \eps$ (since in this case $\norm{G+H}_2^2 = \norm{G}_2^2 + \norm{H}_2^2$).

A very similar proof lets us conclude that the Frobenius norm of the third term is also $\le \eps$. This completes the proof of the claim, by our earlier observation.
\end{proof}

The claim above shows that there exist vectors $\atil_i, \btil_i, \ctil_i$ of length at most $\rho$ in $V_A, V_B, V_C$ resp., which give a rank-$R$ approximation with error at most $4\eps$. Now, we form an $\eps/(R\rho^2)$-net over the ball of radius $\rho$ in each of the spaces $V_A, V_B, V_C$. Since these spaces have dimension $R$, the nets have size
\[ \Big( \frac{O(R\rho^2)}{\eps} \Big)^R \le \exp(O(R) \log(R\rho/\eps)).\]

Thus let us try all possible candidates for $\atil_i, \btil_i, \ctil_i$ from these nets. Suppose we have $\acap_i, \bcap_i, \ccap_i$ being vectors which are $\eps/(6R\rho^2)$-close to $\atil_i, \btil_i, \ctil_i$ respectively, it is easy to see that
\begin{align*}
\norm{\sum_i \atil_i \opr \btil_i \opr \ctil_i - \acap_i \opr \bcap_i \opr \ccap_i }_F \le  \sum_i \norm{ \atil_i \opr \btil_i \opr \ctil_i - \acap_i \opr \bcap_i \opr \ccap_i }_F
\end{align*}
Now by a hybrid argument exactly as above, and using the fact that all the vectors involved are $\le \rho$ in length, we obtain that the LHS above is at most $\eps$. 

Thus the algorithm finds vectors such that the error is at most $5\eps$. The running time depends on the time taken to try all possible candidates for $3R$ vectors, and evaluating the tensor for each. Thus it is $\poly(m,n,p) \cdot \exp(O(R^2) \log(R\rho/\eps))$.
\end{proof}

This argument generalizes in an obvious way to order $\ell$ tensors,
and gives the following. We omit the proof.
\begin{theorem}\label{thm:algo}
There is an algorithm, that when given an order $\ell$ tensor of size $n$ with a rank $R$ approximation of error $\eps$ (in $\nrm{\cdot}{F}$), finds a rank-$R$ approximation of error $O(\ell \eps)$ in time $\poly(n) \cdot \exp(O(\ell R^2) \log(\ell R\rho /\eps))$.
\end{theorem}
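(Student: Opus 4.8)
The plan is to carry over the proof of Theorem~\ref{thm:algorithm} essentially verbatim, replacing the three unfoldings by $\ell$ of them and the two-way orthogonal split by an $\ell$-way hybrid; throughout I assume (as in the $\ell=3$ low-rank approximation problem) that the promised rank-$R$ approximation is $\rho$-bounded. Concretely, for each mode $j \in [\ell]$ let $M_j$ denote the $n_j \times \prod_{i \ne j} n_i$ unfolding whose columns are the mode-$j$ fibers of $T$, let $V_j$ be the span of the top $R$ left singular vectors of $M_j$, and let $\Pi_j$ be the orthogonal projection onto $V_j$. Exactly as in the first claim in the proof of Theorem~\ref{thm:algorithm}, using $S=\spn\{\spc{U}{j}_1,\dots,\spc{U}{j}_R\}$ as a competing $R$-dimensional subspace shows $\norm{M_j - \Pi_j M_j}_F \le \norm{\calN}_F \le \eps$, since every mode-$j$ fiber of $\sum_r \bigotimes_i \spc{U}{i}_r$ lies in $S$. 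This step is immediate and loses nothing.

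The heart of the argument is the hybrid step. Writing $\spc{\tU}{j}_r := \Pi_j \spc{U}{j}_r$, I would bound $\norm{T - \calN - \sum_r \bigotimes_j \spc{\tU}{j}_r}_F$ by telescoping over the $\ell$ modes: at step $j$ we replace $\spc{U}{j}$ by $\spc{\tU}{j}$ while modes $1,\dots,j-1$ are already projected and modes $j+1,\dots,\ell$ are untouched. To control this $j$th difference, consider instead the tensor in which modes $1,\dots,j-1$ are also left unprojected; that tensor is precisely $M_j - \Pi_j M_j$ (viewed as a tensor), of Frobenius norm $\le \eps$. Decomposing each of modes $1,\dots,j-1$ as $\spc{\tU}{i} + (\spc{U}{i}-\spc{\tU}{i})$ expands this into $2^{j-1}$ summands, and since in every mode $i$ the vector $\spc{\tU}{i}_r \in V_i$ is orthogonal to $\spc{U}{i}_s - \spc{\tU}{i}_s \in V_i^{\perp}$ for all $r,s$, any two distinct summands are orthogonal as tensors (their term-by-term inner products vanish). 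Hence by the Pythagorean theorem the single summand in which all of modes $1,\dots,j-1$ use $\spc{\tU}{i}$ — which is exactly the $j$th hybrid difference — has norm at most $\eps$. Summing the $\ell$ steps gives error $\le \ell\eps$ to $T - \calN$, hence $\le (\ell+1)\eps$ to $T$.

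Finally, having shown that length-$\le\rho$ vectors inside $V_1,\dots,V_\ell$ achieve error $O(\ell\eps)$, I would lay down an $\eps'$-net with $\eps' = \eps/\poly(\ell,R,\rho)$ on the radius-$\rho$ ball of each $R$-dimensional $V_j$ and brute-force over all $R$-element selections from each of the $\ell$ nets. A further $\ell$-fold hybrid (now bounding the vector replacements, using that all lengths are $\le\rho$) shows that the closest grid decomposition incurs only an additional $\eps$, for a total of $O(\ell\eps)$. Each net has size $(O(\rho)/\eps')^R$, so the number of candidate decompositions is $\exp(O(\ell R^2)\log(\ell R\rho/\eps))$; since computing the $\ell$ SVDs and evaluating each candidate tensor cost $\poly(n)$, the total running time is as claimed.

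The step I expect to require the most care is the orthogonal decomposition inside the hybrid: in the $3$-tensor proof this was the clean two-term split $G+H$, whereas for general $\ell$ one must verify that the $2^{j-1}$ partially-projected summands are genuinely pairwise orthogonal as $\prod_i n_i$-dimensional vectors, so that Pythagoras isolates the desired term and the error stays at $\eps$ per step rather than accumulating a factor of $2^{j-1}$. Everything else — the rank-$R$ approximability of the unfoldings and the net-rounding bound — is a routine transcription of the $\ell=3$ argument.
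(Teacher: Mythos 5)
Your proposal is correct and is exactly the generalization the paper intends: the paper omits the proof of Theorem~\ref{thm:algo}, stating only that the argument for Theorem~\ref{thm:algorithm} ``generalizes in an obvious way,'' and your transcription — per-mode unfoldings and top-$R$ projections, an $\ell$-step hybrid in which the $2^{j-1}$ partially-projected summands are pairwise orthogonal (the term-by-term vanishing inner products you check are the right generalization of the paper's two-term $G,H$ split), and the $\eps$-net rounding — is that proof. The one step you rightly flag as delicate (Pythagoras isolating the fully-projected summand) goes through exactly as you describe, and your constant-factor and net-granularity bookkeeping is at the same level of precision as the paper's own $\ell=3$ argument.
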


\section{Polynomial Identifiability of Latent Variable and Mixture Models} \label{sec:applications}



We now show how our robust uniqueness theorems for tensor decompositions can be used for learning latent variable models, with polynomial sample complexity bounds. 

\begin{definition}[Polynomial Identifiability]
An instance of a hidden variable model of size $m$ with hidden variables set $\Upsilon$ is said to be polynomial identifiable if there is an algorithm that given any $\eta>0$, uses only $N \le \poly(m,1/\eta)$ samples and finds with probability $1-o(1)$ estimates of the hidden variables $\Upsilon'$ such that $\nrm{\Upsilon' -  \Upsilon}{\infty}< \eta$.
\end{definition}

Consider a simple mixture-model, where each sample is generated from mixture of $R$ distributions $\{\calD_r\}_{r \in [R]}$, with mixing probabilities $\{w_r\}_{r \in [R]}$. Here the latent variable $h$ corresponds to the choice of distribution and it can have $[R]$ possibilities. First the distribution $h=r$ is picked with probability $w_r$, and then the data is sampled according to $\calD_r$, which has mean $\mu_r \in \R^n$. Let $M_{n \times R}$ represent the matrix of these $R$ means. The goal is to learn these hidden parameters ($M$ and weights $\{w_r\}$) after observing many samples. This setting captures many latent variable models including topic models, HMMs, gaussian mixtures etc. 
  
While practitioners typically use Expectation-Maximization (EM) methods to learn the parameters, a good alternative in the case of mixture models is using \emph{the method of moments} approach ( starting from the work by Pearson \cite{Pea94} for univariate gaussians ), which tries to identify the parameters by estimating higher order moments. However, one drawback is that the number of moments required is typically as large as the number of mixtures $R$ (or parameters), resulting in a sample complexity that is exponential in $R$ \cite{MV10,BS10,FOS05,FOS06}.

In a recent exciting line of work \cite{MR06,AHK12,HK12,AFHKL12,AGHKT12}, it is shown that $\poly(R,n)$ samples suffice for identifiability in a special case called the \emph{non-singular} or \emph{non-degenerate} case i.e. when the matrix $M$ has full rank (rank = $R$)\footnote{For polynomial identifiability, $\sigma_R \ge 1/\poly(n)$.} for many of these models. Their algorithms for this case proceed by reducing the problem of finding the latent variables (the means and weights) to the problem of decomposing {\em Symmetric Orthogonal Tensors} of order $3$, which are known to be solvable in $\poly(n,R)$ time using power-iteration type methods \cite{KR01,ZG01,AGHKT12}. 

However, their approach crucially relies on these non-degeneracy conditions, and are not robust: even in the case when these $R$-means reside in a $(R-1)$-dimensional space, these algorithms fail, and the best known sample complexity bounds in many of these settings are $\exp(R)\poly(n)$. In many settings like speech recognition and image classification, the dimension of the feature space $n$ is typically much smaller than $R$, the number of topics or clusters. For instance, the (effective) feature space corresponds to just the low-frequency components in the fourier spectrum for speech, or the local neighborhood of a pixel in images (SIFT features \cite{Low99}). These are typically much smaller than the different kinds of objects or patterns (topics) that are possible. Further, in other settings, the set of relevant features (the effective feature space) could be a space of much smaller dimension ($k < R$) that is unknown to us even when the feature vectors are actually represented in a large dimensional space ($n\gg R$). 

In this section, we show that we can use our Robust Uniqueness results for Tensor Decompositions (Theorem~\ref{thm:unique3} and Theorem~\ref{thm:unique:gen}) to go past the non-degeneracy barrier and prove that $\poly(R,n)$ samples suffice even under the milder condition that no $k=\delta R$ gaussians lie in a $(k-1)$ dimensional space (for some constant $\delta>0$). Further, these results generalize to other hidden variable models like Topic Modeling, Hidden Markov models, Mixture models etc. One interesting aspect of our approach is that, unlike previous works, we get a smooth tradeoff : we get polynomial identifiability under successively milder conditions by using higher order tensors ($\ell \approx 2/\delta$). This reinforces the intuition that higher moments capture more information at the cost of efficiency. 

In the rest of this section, we will first describe Multi-view models and show how the robust uniqueness theorems for tensor decompositions imply polynomial identifiability in this model. We will then see two popular latent variable models which fit into the multi-view mixture model: the exchangeable (single) Topic Model and Hidden Markov models. 
We note that the results of this section (for $\ell=3$ views) also apply to other latent variable models like \emph{Latent Dirichlet Allocation (LDA)} and \emph{Independent Component Analysis (ICA)} that were studied in \cite{AGHKT12}. 
We omit the details in this version of the paper.

\subsection{Multi-view Mixture Model}

Multi-view models are mixture models with a latent variable $h$, where we are given multiple observations or views $\spc{x}{1}, \spc{x}{2}, \dots,\spc{x}{\ell}$ that are conditionally independent given the latent variable $h$. Multi-view models are very expressive, and capture many well-studied models like Topic Models \cite{AHK12}, Hidden Markov Models (HMMs) \cite{MR06,AMR09,AHK12}, random graph mixtures \cite{AMR09}.
We first introduce some notation, along the lines of \cite{AMR09, AHK12}. 

\begin{definition}[Multi-view mixture models]\label{def:MM}
~ \\
\begin{itemize}
\item The latent variable $h$ is a discrete random variable having domain $[R]$, so that $\Prb{h= r}= w_r,  \forall r \in [R]$.  
\item The views $\{\spc{x}{j}\}_{j \in [\ell]}$ are random vectors $\in \R^n$ that are conditionally independent given $h$, with means $\spc{\mu}{j} \in \R^n$ i.e.
$$\E{\spc{x}{j} \vert h=r}= \spc{\mu}{j}_r \text{ and } \E{\spc{x}{i} \otimes \spc{x}{j}\vert h=r}= \spc{\mu}{i}_r \otimes \spc{\mu}{j}_r \text{ for } i \ne j$$
\item Denote by $\spc{M}{j}$, the $n \times R$ matrix with the means $\{\spc{\mu}{j}_r\}_{r \in [R]}$ comprising its columns i.e. $$\spc{M}{j}=[ \spc{\mu}{j}_1 | \dots | \spc{\mu}{j}_r | \dots | \spc{\mu}{j}_R ].$$
\item The entries (domain) of $\spc{x}{j}$ are bounded by $\Bd$ i.e. $\nrm{\spc{x}{j}}{\infty} \le \Bd$.
\footnote{in general, we can also allow them to be continuous distributions like multivariate gaussians.} 
\end{itemize}
\end{definition}
The parameters of the model to be learned are the matrices $\{\spc{M}{j}\}_{j \in [\ell]}$ and the mixing weights $\{w_r\}_{r \in [R]}$. In many settings, the $n$-dimensional vectors $\spc{x}{j}$ are actually indicator vectors (hence $\Bd=1$): this is commonly used to encode the case when the observation is one of $n$ discrete events. Allman et al \cite{AMR09} refer to these models by {\em finite mixtures of finite measure products}. 

The following lemma shows how to obtain a higher order tensor (to apply our results from previous sections) in terms of the hidden parameters that we need to recover. It follows easily because of conditional independence.
\begin{lemma}[\cite{AMR09,AHK12}]
In the notation established above for multi-view models, $\forall \ell \in \bbN$ the $\ell^{th}$ moment tensor
$$\E{\spc{x}{1} \otimes \dots \spc{x}{j} \otimes \dots \spc{x}{\ell}} = \sum_{r \in [R]} w_r \spc{\mu}{1}_r \otimes \spc{\mu}{2}_r \dots \otimes \spc{\mu}{j}_r \otimes \dots \otimes \spc{\mu}{\ell}_r.$$
In our usual representation of tensor decompositions, 
$$ \E{\spc{x}{1} \otimes \dots \spc{x}{j} \otimes \dots \spc{x}{\ell}} = \left[\spc{M}{1} ~ \spc{M}{2} ~ \dots ~ \spc{M}{\ell} \right].$$
\end{lemma}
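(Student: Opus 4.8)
The plan is to establish the first (tensor) identity directly from the definition of the model, by conditioning on the latent variable $h$ and using conditional independence of the views; the second identity is then just a rewriting in the bracket notation, once we decide where to place the mixing weights $w_r$.

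First I would apply the law of total expectation and use $\Prb{h=r}=w_r$ to write
\[ \E{\spc{x}{1} \otimes \dots \otimes \spc{x}{\ell}} = \sum_{r \in [R]} w_r\, \E{\spc{x}{1} \otimes \dots \otimes \spc{x}{\ell} \vert h = r}. \]
The one step that needs care is evaluating the conditional expectation. I would argue coordinatewise: the $(i_1,\dots,i_\ell)$ entry of $\spc{x}{1}\otimes \dots \otimes \spc{x}{\ell}$ is the scalar product $\spc{x}{1}_{i_1}\cdots \spc{x}{\ell}_{i_\ell}$, and since the views are conditionally independent given $h=r$, its conditional expectation factors as $\prod_{j\in[\ell]} \E{\spc{x}{j}_{i_j} \vert h=r} = \prod_{j\in[\ell]} (\spc{\mu}{j}_r)_{i_j}$. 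This is exactly the $(i_1,\dots,i_\ell)$ entry of $\spc{\mu}{1}_r \otimes \dots \otimes \spc{\mu}{\ell}_r$, so $\E{\spc{x}{1}\otimes\dots\otimes\spc{x}{\ell}\vert h=r} = \spc{\mu}{1}_r \otimes \dots \otimes \spc{\mu}{\ell}_r$. Substituting back yields the decomposition $\sum_{r\in[R]} w_r\, \spc{\mu}{1}_r \otimes \dots \otimes \spc{\mu}{\ell}_r$, as claimed.

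For the second identity, recall that $[\spc{M}{1}~\dots~\spc{M}{\ell}]$ denotes $\sum_{r}\spc{M}{1}_r\otimes\dots\otimes\spc{M}{\ell}_r$ with $\spc{M}{j}_r=\spc{\mu}{j}_r$, so it carries no weights. To match the weighted sum above I would absorb the $w_r$ into the factor matrices --- e.g. by rescaling the $r$th column of a single matrix (say $\spc{M}{1}$) by $w_r$, or symmetrically by scaling the $r$th column of every $\spc{M}{j}$ by $w_r^{1/\ell}$. Either choice only rescales columns, hence preserves $\rho$-boundedness, and (since in the relevant applications the weights are bounded below) changes the robust Kruskal rank parameter by at most a $\poly(1/\gamma)$ factor; thus the rescaled matrices remain valid inputs to Theorems~\ref{thm:unique3} and~\ref{thm:unique:gen}.

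I expect no genuine obstacle here: this is the standard moment identity for conditionally independent views, cited from prior work, and the only line requiring care is the factoring of the conditional expectation, which is immediate from conditional independence once written coordinatewise. The sole bookkeeping subtlety is the placement of the weights $w_r$ so that the bracket notation applies verbatim, which as noted above is harmless.
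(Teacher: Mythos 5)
Your proposal is correct and is essentially the argument the paper intends: the paper gives no written proof (it cites \cite{AMR09,AHK12} and remarks the identity ``follows easily because of conditional independence''), and your route --- law of total expectation over $h$, then coordinatewise factoring of $\E{\spc{x}{1}_{i_1}\cdots\spc{x}{\ell}_{i_\ell}\vert h=r}$ via conditional independence --- is exactly that standard argument. Your handling of the weights also matches the paper's own convention, which later absorbs $w_r$ into a single factor matrix (it uses $\spc{U}{\ell}=\spc{M}{\ell}\diag(w)$, costing only a $1/\gamma$ factor in the robust Kruskal rank parameter, just as you note).
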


Recall that $\krk{\ka}(M)$ corresponds to the minimum number $k$ such that every $n \times k$ submatrix $M'$ of $M$ has $\sigma_k(M') > 1/\ka$. Intuitively this says that, no set of $k$ vectors from $\mu_{r \in [R]}$ all lie close to a $k-1$ dimensional space.

When $k \equiv \krk{\ka}(M) \ge R$ for each of these matrices (the non-degenerate or non-singular setting), Anandkumar et al. \cite{AHK12} give a polynomial time algorithm to learn the hidden variables using only $\poly(R,\ka,n)$ samples (hence polynomial identifiability). However, their algorithm fails even when $k = R-1$. We now how to achieve polynomial identifiability even when $k = \delta R$ for any constant $\delta>0$. 

\vspace{5pt}
\noindent {\bf Theorem~\ref{thm:MM}} (Polynomial identifiability of Multi-view mixture model). \emph{The following statement holds for any constant integer $\ell$. 
Suppose we are given samples from a multi-view mixture model (see Def~\ref{def:MM}), with the parameters satisfying:
\begin{enumerate}[(a)]
\item For each mixture $r \in [R]$, the mixture weight $w_r > \gamma$.
\item For each $j \in [\ell]$, $\krk{\ka}(\spc{M}{j})\ge k \ge \frac{2R}{\ell}+1$.
\end{enumerate}
then there is a algorithm that given any $\eta>0$ uses 
$N=\spc{\vartheta_{\ref{thm:MM}}}{\ell} \left(\frac{1}{\eta},R,n,\ka,1/\gamma,\Bd\right)$ samples, \\and finds with high probability $\{\spc{\tM}{j}\}_{j \in [\ell]} $ and $\{\tw_r\}_{r \in [R]}$ (upto renaming of the mixtures $\{1,2,\dots,R\}$) such that
\begin{equation}
\forall j \in [\ell], \quad \nrm{\spc{M}{j}-\spc{\tM}{j}}{F} \le \eta \quad \text{ and }\quad \forall r \in [R], ~ \abs{w_r - \tw_r} < \eta 
\end{equation}
Further, this algorithm runs in time $\exp\left(R^2 \ell^2 \left[2^{2\ell} \log(\frac{R\ell}{\eta})+\ell \log(n\cdot \frac{\ka \Bd}{\gamma}) \right] \right)\poly(n)$ time.}\\

\vspace{5pt}

Note that the above theorem shows polynomial identifiability (for constant $\ell$), and additionally gives an algorithm which takes time $n^{O_{\ell}(R^2)}\poly(\ka,R,\Bd)$ for inverse polynomial error. The function $\spc{\vartheta_{\ref{thm:MM}}}{\ell}(\cdot,\dots,\cdot)=\poly(Rn/(\gamma \eta))^{2^{\ell}} \poly(n,\ka,1/\gamma)^{\ell}$ is a polynomial for constant $\ell$ and satisfies the theorem.

\vspace{3pt}

\noindent\emph{ {\bf Remarks:} 
\begin{enumerate}
\item Note that the condition $(a)$ in the theorem about the mixing weights $w_r > \gamma$ is required to recover \emph{all} the parameters, since we need $\poly(1/w_r)$ samples before we see a sample from mixture $r$. However, by setting $\gamma \ll \eps'$, the above algorithm can still be used to recover the mixtures components of weight larger than $\eps'$. 
\item While these results give new polynomial sample complexity guarantees when $n < R$, they are interesting even when the dimension of the space $n \gg R$. A natural setting where this arises is when many of the vectors lie in a unknown space of much smaller dimension ($k$-dims), while the whole space has high dimension.
\item The theorem also holds when for different $j$, the $\krk{\tau}(\spc{M}{j})$ have bounds $k_j$ which are potentially different, and satisfy the same condition as in Theorem~\ref{thm:unique:gen}.
\end{enumerate}
}


\begin{proof}
We will consider the $\ell^{th}$ moment tensor for $\ell=\lceil 2/\delta \rceil+1$. 
The proof is simple, and proceeds in three steps. First, we use enough samples to obtain an estimate $\tilde{T}$ of the $\ell^{th}$ moment tensor $T$, upto inverse polynomial error. Then we find a good rank-$R$ approximation to $\tilde{T}$ (it exists because $T$ has rank $R$). We then use the Robust Uniqueness theorem for tensor decompositions to claim that the terms of this decomposition are in fact close to the hidden parameters. 

Set $\eta'=\frac{\eta \gamma}{16 \ell n}$. We know from Lemma~\ref{lem:samplingerror:topic} that the $\ell^{th}$ moment tensor can be estimated to accuracy \\
$\eps_1=\left(\ell \cdot \spc{\vartheta_{\ref{thm:unique:gen}}}{\ell}(R/\eta') \cdot \vartheta_{\ref{thm:unique:gen}}(\ka/\gamma,\Bd\sqrt{n},\Bd\sqrt{n},n)\right)^{-1} $ 
in $\|\cdot\|_F$ norm using $N=O(\eps_1^{-2}R(\Bd)^{\ell} \sqrt{\ell \log n})$ samples. This estimated tensor $\tilde{T}$ has a rank-$R$ decomposition upto error $\eps_1$. 

Next, we will apply our algorithm for getting approximate low-rank tensor decompositions from Section~\ref{sec:algo} on $\tilde{T}$. Since each $\spc{\mu}{j}_r$ is a probability distribution, we can obtain vectors $\{\spc{\tu}{j}_r\}_{j \in [\ell], r\in [R]}$ (let us call the corresponding $n \times R$  matrices $\spc{\tU}{j}$) such that 
$$\forall j \in [\ell-1], r \in [R] ~ \nrm{\spc{\tu}{j}_r}{1} \in [1-\delta,1+\delta] \quad \text{where } \delta=\eps_1 \sqrt{R} < \frac{\eta}{2\ell}.$$ 
This is possible since the algorithm in Section~\ref{sec:algo} searches for the vectors $\spc{\tu}{j}_r$, by just enumerating over $\eps$-nets on an $R$-dimensional space. An alternate way to see this is to obtain any decomposition and scale all but the last column in the matrices $\spc{\tU}{j}$ so that they have $\ell_1$ norm of $1$ (upto error $\delta$). Note that this step of finding an $\eps$-close rank-$R$ decomposition can also just comprise of brute force enumeration, if we are only concerned with polynomial identifiability. Hence, we have obtained a rank-$R$ decomposition which is $O(\ell \eps_1)$ far in $\|\cdot\|_F$. 

Now, we apply Theorem~\ref{thm:unique:gen} to $\ell^{th}$ moment tensor $T$ to claim that these $\spc{\tU}{j}$ are close to $\spc{M}{j}$ upto permutations. When we apply Theorem~\ref{thm:unique:gen}, we absorb the co-efficients $w_r$ into $\spc{M}{\ell}$. In other words
\[\spc{U}{j}=\spc{M}{j} ~\text{ for all } j \in [\ell-1], \quad \text{and} \quad \spc{U}{\ell}=\spc{M}{\ell}\diag(w).\]
We know that $\krk{\ka}(\spc{M}{j})=k_j$, and $\krk{\ka/\gamma}(\spc{U}{\ell})=k_\ell$.  
We now apply Theorem~\ref{thm:unique:gen} with our choice of $\eps_1$, and assuming that the permutation is identity without loss of generality,  we get 
\begin{align*}
\forall r \in [R] ~ & \norm{\spc{\tu}{j}_r-\spc{\Lambda}{j}(r)\spc{\mu}{j}_r} < \eta' \le \frac{\eta \gamma }{16 n \ell} \quad \forall j \in [\ell-1] \\
~\text{ and } ~ & \norm{\spc{\tu}{\ell}_r-\spc{\Lambda}{\ell}(r)w_r\spc{\mu}{\ell}_r} < \eta' \le \frac{\eta \gamma }{16\ell n}
\end{align*}
for some scalar matrices $\Lambda_j$ (on $R$-dims) such that 
\[\norm{\prod \spc{\Lambda}{j} -I_{R}} \le \frac{\eta}{16 \ell n} \label{eq:mvlambda}\]
Note that the entries in the diagonal matrices $\Lambda_j$ (the scalings) may be negative. We first transform the vectors so that each of the entries in $\Lambda_j$ are non-negative (this is possible since the product of $\Lambda_j$ is close to the identity matrix, which only has non-negative entries).
 
\begin{equation}
\forall j \in [\ell], r \in [R], \quad \spc{\tv}{j}_r = \sign\left(\spc{\Lambda}{j}(r)\right)\cdot \spc{\tu}{j}_r 
\end{equation}
This ensures that 
\begin{align}\label{eq:mvparams}
\forall j \in [\ell-1], r \in [R] ~ & \norm{\spc{\tv}{j}_r-\abs{\spc{\Lambda}{j}(r)}\spc{\mu}{j}_r} < \eta' \le \frac{\eta \gamma }{16 n \ell} \quad \text{ and }\\
\forall r \in [R] ~ & \norm{\spc{\tv}{\ell}_r-\abs{\spc{\Lambda}{\ell}(r)}w_r\spc{\mu}{\ell}_r} < \eta' \le \frac{\eta \gamma }{16\ell n} \label{eq:mvparams3}
\end{align}

Moreover, the $\spc{\mu}{j}_r$ correspond to probability vectors which have $\| \spc{\mu}{j} \|_1 =1$, we have ensured that $\nrm{\spc{\tv}{j}_r}{1}\in [1-\delta,1+\delta]$. Applying Lemma~\ref{lem:l1error} we get that the required estimates $\spc{\tv}{j}_r$ (i.e. $\spc{\tm}{j}_r$) satisfy:

\[\forall j \in [\ell-1], r \in [R], ~ \norm{\spc{\tv}{j}_r - \spc{\mu}{j}_r}\le \frac{\eta \gamma}{4\ell \sqrt{n}} \quad \text{ and } \quad \abs{\spc{\Lambda}{j}(r)}\in \left[ 1- \frac{\eta \gamma}{8 \ell \sqrt{n}}, 1- \frac{\eta \gamma}{8 \ell \sqrt{n}} \right] \label{eq:mvparams2}\]
 
Now, set $\spc{\tm}{\ell}_r = \frac{\spc{\tv}{\ell}_r}{\nrm{\spc{\tv}{\ell}_r}{1}}$, and $\tw_r = \nrm{\spc{\tv}{\ell}_r}{1}$, for all $r \in [R]$. Now, from equations~\eqref{eq:mvlambda} and \eqref{eq:mvparams2} we get that 
\begin{align*}
\forall r \in [R] \quad & ~\abs{\spc{\Lambda}{\ell}(r)-1} \le \frac{\eta \gamma}{8 \sqrt{n}}\\
\text{Hence from \eqref{eq:mvparams3}, }&~\norm{\spc{\tv}{\ell}_r - w_r \spc{\mu}{\ell}_r} \le \frac{\eta \gamma}{4 \sqrt{n}}\\
&~\norm{\tw_r \spc{\tm}{\ell}_r - w_r \spc{\mu}{\ell}_r} \le \frac{\eta \gamma}{4 \sqrt{n}}\\
&w_r \norm{\frac{\tw_r}{w_r} \spc{\tm}{\ell}_r - \spc{\mu}{\ell}_r} \le \frac{\eta \gamma}{4 \sqrt{n}}\\
\end{align*}
Using the fact that $w_r \ge \gamma$ and using Lemma~\ref{lem:l1error}, we see that $\tw_r$ and $\spc{\tm}{\ell}_r$ are also $\eta$-close estimates to $w_r$ and $\spc{\mu}{\ell}_r$ respectively, for all $r$. 
\end{proof}

We will now see two popular latent variable models which fit into the multi-view mixture model: the exchangeable (single) Topic Model and Hidden Markov models. We note that the results of this section (for $\ell=3$ views) also apply to other latent variable models like \emph{Latent Dirichlet Allocation (LDA)} and \emph{Independent Component Analysis (ICA)} that were studied in \cite{AGHKT12}. Anandkumar et al. \cite{AFHKL12,AGHKT12} show how we can obtain third order tensors by looking at ``third'' moments and applying suitable transformations. Applying our robust uniqueness theorem (Theorem~\ref{thm:unique3}) to these $3$-tensors identify the parameters. We omit the details in this version of the paper.
 
\subsection{Exchangeable (single) Topic Model}

The simplest latent variable model that fits the multi-view setting is the Exchangeable Single Topic model as given in \cite{AHK12}. This is a simple bag-of-words model for documents, in which the words in a document are assumed to be exchangeable. This model can be viewed as first picking the topic $r \in [R]$ of the document, with probability $w_r$.  Given a topic $r \in [R]$, each word in the document is sampled independently at random according to the probability distribution $\mu_r \in \R^n$ ($n$ is the dictionary size). In other words, the topic $r \in R$ is a latent variable such that the $\ell$ words in a document are conditionally i.i.d given $r$.

The views in this case correspond to the words in a document. This is a special case of the multi-view model since the distribution of each of the views $j \in [\ell]$ is identical. 
As in \cite{AHK12,AGHKT12}, we will represent the $\ell$ words in a document by indicator vectors $\spc{x}{1}, \spc{x}{2}, \dots, \spc{x}{\ell} \in \{0,1\}^n$ ($\Bd =1$ here).
Hence, the $(i_1,i_2,\dots, i_\ell)$ entry of the tensor $\E{\spc{x}{1} \otimes \spc{x}{2} \otimes \dots \spc{x}{\ell}}$ corresponds to the probability that the first words is $i_1$, the second word is $i_2$, $\dots$ and the $\ell^{th}$ word is $i_\ell$. The following is a simple corollary of Theorem~\ref{thm:MM}.

\begin{corollary}[Polynomial Identifiability of Topic Model]
The following statement holds for any constant $\delta>0$. 
Suppose we are given documents generated by the topic model described above, where the topic probabilities of the $R$ topics are $\{w_r\}_{r \in [R]}$, and the probability distribution of words in a topic $r$ are given by $\mu_r \in \R^n$ (represented as a $n$-by-$R$ matrix $M$). 
If $\forall r \in [R]~ w_r > \gamma$, and if $\krk{\ka}(M)\ge k \ge 2R/ \ell+1$,\\
then there is a algorithm that given any $\eta>0$ uses $N=\spc{\vartheta_{\ref{thm:MM}}}{\ell}\left(\frac{1}{\eta},R,n,\ka,1/\gamma,1\right)$ samples, and finds with high probability $M'$ and $\{w'_r\}_{r \in [R]}$ such that
\begin{equation}
\nrm{M-M'}{F} \le \eta \quad \text{ and }\quad \forall r \in [R], ~ \abs{w_r - w'_r} < \eta 
\end{equation}
Further, this algorithm runs in time $n^{O_{\ell}(R^2 \log (\frac{1}{\eta \gamma})}\left(\frac{n\ka}{\gamma}\right)^{O(\ell)}$ time.
\end{corollary}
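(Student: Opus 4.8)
The plan is to derive this corollary as a direct instantiation of Theorem~\ref{thm:MM}, observing that the exchangeable single topic model is precisely the multi-view mixture model in which all the view matrices coincide. Concretely, conditioned on the topic $r$, the $\ell$ words $\spc{x}{1},\dots,\spc{x}{\ell}$ of a document are independent $\{0,1\}$-indicator vectors, each drawn from the word distribution $\mu_r$. Thus we are in the setting of Def~\ref{def:MM} with $\spc{M}{1}=\spc{M}{2}=\dots=\spc{M}{\ell}=M$, the same mixing weights $\{w_r\}$, and entry bound $\Bd=1$ (the views are indicator vectors). Each document, read as an $\ell$-tuple of words, supplies one multi-view sample.

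First I would verify the two hypotheses of Theorem~\ref{thm:MM}. Condition (a) is immediate from $w_r>\gamma$. For condition (b), since every view matrix equals $M$, the assumption $\krk{\ka}(M)\ge k\ge 2R/\ell+1$ yields $\krk{\ka}(\spc{M}{j})\ge k\ge 2R/\ell+1$ for every $j\in[\ell]$, exactly as required. (This is where the choice of number of views enters: one takes $\ell$ large enough that $2R/\ell+1 \le \delta R$, i.e.\ $\ell=\Theta(1/\delta)$, so that the hypothesis $\krk{\ka}(M)\ge \delta R$ suffices.)

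Next I would invoke Theorem~\ref{thm:MM} with $\Bd=1$. It produces, using $N=\spc{\vartheta_{\ref{thm:MM}}}{\ell}(1/\eta,R,n,\ka,1/\gamma,1)$ documents, estimates $\{\spc{\tM}{j}\}_{j\in[\ell]}$ and $\{\tw_r\}_{r\in[R]}$ and a \emph{single} permutation (common to all $j$) under which $\nrm{\spc{M}{j}-\spc{\tM}{j}}{F}\le\eta$ for all $j$ and $\abs{w_r-\tw_r}<\eta$ for all $r$. Because each $\spc{M}{j}=M$, every $\spc{\tM}{j}$ is already an $\eta$-accurate estimate of $M$ under the relabelling of mixtures, so I would simply set $M'=\spc{\tM}{1}$ and $w'_r=\tw_r$; these satisfy the claimed bounds $\nrm{M-M'}{F}\le\eta$ and $\abs{w_r-w'_r}<\eta$.

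Finally, the sample-complexity and running-time bounds follow by substitution: plugging $\Bd=1$ into the time bound $\exp(R^2\ell^2[2^{2\ell}\log(R\ell/\eta)+\ell\log(n\cdot\ka\Bd/\gamma)])\poly(n)$ of Theorem~\ref{thm:MM} and treating $\ell$ as a constant yields the stated form $n^{O_\ell(R^2\log(1/(\eta\gamma)))}(n\ka/\gamma)^{O(\ell)}$. There is essentially no obstacle beyond bookkeeping here, since the corollary is an instance of a theorem already proved; the single point meriting a line of justification is that the per-view estimates collapse into one matrix $M'$, which is immediate because all view matrices are identical and the output permutation is shared across views.
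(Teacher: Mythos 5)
Your proposal is correct and follows essentially the same route as the paper, which states this result as an immediate instance of Theorem~\ref{thm:MM}: the exchangeable topic model is the multi-view model with $\spc{M}{1}=\dots=\spc{M}{\ell}=M$, indicator-vector observations giving $\Bd=1$, and the documents' $\ell$ words serving as the conditionally i.i.d.\ views. Your added remark that the shared permutation across views lets the per-view estimates collapse to a single $M'$ is the right (and only) point needing justification, and it matches the paper's intent.
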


\subsection{Hidden Markov Models}\label{sec:hmms}
The next latent variable model that we consider are (discrete) Hidden Markov Model which is extensively used in speech recognition, image classification, bioinformatics etc. We follow the same setting as in \cite{AMR09}: there is a hidden state sequence $Z_1,Z_2,\dots,Z_m$ taking values in $[R]$, that forms a stationary Markov chain $Z_1 \rightarrow Z_2 \rightarrow \dots \rightarrow Z_m$ with transition matrix $P$ and initial distribution $w=\{w_r\}_{r \in [R]}$ (assumed to be the stationary distribution). The observation $X_t$ is from the set of discrete events\footnote{in general, we can also allow $x_t$ to be certain continuous distributions like multivariate gaussians} $\{1,2,\dots,n\}$ and it is represented by an indicator vector in $\spc{x}{t} \in \R^n$. Given the state $Z_t$ at time $t$, $X_t$ (and hence $\spc{x}{t}$) is conditionally independent of all other observations and states. The matrix $M$ (of size $n \times R$) represents the probability distribution for the observations: the $r^{th}$ column $M_r$ represents the probability distribution conditioned on the state $Z_t=r$ i.e.
$$\forall r \in [R], \forall j \in [n], \quad \Prb{X_j = i \vert Z_j = r}= M_{ir}.$$
The HMM model described above is shown in Fig.~\ref{fig:hmm1}.

\begin{figure}
\centering
\begin{minipage}{.5\textwidth}
  \centering
  \includegraphics[width=0.7\linewidth]{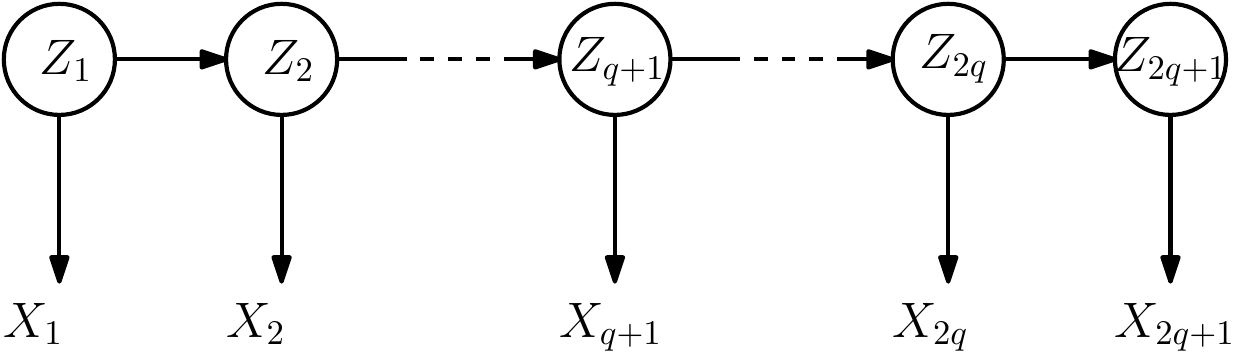}
  \caption{An HMM with $2q+1$ time steps. }
  \label{fig:hmm1}
\end{minipage}%
\begin{minipage}{.5\textwidth}
  \centering
  \includegraphics[width=\linewidth]{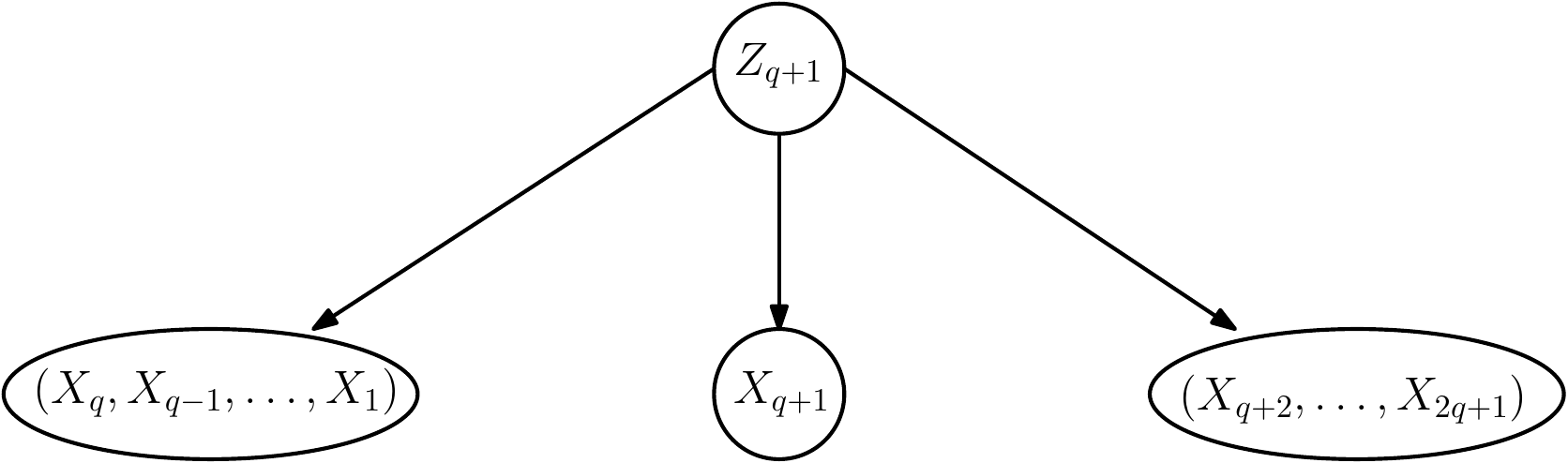}
  \caption{Embedding the HMM into the Multi-view model}
  \label{fig:hmm2}
\end{minipage}
\end{figure}

\begin{corollary}[Polynomial Identifiability of Hidden Markov models]\label{corr:hmm}
The following statement holds for any constant $\delta>0$. 
Suppose we are given a Hidden Markov model as described above, with parameters satisfying :
\begin{enumerate}[(a)]
\item The stationary distribution $\{w_r\}_{r \in [R]}$ has $\forall r \in [R]~ w_r > \gamma_1$,
\item The observation matrix $M$ has $\krk{\ka}(M)\ge k \ge \delta R$,
\item The transition matrix $P$ has minimum singular value $\sigma_{R}(P) \ge \gamma_2$,
\end{enumerate}
then there is a algorithm that given any $\eta>0$ uses $N=\spc{\vartheta_{\ref{thm:MM}}}{\frac{1}{\delta}+1}\left(\frac{1}{\eta},R,n,\ka,\frac{1}{\gamma_1 \gamma_2} \right)$ samples \\
of $m=2\lceil \frac{1}{\delta} \rceil +3$ consecutive observations (of the Markov Chain), 
and finds with high probability, $P',M'$ and $\{\tw_r\}_{r \in [R]}$ such that
\begin{equation}
\nrm{M-M'}{F} \le \eta, \quad \nrm{P-P'}{F} \le \eta ~ \text{ and } \forall r \in [R], ~ \abs{w_r - \tw_r} < \eta 
\end{equation}
Further, this algorithm runs in time $n^{O_{\delta}(R^2 \log (\frac{1}{\eta \gamma_1}))}\left(n \cdot \frac{\ka}{\gamma_1 \gamma_2}\right)^{O_\delta(1)}$ time.
\end{corollary}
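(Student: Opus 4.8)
The plan is to reduce the HMM to the multi-view mixture model of Definition~\ref{def:MM} and then invoke Theorem~\ref{thm:MM} (together with the low-rank decomposition algorithm of Section~\ref{sec:algo}), the extra work being (i) arranging the observations so that the views are genuinely conditionally independent while having large robust Kruskal rank, and (ii) reading off $P$ from the recovered view-means. Concretely, I would fix the central time step $c$ of the window of $m=2\lceil 1/\delta\rceil+3$ consecutive observations and condition on the hidden state $Z_c$. The Markov property guarantees that the block of past observations $(X_1,\dots,X_{c-1})$, the central observation $X_c$, and the block of future observations $(X_{c+1},\dots,X_m)$ are mutually conditionally independent given $Z_c$, so treating these three (flattened) blocks as the views of a multi-view model yields an order-$3$ tensor whose rank-$R$ decomposition has the entries of $w$ as weights and the block-means as factors.

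The key point is that grouping boosts the robust Kruskal rank. A direct computation using stationarity shows that the conditional mean matrix of the future block is a Khatri--Rao product of matrices of the form $MP^j$, post-multiplied by a full-rank matrix built from $P$; e.g. for a block of size two it equals $(M \odot MP)\,P$. By condition (c) the matrix $P$ is full rank, so each $MP^j$ has the same robust Kruskal rank $k\ge \delta R$ as $M$; hence by Lemma~\ref{lem:krprod} a block of size $b=\lceil 1/\delta\rceil+1$ produces a factor of robust Kruskal rank $\min\{bk-(b-1),R\}=R$. Together with the center (whose Kruskal rank is $k\ge 2$), the three views satisfy the Kruskal condition $R+k+R\ge 2R+2$ of Theorem~\ref{thm:unique3}. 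I would then estimate the order-$m$ empirical moment tensor to inverse-polynomial error from $N=\spc{\vartheta_{\ref{thm:MM}}}{1/\delta+1}$ samples (the block size $b=\Theta(1/\delta)$ governs the effective order, and hence the superscript), reshape it into the order-$3$ tensor above, run the algorithm of Section~\ref{sec:algo} to obtain an $O(\eps)$-close bounded rank-$R$ decomposition, and apply the robust uniqueness theorem to conclude that the recovered factors agree with the true block-means up to a single common permutation $\Pi$ and small Frobenius error.

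From here the parameters are extracted as in the proof of Theorem~\ref{thm:MM}: the central mode directly gives $M$ (up to $\Pi$ and normalization of columns to probability vectors, via Lemma~\ref{lem:l1error}), and the recovered mixture weights give the stationary distribution $w$. To recover $P$ I would exploit the Khatri--Rao structure of the recovered block factor: each of its columns is (close to) the rank-$1$ tensor $M_{\cdot r}\opr (MP)_{\cdot r}\opr (MP^2)_{\cdot r}\opr\cdots$, so decomposing each column as a rank-$1$ tensor recovers the vectors $(MP^j)_{\cdot r}$ for $j=0,\dots,b-1$, all in the common ordering given by $\Pi$. Stacking these yields the observability-type matrix $[\,M;MP;\cdots;MP^{b-1}\,]$, which by the same Kruskal-rank boost has full column rank $R$; the shift relation $[\,M;\cdots;MP^{b-2}\,]\,P=[\,MP;\cdots;MP^{b-1}\,]$ then determines $P$ by a well-conditioned least-squares solve (well-conditioned because of condition (c)).

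I expect the recovery of $P$ to be the main obstacle. Unlike the non-degenerate ($n\ge R$) setting, here $M$ need not have full column rank, so $P$ cannot be read off by inverting $M$ from a single pair $(M,MP)$; it is precisely the block/observability construction, powered by the full-rankness of $P$ (condition (c)) and a block size $\Theta(1/\delta)$, that makes $P$ identifiable at all. The delicate quantitative part is tracking how the inverse-polynomial error in the moment tensor propagates: through robust uniqueness (Theorem~\ref{thm:unique3}), through the per-column rank-$1$ extractions, and through the final pseudo-inverse, where the amplification is governed by $\gamma_1=\min_r w_r$, $\gamma_2=\sigma_R(P)$ and the conditioning $\ka$ of the blocks. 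Choosing the target tensor error as an appropriate inverse polynomial (absorbed into $\spc{\vartheta_{\ref{thm:MM}}}{1/\delta+1}$) makes all these losses at most $\eta$, yielding the stated sample and running-time bounds.
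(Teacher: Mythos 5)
Your overall plan --- centering a window of $m=2\lceil 1/\delta\rceil+3$ observations, conditioning on the middle hidden state, treating (past block, middle observation, future block) as three conditionally independent views, boosting the robust Kruskal rank of the blocks via Lemma~\ref{lem:krprod} and the full-rankness of $P$, and then running the algorithm of Section~\ref{sec:algo} followed by the robust uniqueness theorem --- is exactly the paper's reduction (following Allman et al.). However, your step for recovering $P$ contains a genuine error. You claim that each column of the recovered future-block factor is (close to) the rank-$1$ tensor $M_{\cdot r}\otimes (MP)_{\cdot r}\otimes (MP^2)_{\cdot r}\otimes\cdots$, and propose to extract the vectors $(MP^j)_{\cdot r}$ by per-column rank-$1$ decomposition. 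This is false, and your own formula shows why: for a block of size two the factor is $(M\odot MP)P$, so its $r$th column is $\sum_s P_{sr}\, M_s\otimes (MP)_s$ --- a mixture of $R$ rank-$1$ terms, not a rank-$1$ tensor. Probabilistically, the observations \emph{within} the future block are not conditionally independent given only the central state $Z_c$; they are coupled through the intermediate hidden states, so the joint conditional mean is a sum over hidden-state paths, not an outer product of the marginal means. (For the same reason, your structural claim that the block factor is a Khatri--Rao product of the matrices $MP^j$ post-multiplied by a single matrix fails for blocks of size $\ge 3$; the correct statement, which the paper uses for the Kruskal-rank boost, is the recursion $C^{(j)}=(C^{(j-1)}\odot M)P$, alternating Khatri--Rao with $M$ and post-multiplication by the well-conditioned $P$. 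Your final rank bound $\min\{bk-(b-1),R\}=R$ is still correct, but via this recursion, not via your factorization.)

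The paper avoids your broken step with a marginalization trick: since every column of $M$ is a probability vector summing to $1$, summing the appropriate rows of $C=(D\odot M)P$ (where $D=C^{(q-1)}$) collapses the $M$-coordinate and yields $DP$; running the entire recovery pipeline once more with window parameter $q-1$ (which still satisfies the Kruskal condition, since $(q-1)k\ge R$) yields $D$ itself; and since $D$ is well-conditioned, $P$ is then recovered from the pair $(D,DP)$ by a stable linear solve. Your observability-matrix idea (stacking $M, MP,\dots,MP^{b-1}$ and using the shift relation) could in principle be repaired in the same spirit --- marginalizing the recovered block factor over all but one coordinate does yield the columns of $MP^j$, because the marginal conditional means, unlike the joint ones, are exactly $(MP^j)_{\cdot r}$ --- but as written, the rank-$1$ extraction step fails and with it the construction of your observability matrix, so the proof of the $P$-recovery (the part you yourself correctly identified as the main obstacle) does not go through.
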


\begin{proof}[Proof sketch]
The proof follows along the lines of Allman et al \cite{AMR09}, so we only sketch the proof here. 
We now show to cast this HMM into a multi-view model (Def.~\ref{def:MM}) using a nice trick of \cite{AMR09}. We can then apply Theorem~\ref{thm:MM} and prove identifiability (Corollary~\ref{corr:hmm}). We will choose $m=2q+1$ where $q=\lceil \frac{1}{\delta} \rceil+1$, and then use the hidden state $Z_{q+1}$ as the latent variable $h$ of the Multi-view model.
We will use three different views ($\ell=3$) as shown in Fig.~\ref{fig:hmm2}: the first view $A$ comprises the tuple of observations $(X_q,X_{q-1},\dots,X_1)$ (ordered this way for convenience), the second view $B$ is the observation $X_{q+1}$, while the third view $C$ comprises the tuple $V_3=(X_{q+2},X_{q+3},\dots,X_{2q+1})$. This fits into the Multi-view model since the three views are conditionally independent given the latent variable $h=Z_{q+1}$. 

Abusing notation a little, let $A, B, C$ be matrices of dimensions $n^q \times R, n \times R, n^q \times R$ respectively. They denote the conditional probability distributions as in Definition~\ref{def:MM}. For convenience, let $\tP = \diag(w) P^T \diag(w)^{-1}$, which is the ``reverse transition'' matrix of the Markov chain given by $P$. We can now write the matrices $A, B, C$ in terms of $M$ and the transition matrices. The matrix product $X \odot Y$ refers to the Khatri-Rao product (Lemma~\ref{lem:krprod}). Showing that these are indeed the transition matrices is fairly straightforward, and we refer to Allman et al.~\cite{AMR09} for the details. 

\begin{align} \label{eq:hmm}
A &= ((\dots (M\tP) \odot M)\tP) \odot M) \dots \tP) \odot M) \tP \\
B &= M \\
C &= ((\dots (MP) \odot M)P) \odot M) \dots P) \odot M) P
\end{align}

(There are precisely $q$ occurrences of $M, P$ (or $\tP$) in the first and third equalities). Now we can use the properties of the Khatri-Rao product. For convenience, define $C^{(1)}=MP$, and $C^{(j)} = (C^{(j-1)} \odot M)P$ for $j \ge 2$, so that we have $C = C^{(q)}$. By hypothesis, we have $\krk{\ka}(M) \ge k$, and thus $\krk{\ka_2 \ka}(MP) \ge k$ (because $P$ is a stochastic matrix with all eigenvalues $\ge \ka_2$). Now by the property of the Khatri-Rao product (Lemma~\ref{lem:krprod}), we have $\krk{(\ka \ka_2)\ka}(C^{(2)}) \ge \min\{R, 2k\}$. We can continue this argument, to eventually conclude that $\krk{\ka'}(C^{(q)}) = \min\{R, qk \}=R$ for $\ka'=\ka^q \gamma_2^{q^2}(qk)^{q/2}$.

Precisely the same argument lets us conclude that $\krk{\ka'}(A) \ge R$, for the $\ka'=\ka^q \gamma_2^{q^2}(qk)^{q/2}$. Now since $\krk{\ka}(B)\ge 2$, we have that the conditions of Theorem~\ref{thm:unique3} hold. Now using the arguments of Theorem~\ref{thm:MM} (here, we use Theorem~\ref{thm:unique3} instead of Theorem~\ref{thm:unique:gen}), we get matrices $A',B',C'$ and weights $w'$ such that
\begin{align*}
\nrm{A'-A}{F} &< \delta \quad \text{ and similarly for } B,C \\
\norm{w'-w} &< \delta 
\end{align*}
for some $\delta = \poly(1/\eta, \dots)$. Note that $M=B$. We now need to argue that we can obtain a good estimate $P'$ for $P$, from $A',B',C'$. This is done in \cite{AMR09} by a trick which is similar in spirit to Lemma~\ref{lem:tensoring}. It  uses the property that the matrix $C$ above is full rank (in fact well conditioned, as we saw above), and the fact that the columns of $M$ are all probability distributions.

Let $D = C^{(q-1)}$, as defined above. Hence, $C= (D \odot M)P$. Now note that all the columns of $M$ represent probability distributions, so they add up to $1$. Thus given $D \odot M$, we can combine (simply add) appropriate rows together to get $D$. Thus by performing this procedure (adding rows) on $C$, we obtain $DP$. Now, if we had performed the entire procedure by replacing $q$ with $(q-1)$ (we should ensure that $(q-1)k \ge R$ for the Kruskal rank condition to hold), we would obtain the matrix $D$. Now knowing $D$ and $DP$, we can recover the matrix $P$, since $D$ is well-conditioned. 
\end{proof}

\noindent\emph{ {\bf Remark:} 
Allman et al.~\cite{AMR09} show identifiability under weaker conditions than Corollary~\ref{corr:hmm} when they have infinite samples. This is because they prove their results for {\em generic} values of the parameters $M, P$ (this formally means their results hold for all $M, P$ except a set of measure zero, but they do not give an explicit characterization). Our bounds are weaker, but hold whenever the $\krk{\ka}(M) \ge \del n$ condition holds. Further, the main advantage is that our result is {\em robust} to noise: the case when we only have finite samples.}

\subsection{Mixtures of Spherical Gaussians}
\newcommand{\mom}{\text{Mom}}
Suppose we have a mixture of $R$ spherical gaussians in $\R^n$, with mixing weights $w_1, w_2, \dots w_R$, means $\mu_1, \mu_2, \dots, \mu_r$, and the common variance $\sigma^2$. Let us denote this mixture distribution by $\calD$, and the $n \times R$ matrix of means by $M$.

We define the $\mu$-tensor of $\ell$th order to be
\[ \mom_\ell := \sum_i w_i \mu_i^{\opr \ell}. \]

The empirical mean $\mu := \mom_1$, and can be estimated by drawing samples $x \sim D$, and computing $\E{x}$. Similarly, we will show how to compute $\mom_\ell$ for larger $\ell$ by computing higher order moment tensors, assuming we know the value of $\sigma$. We can then use the robust Kruskal's theorem (Theorem~\ref{thm:unique:gen}) and the sampling lemma (Lemma~\ref{lem:samplingerror:gaussians}) to conclude the following theorem.

\begin{theorem}\label{thm:gaussians}
Suppose we have a mixture of gaussians given by $\calD$, with hidden parameters $\{w_r\}_{r \in [R]}$ and $M$ (in particular, we assume we know $\sigma$)\footnote{As will be clear, it suffices to know it up to an inverse polynomial error, so from an algorithmic viewpoint, we can ``try all possible'' values.}. Suppose also that $\forall r \in [R]~ w_r > \gamma$, and $\krk{\ka}(M)=k$ for some $k \ge \delta R$.

Then there is a algorithm that given any $\eta>0$ and $\sigma$, uses $N=\spc{\vartheta_{\ref{thm:gaussians}}}{1/\delta}\left(\frac{1}{\eta},R,n,\ka,1/\gamma\right)$ samples drawn from $\calD$, and finds with high probability $M'$ and $\{w'_r\}_{r \in [R]}$ such that
\begin{equation}
\nrm{M-M'}{F} \le \eta \quad \text{ and }\quad \forall r \in [R], ~ \abs{w_r - w'_r} < \eta 
\end{equation}
Further, this algorithm runs in time $n^{O_{\delta}(R^2)}\left(\frac{n\ka}{\gamma}\right)^{O_\delta(1)}$ time.
\end{theorem}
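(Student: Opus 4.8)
The plan is to mirror the proof of Theorem~\ref{thm:MM}: extract from the data a symmetric rank-$R$ tensor whose components encode the hidden parameters, find an approximate low-rank decomposition with the algorithm of Section~\ref{sec:algo}, and invoke the robust uniqueness theorem to certify that the recovered components are close to the truth. The one genuinely new difficulty, compared with the multi-view setting, is that the Gaussian means $\mu_r$ are arbitrary vectors rather than probability distributions, so the $\ell_1$-normalization trick that pinned down the per-column scalings in Theorem~\ref{thm:MM} is unavailable. I would instead break the scaling ambiguity by computing two moment tensors of \emph{different} orders, as explained below.

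First I would form the $\mu$-tensor $\mom_\ell = \sum_r w_r \mu_r^{\opr \ell}$. For a spherical Gaussian $N(\mu_r,\sigma^2 I)$, the raw moment $\E{x^{\opr j}}$ equals $\mu_r^{\opr j}$ plus a fixed linear combination of lower-degree products of $\mu_r$ with copies of $\sigma^2 I$ (the Wick/Hermite corrections). Hence, since $\sigma$ is known, $\mom_\ell$ is a fixed polynomial in the raw moment tensors $\{\E{x^{\opr j}}\}_{j \le \ell}$ and $\sigma^2$; replacing each raw moment by its empirical average over $N$ samples yields an estimate $\widetilde{\mom}_\ell$. Lemma~\ref{lem:samplingerror:gaussians} controls the statistical error, giving $\nrm{\widetilde{\mom}_\ell - \mom_\ell}{F} \le \eps_1$ with $N = \poly(n,R,1/\eta,1/\gamma,\ka)$ samples, where the exponent depends on $\ell$.

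Next, I would write $\mom_\ell = \sum_r v_r^{\opr \ell}$ with $v_r := w_r^{1/\ell}\mu_r$, a symmetric rank-$R$ decomposition that is bounded (since $\nrm{v_r}{2} \le \nrm{\mu_r}{2} \le \rho$ and, because $\krk{\ka}(M)\ge 2$, also $\nrm{v_r}{2} \ge \gamma^{1/\ell}/\ka$). Choosing $\ell = \lceil 2/\delta\rceil + 1$ makes the hypothesis $k \ge \delta R$ imply the Kruskal condition $k \ge \tfrac{2R-1}{\ell}+1$ of Corollary~\ref{corr:unique:sym}. Running the (symmetric variant of the) algorithm of Section~\ref{sec:algo} on $\widetilde{\mom}_\ell$ produces a symmetric rank-$R$ decomposition $\sum_r \tilde v_r^{\opr \ell}$ within $O(\ell \eps_1)$ of $\mom_\ell$, and Corollary~\ref{corr:unique:sym} then certifies that, up to a permutation $\Pi$ and with \emph{no} scaling ambiguity, each $\tilde v_r$ is close to $v_{\pi(r)} = w_{\pi(r)}^{1/\ell}\mu_{\pi(r)}$.

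Finally, to disentangle $w_r$ from $\mu_r$ I would repeat the whole computation at order $\ell+1$ (whose Kruskal condition is weaker, hence also satisfied), obtaining vectors close to $v_r' = w_r^{1/(\ell+1)}\mu_r$. Since $\krk{\ka}(M)\ge 2$ forces the $\mu_r$ to be pairwise non-parallel, I can match the two decompositions robustly by pairing each $\tilde v_r$ with the unique $\tilde v'_s$ parallel to it. For matched pairs the norm ratio satisfies $\nrm{v_r}{2}/\nrm{v_r'}{2} = w_r^{1/\ell - 1/(\ell+1)} = w_r^{1/(\ell(\ell+1))}$, which determines $w_r$ and then $\mu_r = v_r / w_r^{1/\ell}$. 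Because $w_r > \gamma$ and $\nrm{\mu_r}{2} \ge 1/\ka$, these root and division maps are Lipschitz with $\poly(n,\ka,1/\gamma)$ constants, so an inverse-polynomial tensor error propagates to $\eta$-accurate estimates; choosing $\eps_1$ (and hence $N$) accordingly gives the stated bounds, with the brute-force search over $R$-dimensional $\eps$-nets accounting for the $n^{O_\delta(R^2)}$ running time. I expect the main obstacle to be exactly this separation step: it is the piece with no analogue in Theorem~\ref{thm:MM}, and making the cross-order matching and the subsequent nonlinear inversion robust to noise is where the care is needed (the variance-correction moment estimation being the other, more routine, technical point). Indeed one checks that a single symmetric tensor $\mom_\ell$ cannot separate $w_r$ from $\mu_r$, since $(w_r,\mu_r)\mapsto(w_r c^\ell,\mu_r/c)$ leaves it invariant, which is precisely why two orders are required.
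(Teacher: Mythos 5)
Your proposal is correct and follows essentially the same route as the paper: estimate $\mom_\ell$ by subtracting the known-$\sigma$ Wick corrections from empirical raw moments (Lemma~\ref{lem:samplingerror:gaussians}), decompose via the Section~\ref{sec:algo} algorithm, certify with Corollary~\ref{corr:unique:sym}, and break the $(w_r,\mu_r)\mapsto(w_r c^\ell,\mu_r/c)$ scaling ambiguity by comparing two consecutive moment orders. The only cosmetic differences are that the paper uses orders $\ell$ and $\ell-1$ (rather than $\ell$ and $\ell+1$) and extracts $w_r$ via the projection ratio of Lemma~\ref{lem:gaussians:weights} rather than your norm ratio of matched parallel columns; your explicit cross-order matching step is in fact slightly more careful than the paper's sketch, which leaves that alignment implicit.
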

\begin{proof}
This will follow the same outline as Theorem~\ref{thm:MM}. So, we sketch the proof here. 
The theorem works for error polynomial $\vartheta_{\ref{thm:gaussians}}$ being essentially as the same error polynomial in $\vartheta_{\ref{thm:MM}}$.
However, we first need to gain access to an order $\ell$-tensor, where each rank-$1$ term corresponds to a mean $\mu_r$. Hence, we show how to obtain this order-$\ell$ tensor of means, by subtracting out terms involving $\sigma$, by our estimates of moments upto $\ell$. 

Pick $\ell=\lceil \frac{2}{\delta} \rceil+2$. We will use order $\ell$ tensors given by the $\ell^{th}$ moment. 
We will first show how to obtain $\mom_\ell$, from which we learn the parameters.
The computation of $\mom_\ell$ will be done inductively. Note that $\mom_1$ is simply $\E{x}$. Now observe that
\begin{align*}
\E{x^{\opr 2}} = \E{x \opr x} &= \E{\sum_i w_i (\mu_i + \eps_i) \opr (\mu_i + \eps_i)} \\
&= \E{\sum_i w_i \mu_i \opr \mu_i} + \E{\sum_i w_i \eps_i \opr \eps_i} \\
&= \mom_2 + \sigma^2 I.
\end{align*}

We compute $\E{x^{\opr 2}}$ by sampling, and since we know $\sigma$, we can find $\mom_2$ up to any polynomially small error. In general, we have
\begin{align}\label{eq:calc-ml}
\E{x^{\opr \ell}} &= \sum_i w_i \E{ (\mu_i +\eps_i)^{\opr \ell}} \\
&= \sum_i w_i \sum_{x_j \in \{\mu_i, \eps_i\}} \E{ x_1 \opr x_2 \opr \dots \opr x_\ell}.
\end{align} 
The last summation has $2^\ell$ terms. One of them is $\mu_i^{\opr \ell}$, which produces $\mom_\ell$ on the RHS. The other terms have the form $x_1 \opr x_2 \opr \dots \opr x_\ell$, where some of the $x_i$ are $\mu_i$ and the rest $\eps_i$, and there is at least one $\eps_i$.

\newcommand{\mucap}{\widehat{\mu}}
If a term has $r$ terms being $\mu_i$ and $\ell-r$ being $\eps_i$, the tensor obtained is essentially a {\em permutation} of $\mucap(r, \ell) := \mu_i^{\opr r} \opr \eps_i^{\opr (\ell-r)}$. By permutation, we mean that the $(j_1, \dots, j_\ell)$th entry of the tensor would correspond to the $(j_{\pi(1)}, \dots, j_{\pi(\ell)})$th entry of $\mucap(r,\ell)$, for some permutation $\pi$. Thus we focus on showing how to evaluate the tensor $\mucap(r,\ell)$ for different $r, \ell$.

Note that if $\ell-r$ is odd, we have that $\E{\mucap(r, \ell)} =0$. This is because the odd moments of a Gaussian with mean zero, are all zero (since it is symmetric). If we have $\ell-r$ being even, we can describe the tensor $\E{\eps_i^{\ell-r}}$ explicitly as follows. Consider an index $(j_1, \dots, j_{\ell-r})$, and bucket the $j$ into groups of equal coordinates. For example for index $(1,2,3,2)$, the buckets are $\{(1),(22),(3)\}$. Now suppose the bucket sizes are $b_1, \dots, b_t$ (they add up to $\ell-r$). Then the $(j_1, \dots, j_{\ell-r})$th entry of $\eps_i^{\opr (\ell-r)}$ is precisely the product $m_{b_1} m_{b_2} \dots m_{b_t}$, where $m_s$ is the $s$th moment of the univariate Gaussian $\calN(0, \sigma^2)$.

The above describes the entries of $\E{\eps_i^{\ell-r}}$. Now $\E{\mucap(r, \ell)}$ is precisely $\mom_r \opr \E{\eps_i^{\ell-r}}$ (since the $\mu_i$ is fixed). Thus, since we have inductively computed $\mom_r$ for $r < \ell$, this gives a procedure to compute each entry of $\E{\mucap(r, \ell)}$. Thus each of the $2^\ell$ terms in the RHS of \eqref{eq:calc-ml} except $\mom_\ell$ can be calculated using this process. The LHS can be estimated to any inverse polynomial small error by sampling (Lemma~\ref{lem:samplingerror:gaussians}). Thus we can estimate $\mom_\ell$ up to a similar error.

Hence, we can use the algorithm from Section~\ref{sec:algo} and apply Corollary~\ref{corr:unique:sym} to obtain vectors $\{\tu_r\}_{r \in [R]}$ such that
$$\forall r \in [R] \norm{u_r - w^{1/\ell} \mu_r}< \eta.$$
Similarly, applying the same process with $\mom_\ell$ (the Kruskal conditions also hold for $\ell-1$) we get $\eta$-close approximations to $w_i^{1/(\ell-1)} \mu_r$. Now, we appeal to Lemma~\ref{lem:gaussians:weights} to obtain $\{w_r, \mu_r\}_{r \in [R]}.$ 
\end{proof}

\emph{Remark:} Note that the previous proof worked even when the gaussians are not spherical: they just need to have the same known covariance matrix $\Sigma$. 

The following lemma (used in the proof of Theorem~\ref{thm:gaussians}) allows us to recover the weights after obtaining estimates to $w_r^{1/\ell} \mu_r$ and $w_r^{1/(\ell-1)} \mu_r$ through decompositions for the $\ell-1$ and the $\ell$ moment tensors.
\begin{lemma}[Recovering Weights] \label{lem:gaussians:weights}
For every $\delta'>0,w>0,\minl>0,\ell \in \bbN$, $\exists \delta =\Omega\big(\frac{\delta_1 w^{1/(\ell-1)}}{\ell^2 \minl}\big)$ such that, if $\mu \in \R^n$ be a vector with length $\norm{\mu} \ge \minl$, and suppose
\[ \norm{v-w^{1/\ell} \mu} < \delta \quad \text{ and } \norm{u-w^{1/(\ell-1)} \mu} < \delta. \label{eq:weights:1}\]
Then, 
\begin{equation}
\qquad \abs{\left(\frac{\abs{\Iprod{u}{v}}}{\norm{u}}\right)^{\ell(\ell-1)} - w } < \delta'
\end{equation}
\end{lemma}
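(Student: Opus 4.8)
The plan is to reduce the lemma to a one-dimensional perturbation estimate. In the noiseless case $u = w^{1/(\ell-1)}\mu$ and $v = w^{1/\ell}\mu$ are positive scalar multiples of the \emph{same} vector $\mu$, hence exactly parallel, so the scalar extracted from $u,v$ collapses to a ratio of the two scaling factors. Writing $a := w^{1/(\ell-1)}$ and $b := w^{1/\ell}$, the algebraic identity that drives the proof is $\big(a/b\big)^{\ell(\ell-1)} = \big(w^{1/(\ell-1)-1/\ell}\big)^{\ell(\ell-1)} = w^{\ell-(\ell-1)} = w$; that is, the $\ell(\ell-1)$-th power of $\norm{u}/\norm{v}$ recovers $w$ exactly, and by parallelism this coincides with the $\ell(\ell-1)$-th power of $\abs{\Iprod{u}{v}}/\norm{v}^2$. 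The entire content of the lemma is therefore the \emph{stability} of this recovery under the $\delta$-perturbations of $u$ and $v$, together with the fact that it survives raising to the large power $\ell(\ell-1)$.

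First I would write $u = a\mu + e_u$ and $v = b\mu + e_v$ with $\norm{e_u},\norm{e_v} < \delta$, and combine $\norm{\mu}\ge \minl$ with the triangle inequality to record $\abs{\norm{u} - a\norm{\mu}} \le \delta$ and $\abs{\norm{v} - b\norm{\mu}} \le \delta$. These bound the base ratio $\norm{u}/\norm{v}$ relative to its exact value $a/b$: the relative error is $O\!\big(\delta/(\min\{a,b\}\,\minl)\big)$. In the relevant regime $0<w<1$ we have $a = w^{1/(\ell-1)} \le w^{1/\ell} = b$, so $\min\{a,b\} = w^{1/(\ell-1)}$ and the base is known to relative accuracy $\eta := O\!\big(\delta/(w^{1/(\ell-1)}\minl)\big)$. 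It is exactly this expression that explains why the admissible $\delta$ carries the factor $w^{1/(\ell-1)}$.

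Next I would propagate $\eta$ through the exponentiation using the elementary fact that a positive quantity known to relative accuracy $\eta \le 1/(2m)$ has its $m$-th power known to relative accuracy $O(m\eta)$ (from $(1+\eta)^m \le 1 + O(m\eta)$ and the matching lower bound $(1-\eta)^m \ge 1 - O(m\eta)$). Taking $m = \ell(\ell-1) = O(\ell^2)$, the recovered value has relative error $O(\ell^2\eta)$; since the exact value is $w$, this becomes an absolute error $O(\ell^2\,\eta\,w) = O\!\big(\ell^2\,\delta\,w^{1-1/(\ell-1)}/\minl\big)$, which for $w<1$ is at most $O(\ell^2\delta/\minl)$. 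Requiring this to be below $\delta'$, together with the constraint $\eta \le 1/(2m)$ that makes the power estimate valid, pins down an admissible threshold for $\delta$ of the claimed shape (polynomial in $\delta'$, $1/\ell$, $w^{1/(\ell-1)}$ and the geometry of $\mu$), which is all the lemma requires.

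The main obstacle is precisely the exponentiation step: because the exponent $\ell(\ell-1)$ is large, a naive additive bound on the base would be amplified out of control, so the argument must be organised around \emph{relative} (multiplicative) error throughout. This forces me to certify that the base ratio stays bounded away from $0$ and $\infty$, which is exactly where $\norm{\mu}\ge\minl$ and the strict positivity of $w$ enter. A secondary, routine point is to confirm that the inner-product expression agrees with the norm-ratio expression to the same order: the angle between $u$ and $v$ is $O(\eta)$, so Cauchy--Schwarz $\abs{\Iprod{u}{v}} \le \norm{u}\norm{v}$ holds with near-equality and introduces no extra loss.
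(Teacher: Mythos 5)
Your proposal is correct and takes essentially the same route as the paper's own proof: both arguments estimate the scalar ratio between $u$ and $v$ (your factorization $\cos\theta\cdot\norm{u}/\norm{v}$ is exactly the paper's projection coefficient $\beta$ in the decomposition $v=\beta u+\eps \up$, up to the direction of projection, i.e., inversion of a quantity bounded away from zero), show that it is within $O\big(\delta/(w^{1/(\ell-1)}\minl)\big)$ of $w^{\pm 1/(\ell(\ell-1))}$, and then pay the factor $\ell(\ell-1)\le \ell^2$ when raising to that power, using $w\le 1$ to keep the derivative of $x^{\ell(\ell-1)}$ near the base point bounded by $\ell^2$ --- your relative-error bookkeeping versus the paper's additive bound on $\beta$ is a cosmetic difference, not a different method. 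One remark: your base $\abs{\Iprod{u}{v}}/\norm{v}^2$ silently corrects what is evidently a typo in the printed statement (whose base $\abs{\Iprod{u}{v}}/\norm{u}$ is not even scale-invariant in $\mu$ and cannot equal $w^{1/(\ell(\ell-1))}$ in the noiseless case), and your corrected reading is the one consistent with the paper's own computation, so this is a fix rather than a gap.
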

\begin{proof}
From \eqref{eq:weights:1} and triangle inequality, we see that 
$$\norm{w^{-1/\ell} v-w^{-1/(\ell-1)}u} \le \delta(w^{-1/(\ell)}+w^{-1/(\ell-1)})= \delta_1.$$
Let $\alpha_1=w^{-1/(\ell-1)}$ and $\alpha_2=w^{-1/\ell}$.
Suppose $v=\beta u + \eps \up$ where $\up$ is a unit vector perpendicular to $u$. Hence $\beta=\Iprod{v}{u}/\norm{u}$.
\begin{align*}
\norm{\alpha_1 v- \alpha_2 u}^2 &= \norm{(\beta \alpha_1-\alpha_2 )u +\alpha_1\eps \up}< \delta_1^2\\
(\beta \alpha_1 - \alpha_2)^2 \norm{u}^2 + \alpha_1^2 \eps^2 &\le \delta_1^2\\
\abs{\beta - \frac{\alpha_2}{\alpha_1}} &< \frac{\delta_1}{ \minl}
\end{align*}
Now, substituting the values for $\alpha_1,\alpha_2$, we see that
$$\abs{\beta - w^{\frac{1}{(\ell-1)} - \frac{1}{\ell}} } < \frac{\delta_1}{\minl}.$$
\begin{align*}
\abs{\beta - w^{1/(\ell(\ell-1))}} &< \frac{\delta}{w^{1/(\ell-1)} \minl}\\ 
\abs{\beta^{\ell (\ell-1)} - w } &\le \delta' \quad \text{when } \delta \ll \frac{\delta' w^{1/(\ell-1)}}{\ell^2 \minl}
\end{align*}
\end{proof}

The following Corollary establishes polynomial identifiability for mixtures of uniform spherical gaussians under milder conditions than \cite{HK12} (in particular, the means need not be in general position). The difference now is that we do not assume we know $\sigma$.

\begin{corollary}\label{corr:gaussians:unknown}
Suppose we have a mixture $\calD$ of $R$-gaussians in $n$-dimensions with $n \ge R$, with hidden parameters $\{w_r\}_{r \in [R]}$, $M$ and $\sigma$. Suppose $\forall r \in [R]~ w_r > \gamma$, and that $\krk{\ka}(M)=k$ for some $k \ge \delta R$.

Then there is a algorithm that given any $\eta>0$, uses $N=\vartheta_{\ref{thm:MM}}^{1/\delta}\left(\frac{1}{\eta},R,n,\ka,1/\gamma\right)$ samples drawn from $\calD$, and finds with high probability $\sigma'$, $M'$ and $\{w'_r\}_{r \in [R]}$ such that
\begin{equation}
\nrm{M-M'}{F} \le \eta \quad \text{ and }\quad \forall r \in [R], ~ \abs{w_r - w'_r} < \eta \quad \text{ and } \abs{\sigma - \sigma'} < \eta
\end{equation}
Further, this algorithm runs in time $n^{O_{\delta}(R^2)}\left(\frac{n\ka}{\gamma}\right)^{O_\delta(1)}$ time.
\end{corollary}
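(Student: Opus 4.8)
The plan is to reduce to Theorem~\ref{thm:gaussians}, whose only use of $\sigma$ is in the moment-subtraction step that extracts the $\mu$-tensors $\mom_{\ell-1}$ and $\mom_\ell$ from the empirical moments $\E{x^{\opr j}}$. That step applies a \emph{fixed} polynomial map (of degree $\le \ell/2$ in $\sigma^2$, with bounded coefficients) to the lower moments, so the extracted tensors are Lipschitz in $\sigma^2$: replacing the true $\sigma$ by any $\tilde\sigma$ with $\abs{\tilde\sigma^2-\sigma^2}$ inverse-polynomially small perturbs $\mom_{\ell-1},\mom_\ell$ by an inverse-polynomial amount in Frobenius norm. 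Hence it suffices to know $\sigma$ to within inverse-polynomial accuracy, and we can afford to guess it. First I would bound the search range: from $\E{\norm{x}^2}=\sum_r w_r(\norm{\mu_r}^2+n\sigma^2)$ we get $\sigma^2\le \E{\norm{x}^2}/n$, and the right-hand side is estimated to inverse-poly accuracy by sampling, giving a bound $B=\poly(n,\dots)$. I form a grid on $[0,B]$ with inverse-polynomial spacing $\tau$, chosen so the induced perturbation of the moment tensors falls below the tolerance $\eps_1$ required by Theorem~\ref{thm:gaussians}; the grid has $\poly$ many points, so the running time grows by only a $\poly$ factor, yielding the claimed $n^{O_\delta(R^2)}(n\ka/\gamma)^{O_\delta(1)}$ bound.

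For each candidate $\tilde\sigma$ on the grid I run the algorithm of Theorem~\ref{thm:gaussians} verbatim with $\tilde\sigma$ in place of $\sigma$, producing candidate estimates $\tilde\sigma, M', \{w'_r\}$. I then score each candidate by how well the mixture it defines reproduces the data: reconstruct the predicted moments $\E_{\mathrm{pred}}[x^{\opr j}]$ for $j\le \ell$ analytically from $(M',w',\tilde\sigma)$ using the same moment formulas, and compute the total fit error $\sum_{j\le \ell}\nrm{\widehat{\E{x^{\opr j}}}-\E_{\mathrm{pred}}[x^{\opr j}]}{F}$ against the empirically estimated moments. The output is the candidate of smallest fit error, together with its $\tilde\sigma$ as the estimate $\sigma'$.

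Completeness is immediate: the grid contains some $\tilde\sigma^\ast$ within $\tau$ of the true $\sigma$, and by the Lipschitz observation the extracted $\mom_{\ell-1},\mom_\ell$ are accurate to within $\eps_1$, so the analysis of Theorem~\ref{thm:gaussians} (low-rank tensor approximation from Section~\ref{sec:algo}, Corollary~\ref{corr:unique:sym}, and the weight recovery of Lemma~\ref{lem:gaussians:weights}) yields $\eta$-accurate parameters and an inverse-poly fit error for this candidate.

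The main obstacle is \emph{soundness}: ruling out a spurious $\tilde\sigma$ far from $\sigma$ that still achieves small fit error. Concretely I must show that the truncated moment map $(M,w,\sigma)\mapsto(\E{x^{\opr j}})_{j\le\ell}$ is robustly injective on the admissible parameter set. I would argue in two stages. Matching all moments up to order $\ell$ implies, through the triangular structure of the moment formulas, that the candidate's $\mu$-tensors $\mom_\ell,\mom_{\ell-1}$ agree with the true ones up to an error controlled by $\abs{\sigma^2-\tilde\sigma^2}$; applying Corollary~\ref{corr:unique:sym} to $\mom_\ell$ then forces $\{w_r'^{1/\ell}\mu'_r\}$ to be a small perturbation of $\{w_r^{1/\ell}\mu_r\}$, and likewise from $\mom_{\ell-1}$. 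The coupling with $\sigma$ is what must finally pin it down: the two families $w_r'^{1/\ell}\mu'_r$ and $w_r'^{1/(\ell-1)}\mu'_r$ extracted under a wrong $\tilde\sigma$ cannot simultaneously be near-parallel with the ratio demanded by Lemma~\ref{lem:gaussians:weights}, since the orders $\ell$ and $\ell-1$ subtract \emph{different} polynomials in $\sigma^2$ and hence are corrupted differently. Consistency across the two orders, together with the second-moment identity $\E{x\opr x}=\mom_2+\sigma^2 I$, forces $\abs{\tilde\sigma^2-\sigma^2}$ to be inverse-poly small, so $\tilde\sigma$ $\eta$-estimates $\sigma$ and the recovered $M',w'$ inherit the guarantees of Theorem~\ref{thm:gaussians}. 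Making this cross-order rigidity quantitative is the crux of the argument.
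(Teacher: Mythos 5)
Your completeness half is fine: the extracted $\mu$-tensors do depend Lipschitz-ly on $\sigma^2$ through the moment-subtraction formulas, so Theorem~\ref{thm:gaussians} tolerates an inverse-polynomially accurate guess for $\sigma$, and a polynomial grid over a sampled bound on $\sigma^2$ costs only a polynomial factor. The genuine gap is exactly where you flag it: soundness of the moment-fit scoring. Nothing in your sketch establishes the ``cross-order rigidity'' claim that a spurious $\tilde\sigma$ achieving small fit error forces $\abs{\tilde\sigma^2-\sigma^2}$ to be inverse-polynomially small. The robust uniqueness machinery cannot be invoked the way you suggest: under a wrong $\tilde\sigma$, the extracted order-$\ell$ tensor is the true $\mu$-tensor plus a structured correction (combinations of lower-order $\mu$-tensors tensored with copies of $I$, with coefficients polynomial in $\sigma^2-\tilde\sigma^2$), and Corollary~\ref{corr:unique:sym} only compares two rank-$R$ bounded decompositions of (nearly) the \emph{same} tensor whose reference decomposition satisfies the Kruskal condition; it says nothing about rank-$R$ fits to this corrupted tensor, nor does it rule out that some bounded candidate $(M',w',\tilde\sigma)$ with $\tilde\sigma$ far from $\sigma$ reproduces all empirical moments up to order $\ell$ well. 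That robust injectivity of the truncated moment map jointly in $(M,w,\sigma)$ is a nontrivial identifiability statement in its own right, and you concede you have not proved it.

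The paper sidesteps all of this with a short trick from \cite{HK13}, which is also where the hypothesis $n\ge R$ enters --- a hypothesis your argument never uses, which is a telltale sign the intended route was missed. Consider the centered second moment $A=\E{(x-\E{x})\opr (x-\E{x})} = \sum_r w_r(\mu_r-\bar\mu)\opr(\mu_r-\bar\mu) + \sigma^2 I$, where $\bar\mu=\sum_r w_r\mu_r$. The $R$ centered means $\mu_r-\bar\mu$ lie in an $(R-1)\le n-1$ dimensional subspace, so the first summand is rank deficient and the $n$th eigenvalue of $A$ is exactly $\sigma^2$. Estimating this eigenvalue from samples therefore recovers $\sigma$ to any inverse-polynomial accuracy directly --- no grid search and no soundness argument needed --- after which Theorem~\ref{thm:gaussians} applies essentially verbatim (your Lipschitz observation is implicitly what justifies running it with an approximate $\sigma$). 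If you replace your scoring step with this eigenvalue estimate, or even just use it as the soundness certificate inside your grid search, your proof closes; as written, it is incomplete.
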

\begin{proof}[Proof sketch]
We first obtain $\sigma$ to inverse polynomial accuracy, using an elegant trick of \cite{HK13}, and then apply Theorem~\ref{thm:gaussians} to identify the parameters $M$ and weights $\{w_r\}_{r \in [R]}$. 

To estimate $\sigma$, we consider the matrix $A=\E{(x - \mom_1)\otimes (x-\mom_1)}$, and note that the estimated $n^{th}$ singular value $\sigma_n(A) \in [\sigma-\eta,\sigma-\eta]$ after averaging enough samples (see Theorem 1 in \cite{HK13} for details). This is because the $R$ vectors $\mu_i - \mom_1$ live in a $(R-1)\le n-1$ dimensional space. Hence, we can obtain $\sigma$ to any inverse polynomial accuracy (\cite{HK13} for details). This allows to recover the parameters using Theorem~\ref{thm:gaussians}. We omit the details in this version.
\end{proof}

\section{Discussion and Open Problems}
\label{sec:openproblems}
The most natural open problem arising from our work is that of computing approximate small rank decompositions efficiently. While the problem is
NP hard in general, we suspect that {\em well conditioned} assumptions regarding robust Kruskal ranks being sufficiently large, as in the uniqueness theorem (Theorem~\ref{thm:unique3}) for decompositions of $3$-tensors for instance, could help. In particular,
\begin{question}\label{qn:algos}
Suppose $T$ is a $3$-tensor, that is promised to have a rank $R$ decomposition $\tens{A}{B}{C}$, with $k_A = \krk{\tau}(A)$ (similarly $k_B$ and $k_C$) satisfying $k_A + k_B + k_C \ge 2R+2$.
Can we find the decomposition $A, B, C$ (up to a specified error $\eps$) in time polynomial in $n, R$ and  $1/\eps$?
\end{question}
In the special case that the decomposition $\tens{A}{B}{C}$ is known to be orthogonal (i.e., the columns of $A, B, C$ are mutually orthogonal), which in particular implies
$n \ge R$, then iterative methods like power iteration \cite{AGHKT12}, and ``alternating least squares'' (ALS) \cite{CLdA09} \footnote{This is the method of choice in practice for computing tensor decompositions.} converge in polynomial time.

A result in the spirit of finding weaker sufficient conditions for uniqueness was by Chiantini and Ottaviani~\cite{COttaviani12}, who use ideas from algebraic geometry (in particular a notion called {\em weak
defectivity}), to prove that {\em generic} $n \times n \times n$ tensors of rank $k \le n^2/16$ have a unique decomposition (here the word `generic' is meant to mean all except a measure zero set of rank $k$ tensors, which they characterize in terms of weak defectivity). Note that this is much stronger
than the bound obtained by Kruskal's theorem, which is roughly $3n/2$. It is
also roughly the best one can hope for, since every $3$-tensor has rank at most $n^2$ (and a random tensor has rank $\ge n^2/2$). It would be very interesting
to prove robust versions of their results, as it would imply identifiability for a much larger range of parameters in the models we consider.

A third question is that of certifying that a given decomposition is unique.
Kruskal's rank condition, while elegant, is not known to be verifiable in
polynomial time. Given an $n \times R$ matrix, certifying that every $k$ columns are linearly independent is known to be NP-hard \cite{Khach,PT12}. 
Even the average case version i.e. when the matrix is random with independent gaussian entries, has received much attention as it is related to certifying the Restricted Isometry Property (RIP), which plays a key role in compressed sensing \cite{CTao05,KZ11}. It is thus an fascinating open question to find uniqueness (and robust uniqueness) theorems which involve parameters that can be computed efficiently.

From the perspective of learning latent variable models, it would be very interesting to obtain efficient learning algorithms with polynomial running times for the settings considered in Section~\ref{sec:applications}. Recall that we give algorithms which need only polynomial samples (in the dimension $n$, and number of mixtures $R$), when the parameters satisfy the robust Kruskal conditions. Note that an affirmative answer to Question~\ref{qn:algos} (and its higher order analogue) would already imply such efficient learning algorithms. Finally, we believe that our approach can be extended to learning the parameters of general mixtures of gaussians \cite{MV10,BS10}, mixtures of product distributions \cite{FOS05}, and more generally to a broader class of parameter learning problems. 

\section{Acknowledgements}

We thank Ravi Kannan for valuable discussions about the algorithmic results in this work,
and Daniel Hsu for helpful pointers to the literature. The third author would also like to thank Siddharth Gopal for some useful pointers about HMM models in speech and image recognition. 

\nocite{*}
\bibliographystyle{alpha}
\bibliography{tensor}

\appendix
\section{A Medley of Auxiliary Lemmas}\label{sec:app:linear-algebra}
We now list some of the (primarily linear algebra) lemmas we used in our proofs. They range in difficulty from trivial to `straightforward', but we include them for completeness.

\begin{lemma}\label{lem:conditioned1}
Suppose $X$ is a matrix in $\R^{n \times k}$ with $\sigma_k \ge 1/\ka$. Then if $\norm{\sum_i \alpha_i X_i}_2 < \eps$, for some $\alpha_i$, we have $\norm{\alpha}=\sqrt{\sum_i \alpha_i^2} \le \ka \eps$.
\end{lemma}
\begin{proof}
From the singular value condition, we have for any $y \in \R^k$,
\[ \norm{Xy}_2^2 \ge \sigma_k^2 \norm{y}^2, \]
from which the lemma follows by setting $y$ to be the vector of $\alpha_i$.
\end{proof}

\begin{lemma} \label{lem:conditioned:span}
Let $A \in \R^{n \times R}$ have $\krk{\ka}=k$ and be $\ub$-bounded. Then, 
\begin{enumerate}
\item If $\calS=\spn(S)$, where $S$ is a set of at most $k-1$ column vectors of $A$, then each unit vector in $\calS$ has a small representation in terms of the columns denoted by $S$:
\[v=\sum_{i \in S} z_i A_i  \implies  \frac{1}{(\ub^2+1)k} \le (\sum_i z_i^2)/\norm{v}^2 \le \max\{\ka^2,1\} \]
\item If $\calS=\spn(S)$ where $S$ is any subset of $k-1$ column vectors $S$ of $A$, the other columns are far from the span $\calS$:
\[\forall j \in [R]\setminus S, \quad \norm{\Pi^{\perp}_{\calS} A_j} \ge \frac{1}{\ka}  \]
\item If $\calS$ is any $\ell$-dimensional space with $\ell<k$, then at most $\ell$ column vectors of $A$ are $\eps$-close to it for $\eps=1/(\ka\sqrt{\ell})$:
\[\abs{ \{i: \norm{\Pi^{\perp}_{\calS} A_i} \le \frac{1}{\ka\sqrt{\ell}}\}} \le \ell \]
\end{enumerate}
\end{lemma}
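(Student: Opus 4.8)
The plan is to derive all three parts from the robust Kruskal condition together with two elementary facts about singular values: the variational characterization $\sigma_{\min}(M)=\min_{\norm{z}=1}\norm{Mz}$, and the interlacing fact that deleting a column of a matrix cannot decrease its smallest singular value. The one preliminary I would establish first is that for \emph{every} index set $S$ with $|S|\le k$ the submatrix $A_{|S}$ satisfies $\sigma_{|S|}(A_{|S})\ge 1/\ka$: extend $S$ to a set $S'$ of exactly $k$ columns, note $\sigma_k(A_{|S'})\ge 1/\ka$ by the definition of $\krk{\ka}(A)=k$, and peel off columns one at a time, invoking interlacing at each step. In particular any $\le k-1$ columns of $A$ are linearly independent, so the representation $v=\sum_{i\in S}z_i A_i$ in part (1) is unique and well defined.

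For part (1), write $v=A_{|S}z$. The upper bound is immediate from the preliminary: $\norm{v}=\norm{A_{|S}z}\ge \sigma_{|S|}(A_{|S})\norm{z}\ge \norm{z}/\ka$, so $\sum_i z_i^2 \le \ka^2\norm{v}^2$ (and $\max\{\ka^2,1\}$ absorbs the degenerate case $\ka<1$). The lower bound is a triangle-inequality and Cauchy--Schwarz computation using $\ub$-boundedness: $\norm{v}\le \sum_{i\in S}|z_i|\,\norm{A_i}\le \ub\sqrt{|S|}\,\norm{z}\le \ub\sqrt{k}\,\norm{z}$, whence $\sum_i z_i^2\ge \norm{v}^2/(\ub^2 k)\ge \norm{v}^2/((\ub^2+1)k)$.

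For part (2), I would use the standard identity that the distance of a column from the span of the remaining columns of a matrix is at least that matrix's smallest singular value. Concretely, let $d=\norm{\Pi^\perp_\calS A_j}$ and write $A_j=\sum_{i\in S}c_i A_i+w$ with $w\perp\calS$ and $\norm{w}=d$; applying $A_{|S\cup\{j\}}$ to the vector $z$ with $z_j=1$ and $z_i=-c_i$ for $i\in S$ gives $A_{|S\cup\{j\}}z=w$, so $d=\norm{w}\ge \sigma_{\min}(A_{|S\cup\{j\}})\norm{z}\ge \sigma_{\min}(A_{|S\cup\{j\}})\ge 1/\ka$, using $|S\cup\{j\}|=k$ and $\norm{z}\ge 1$.

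For part (3), I would argue by contradiction: suppose $\ell+1$ columns, assembled as an $n\times(\ell+1)$ submatrix $Y$, are each $\eps$-close to the $\ell$-dimensional space $\calS$. Then $\Pi_\calS Y$ has rank $\le \ell$ and $\norm{Y-\Pi_\calS Y}_F^2=\sum_i\norm{\Pi^\perp_\calS Y_i}^2\le (\ell+1)\eps^2$, so by Eckart--Young $\sigma_{\ell+1}(Y)\le \norm{Y-\Pi_\calS Y}_F\le \sqrt{\ell+1}\,\eps$; equivalently, since $\Pi_\calS Y$ is column-rank-deficient there is a unit $z$ with $\Pi_\calS Y z=0$ and $\norm{Yz}=\norm{\Pi^\perp_\calS Y z}\le \sqrt{\ell+1}\,\eps$. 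But $\ell+1\le k$, so the preliminary gives $\sigma_{\ell+1}(Y)\ge 1/\ka$, contradicting the assumed closeness. (The stated threshold $1/(\ka\sqrt\ell)$ is slightly loose for this counting; a clean contradiction wants $\eps<1/(\ka\sqrt{\ell+1})$, and I would simply carry the $\sqrt{\ell+1}$ factor, which is immaterial downstream.) None of the steps is genuinely hard; the \textbf{point requiring the most care} is the interlacing/embedding preliminary, since it is what lets me treat subsets of size $<k$ uniformly and it is reused in all three parts, after which (1)--(3) are short variational arguments.
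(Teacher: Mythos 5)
Your proof is correct and follows essentially the same route as the paper's: all three parts reduce to the lower bound $\sigma_{\min}\ge 1/\ka$ on column submatrices guaranteed by $\krk{\ka}(A)=k$, with your arguments for parts (2) and (3) (kernel vector of the projected columns plus Cauchy--Schwarz) matching the paper's almost verbatim, your direct variational treatment of part (1) being a minor streamlining of the paper's change-of-basis argument, and your interlacing preliminary simply making explicit a fact the paper uses implicitly for subsets of size $<k$. Your caveat on part (3) is well taken but reflects a looseness already present in the paper, whose own proof works with the threshold $1/(\ka\sqrt{k})$ even though the statement advertises $1/(\ka\sqrt{\ell})$; carrying the $\sqrt{\ell+1}$ factor as you do is the honest fix and is immaterial downstream.
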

\begin{proof}
We now present the simple proofs of the three parts of the lemma.
\begin{enumerate}
\item The first part simply follows because from change of basis. Let $M$ be the $n \times n$ matrix, where the first $|S|$ columns of $M$ correspond to $S$ and the rest of the $n-|S|$ columns being unit vectors orthogonal to $\calS$. Since $A_{|S}$ is well-conditioned, then $\lmax{M} \le (\ub+1)\sqrt{n}$ and $\lmin{M} \ge 1/\max{\ka,1}$. The change of basis matrix is exactly $M^{-1}$: hence $z= (M)^{-1} v$. Thus, $\lmin{M^{-1}} \le \norm{z} \le \lmax{M^{-1}} =1/\lmin{M} \le \max\{1,\ka\}$.

\item Let $S=\{1,\dots,k-1\}$ and $j=k$ without loss of generality. Let $v=\sum_{i \in S} z_i A_i$ be a vector $\eps$-close to $A_k$. Let $M'$ be the $n \times k$ matrix restricted to first $k$ columns: i.e. $M'=A|_{S \cup \{j\}}$. Hence, the vector $z=(z_1,\dots,z_{k-1},-1)$ has square length $1+\sum_i z_i^2$, and $\norm{M'z}=\eps$. Thus, 
\[ \eps \ge \lmin{M'} \sqrt{1+\sum_i z_i^2} \ge 1/\ka \]

\item Let $\eps=1/(\ka\sqrt{k})$. For contradiction, assume that $S=\{ i: \norm{\Pi^{\perp}_{\calS} A_i} \le \eps \}$ is of size $\ell+1$. Let $v_i = \Pi_\calS A_i \in \calS$. Since $\{v_i\}_{i \in S}$ are $\ell+1$ vectors in a $\ell$ dimension space,  
$$ \exists \{\alpha_i\}_{i \in S} \text{ with } \sum_i \alpha_i^2 = 1, \quad \text{s.t } \sum_i \alpha_i v_i = 0 $$
Hence, $\norm{\sum_{i \in S} \alpha_i A_i } \le \norm{\sum_{i \in S} \alpha_i \Pi^{\perp}_{\calS} A_i } \le (\sum_{i \in S} \abs{\alpha_i}) \eps  \le \sqrt{|S|} \eps $ (where the last inequality follows from Cauchy-Schwarz inequality). But these set of $\alpha_i$ contradict the fact that the minimum singular value of any $n$-by-$k$ submatrix of $A$ is at least $1/\ka$. 
\end{enumerate}
\end{proof}

\begin{lemma}\label{lem:perm:basecase}
Let $u_1, \dots, u_t \in \R^d$ (for some $t,d$) satisfy $\norm{u_i}_2 \ge \eps >0$ for all $i$. Then there exists a unit vector $w \in \R^d$ s.t. $\abs{ \iprod{u_i, w}} > \frac{\eps}{20 dt}$ for all $i \in [t]$.
\end{lemma}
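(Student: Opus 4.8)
The plan is to produce the direction $w$ by a probabilistic argument: a single random direction will, with positive probability, make a large angle with every hyperplane $\{x : \iprod{u_i,x}=0\}$ simultaneously. Concretely, I would draw a standard Gaussian vector $g \sim N(0,I_d)$ and set $w = g/\norm{g}$, then show that with positive probability $g$ satisfies both (i) $\abs{\iprod{u_i,g}}$ is not too small for every $i$, and (ii) $\norm{g}$ is not too large. Dividing by $\norm{g}$ converts the lower bounds on $\abs{\iprod{u_i,g}}$ into lower bounds on $\abs{\iprod{u_i,w}}$.

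For step (i), the key observation is that for each fixed $i$ the scalar $\iprod{u_i,g}$ is a one-dimensional Gaussian $N(0,\norm{u_i}^2)$ with $\norm{u_i}\ge\eps$. Since a centered Gaussian with standard deviation $\sigma$ has density at most $1/(\sqrt{2\pi}\,\sigma)$, anti-concentration gives $\Prb{\abs{\iprod{u_i,g}}\le s} \le \frac{2s}{\sqrt{2\pi}\,\norm{u_i}} \le s/\eps$. Taking $s=\eps/(4t)$ and a union bound over the $t$ vectors, the probability that some $\abs{\iprod{u_i,g}}\le s$ is at most $1/4$. For step (ii), since $\E{\norm{g}^2}=d$, Markov's inequality gives $\Prb{\norm{g}>\sqrt{2d}}\le 1/2$. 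A union bound over both bad events leaves probability at least $1/4>0$ that $g$ is good.

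Fixing such a $g$, I have $\abs{\iprod{u_i,g}} > \eps/(4t)$ for all $i$ and $\norm{g}\le\sqrt{2d}$, so $w=g/\norm{g}$ satisfies $\abs{\iprod{u_i,w}} > \frac{\eps}{4\sqrt{2}\,t\sqrt{d}}$. It then remains only to check that $\frac{1}{4\sqrt{2}\,\sqrt{d}} \ge \frac{1}{20d}$ for every $d\ge 1$ (equivalently $20\sqrt{d}\ge 4\sqrt{2}$), which holds with room to spare and yields the claimed bound $\abs{\iprod{u_i,w}} > \eps/(20dt)$. I do not expect a serious obstacle; the only point requiring care is matching the constants in the two tail bounds so the final constant lands below $20$, together with the degenerate small-dimensional cases (for instance $d=1$, where any unit $w$ already gives $\abs{\iprod{u_i,w}}=\norm{u_i}\ge\eps$), which the generous constant absorbs.
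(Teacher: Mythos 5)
Your proof is correct and follows essentially the same route as the paper's: a random Gaussian vector $g$, per-vector Gaussian anti-concentration with a union bound over the $t$ events, a Markov bound on $\norm{g}^2$ to control the normalization, and the observation that one actually obtains the stronger bound $\eps/(O(t\sqrt{d}))$, which the stated $\eps/(20dt)$ absorbs. Only the constants differ (the paper thresholds at $\norm{u_i}/(10t)$ and uses $\norm{r}^2 \le 4d$), and your bookkeeping of them is sound.
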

\begin{proof}
The proof is by a somewhat standard probabilistic argument.

Let $r \sim \R^d$ be a random vector drawn from a uniform spherical Gaussian with a unit variance in each direction. It is well-known that for any $y \in \R^d$, the inner product $\iprod{y, r}$ is distributed as a univariate Gaussian with mean zero, and variance $\norm{y}_2^2$. Thus for each $y$, from standard anti-concentration properties of the Gaussian, we have
\[ \Pr \big[ |\iprod{u_i, r}| \le \frac{\norm{u_i}}{10t} \big] \le \frac{1}{2t}. \]
Thus by a union bound, with probability at least $1/2$, we have
\begin{equation}\label{eq:basecase-help}
\Pr \big[ |\iprod{u_i, r}| > \frac{\eps}{10t} \big] \qquad \text{for all $i$.} \end{equation}
Next, since $\E{\norm{r}_2^2} = d$, $\Pr[\norm{r}_2^2 > 4d] < 1/4$, and thus there exists a vector $r$ s.t. $\norm{r}_2^2 \le 4d$, and Eq.~\eqref{eq:basecase-help} holds. This implies the lemma (in fact we obtain $\sqrt{d}$ in the denominator).
\end{proof}


\begin{lemma}[\kr of the Khatri-Rao product]\label{lem:krprod}
 has 
$\krk{(\ka_1 \ka_2 \sqrt{k_A+k_B})} (M) \ge \min\{k_1 + k_2 -1,R\}$. 
\end{lemma}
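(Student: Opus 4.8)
The plan is to bound, for every index set $S \subseteq [R]$ of size $s := \min\{k_A+k_B-1, R\}$, the smallest singular value of the $(n_1n_2)\times s$ submatrix $M_S$ of $M = A \odot B$ whose columns are $\{A_i \otimes B_i : i \in S\}$, and show it is at least $1/(\ka_1\ka_2\sqrt{k_A+k_B})$. The first step is to pass from vectors to matrices by the standard identification: since the vector $A_i \otimes B_i$ is exactly the rank-one matrix $A_iB_i^T$ written out, for any coefficient vector $c$ supported on $S$ we have
\[ \Big\| \sum_{i \in S} c_i (A_i \otimes B_i) \Big\|_2 = \Big\| \sum_{i\in S} c_i A_i B_i^T \Big\|_F = \big\| A_S \diag(c) B_S^T \big\|_F, \]
where $A_S, B_S$ denote the column-submatrices indexed by $S$. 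Hence $\sigma_s(M_S) = \min_{\|c\|_2=1} \|A_S\diag(c)B_S^T\|_F$, and it suffices to lower bound this Frobenius norm uniformly over unit vectors $c$.

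The core of the argument is a test-vector reduction that uses both Kruskal conditions simultaneously. Given a unit $c$, fix a coordinate $i^*\in S$ with $|c_{i^*}| \ge 1/\sqrt{s}$. I would then pick any subset $W \subseteq S\setminus\{i^*\}$ with $|W| = k_B - 1$ (possible since $s \ge k_B$), and apply Lemma~\ref{lem:conditioned:span}(2) to the matrix $B$ with $\calS = \spn(\{B_i : i\in W\})$: since $i^*\notin W$ and $\krk{\ka_2}(B)\ge k_B$, the vector $\Pi^\perp_{\calS}B_{i^*}$ has length $\ge 1/\ka_2$. Let $y$ be the unit vector along it, so $\langle B_{i^*}, y\rangle \ge 1/\ka_2$ while $\langle B_i, y\rangle = 0$ for every $i\in W$. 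Multiplying $N(c):= A_S\diag(c)B_S^T$ on the right by $y$ kills all the $W$-terms, leaving
\[ N(c)\,y = \sum_{i\in S'} c_i\,\langle B_i, y\rangle\, A_i = A_{S'}\,d, \qquad S' := S\setminus W,\quad d_i := c_i\langle B_i, y\rangle. \]
Here $|S'| = s - (k_B-1) \le k_A$ (this holds in both regimes: when $s = k_A+k_B-1$ it equals $k_A$, and when $s = R \le k_A+k_B-1$ it is at most $k_A$), and $|d_{i^*}| = |c_{i^*}|\,\langle B_{i^*},y\rangle \ge 1/(\ka_2\sqrt{s})$. Since any $\le k_A$ columns of $A$ satisfy $\sigma(A_{S'}) \ge 1/\ka_1$ (a subset inherits the singular-value margin of a containing $k_A$-set), Lemma~\ref{lem:conditioned1} gives $\|A_{S'}d\|_2 \ge \|d\|_2/\ka_1 \ge |d_{i^*}|/\ka_1 \ge 1/(\ka_1\ka_2\sqrt{s})$. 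Finally $\|N(c)\|_F \ge \|N(c)y\|_2 \ge 1/(\ka_1\ka_2\sqrt{s}) \ge 1/(\ka_1\ka_2\sqrt{k_A+k_B})$, as required.

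The main obstacle — and the step I expect to require the most care — is the combinatorial balancing that makes the two conditions fit together tightly: choosing $|W| = k_B-1$ so that exactly $\le k_A$ columns of $A$ survive, which is what forces $s \le k_A + k_B - 1$ and explains the $-1$ in the bound. Everything else is bookkeeping: verifying $|S'| \le k_A$ and $s \ge k_B$ in the two cases $s = k_A+k_B-1$ and $s = R$, and noting the roles of $A$ and $B$ are symmetric so no loss of generality is incurred. Since the resulting estimate $1/(\ka_1\ka_2\sqrt{s})$ holds for every $S$ of size $s$, we conclude $\krk{\ka_1\ka_2\sqrt{k_A+k_B}}(A\odot B) \ge \min\{k_A+k_B-1, R\}$.
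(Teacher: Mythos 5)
Your proof is correct and is essentially the paper's argument in mirror image: the paper argues by contradiction, takes a unit near-null combination of the Khatri-Rao columns, projects orthogonally to $k_A-1$ columns of $A$ to isolate a large coefficient, and derives a contradiction from the conditioning of $\le k_B$ columns of $B$, whereas you run the same pigeonhole-plus-orthogonal-test-vector mechanism directly (bounding $\min_{\|c\|=1}\norm{A_S\diag(c)B_S^T}_F$ from below), annihilating $k_B-1$ columns on the $B$ side and invoking the Kruskal condition on the surviving $\le k_A$ columns of $A$. The swap of the roles of $A$ and $B$ and the direct rather than contrapositive formulation are cosmetic; the substance, including the quantitative loss $1/(\ka_1\ka_2\sqrt{k_A+k_B})$, is identical.
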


\begin{proof}
Let $\ka=\ka_1 \ka_2 \sqrt{k_A+k_B}$.
Suppose for contradiction $M$ has $\krk{\ka}(M)< k= k_A+k_B-1 \le R$ (otherwise we are done). \\
Without loss of generality let the sub-matrix $M'$ of size $(n_1 n_2) \times k$, formed by the first $k$ columns of $M$ have $\lambda_{k}(M) < 1/\ka$. Note that for a vector $z\in R^{nR}$, $\nrm{z}{2}=\nrm{Z}{F}$ where $Z$ is the natural $n \times R$ matrix representing $z$. Hence
$$ \exists \{\alpha_i\}_{i \in [k]} ~\text{with } \sum_{i \in [k]} \alpha_i^2 =1  \quad \text{s.t. } ~\nrm{\sum_{i \in [k]} \alpha_i A_i \otimes B_i}{F}  < \eps. \label{eq:krproduct}$$
Clearly $\exists i^* \in [k]$ s.t $|\alpha_i| \ge 1/\sqrt{k}$ : let $i^*= k$ without loss of generality.
Let $\calS=\spn(\{A_1, A_3, \dots A_{k_A-1}\})$, and pick $x = \Pi_{\calS}^{\perp} A_{k} / \norm{\Pi_{\calS}^{\perp} A_{k}}$ (it exists because $\krk{\ka}(M) <R$). \\
Pre-multiplying the expression in ~\eqref{eq:krproduct} by $x$, we get
$$\norm{\sum_{i=k_A}^{k} \beta_i B_i} < \eps ~\text{where }\beta_i = \alpha_i \Iprod{x}{A_i}$$
But $|\beta_k|\ge 1/(\sqrt{k} \ka_1)$ (by Lemma~\ref{lem:conditioned:span}), and there are only $k-k_A+1 \le k_B$ terms in the expression. Again, by Lemma~\ref{lem:conditioned:span} applied to these (at most) $k_B$ columns of $B$, we get that $1/\eps<  \ka_1 \ka_2 \sqrt{k}$, which establishes the lemma. 
\end{proof}

\noindent {\bf Remark.} Note that the bound of the lemma is tight in general. For instance, if $A$ is an $n \times 2n$ matrix s.t. the first $n$ columns correspond to one orthonormal basis, and the next $n$ columns to another (and the two bases are random, say). Then $\krk{10}(A) = n$, but for any $\tau$, we have $\krk{\tau}(A \odot A) = 2n-1$, since the first $n$ terms and the next $n$ terms of $A \odot A$ add up to the same vector (as a matrix, it is the identity).

\begin{lemma}\label{lem:tensoring}
Suppose $\nrm{u \otimes v - u' \otimes v'}{F} < \delta$, and $\minl \le \norm{u}, \norm{v}, \norm{u'}, \norm{v'} \le \maxl$,\\ with $\delta <\frac{\min\{ \minl^2 ,1 \}}{(2\max\{\maxl,1\})}$. If $u=\alpha_1 u'+\beta_1 \up$ and $v=\alpha_2 v'+\beta_2 \vp$, where $\up$ and $\vp$ are unit vectors orthogonal to $u', v'$ respectively, then we have
$$ |1-\alpha_1 \alpha_2|< \delta/ \minl^2 ~\text{ and } \quad \beta_1 < \sqrt{\delta},~\beta_2 < \sqrt{\delta}.$$
\end{lemma}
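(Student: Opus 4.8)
The plan is to avoid coordinates entirely and instead \emph{project away} the unwanted parts of the difference tensor $N := u\opr v - u'\opr v'$, reading off each of the three quantities from a single well-chosen projection. I would write $\Pi_{u'}$ for the orthogonal projection onto $\spn(u')$ and $\Pi_{u'}^\perp := I - \Pi_{u'}$, and define $\Pi_{v'}, \Pi_{v'}^\perp$ analogously on the second factor. The one general fact I need is that for orthogonal projections $P, Q$, the operator $P\opr Q$ acts on the matricized tensor $X$ (viewing $u\opr v$ as the matrix $uv^{\mathsf T}$) as $X\mapsto PXQ$, and hence is a contraction in Frobenius norm, since projections have operator norm at most $1$. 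Consequently $\nrm{(P\opr Q)N}{F}\le \nrm{N}{F} < \delta$ for every such choice of $P,Q$.

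The three bounds then fall out by choosing $P,Q$ appropriately. First, since $u=\alpha_1 u'+\beta_1\up$ with $\up\perp u'$ a unit vector, we have $\Pi_{u'}^\perp u = \beta_1\up$ and $\Pi_{u'}^\perp u'=0$; thus applying $\Pi_{u'}^\perp\opr I$ to $N$ annihilates the $u'\opr v'$ term and leaves exactly $\beta_1\,\up\opr v$, of Frobenius norm $|\beta_1|\,\norm{v}$. Comparing with $\delta$ and using $\norm{v}\ge\minl$ gives $|\beta_1|<\delta/\minl$; the symmetric choice $I\opr\Pi_{v'}^\perp$ isolates $\beta_2\,u\opr\vp$ and gives $|\beta_2|<\delta/\minl$. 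Finally, $\Pi_{u'}\opr\Pi_{v'}$ sends $u\opr v\mapsto \alpha_1\alpha_2\,u'\opr v'$ and fixes $u'\opr v'$, so its image on $N$ is $(\alpha_1\alpha_2-1)\,u'\opr v'$, of norm at least $|\alpha_1\alpha_2-1|\,\minl^2$; comparing with $\delta$ yields the first claimed inequality $|1-\alpha_1\alpha_2|<\delta/\minl^2$.

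It remains to upgrade the $\beta$-bounds to the stated form. Here I would invoke the hypothesis on $\delta$: since $\max\{\maxl,1\}\ge 1$, we have $\delta<\tfrac{\min\{\minl^2,1\}}{2\max\{\maxl,1\}}\le \minl^2/2<\minl^2$, i.e.\ $\sqrt\delta<\minl$, so that $\delta/\minl=\sqrt\delta\cdot(\sqrt\delta/\minl)<\sqrt\delta$. This turns $|\beta_1|,|\beta_2|<\delta/\minl$ into $|\beta_1|,|\beta_2|<\sqrt\delta$, as required. The rounding from the sharper $\delta/\minl$ up to $\sqrt\delta$ is exactly the ``square-root loss'' flagged in the outline of Theorem~\ref{thm:unique:gen}.

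I do not expect a real obstacle here: the entire argument is the single observation that orthogonal projections are Frobenius contractions, applied three times to isolate the three quantities. The only steps needing a little care are checking that $P\opr Q$ genuinely acts as two-sided multiplication by projections on the matricized tensor (so that it is a contraction), and verifying that the stated bound on $\delta$ is precisely what absorbs the $1/\minl$ factor into the square root; both are routine.
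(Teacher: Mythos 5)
Your proposal is correct, and it takes a genuinely different (and in fact tighter) route than the paper's own proof, although both rest on the same underlying orthogonality. The paper expands $\nrm{u\otimes v-u'\otimes v'}{F}^2$ into four mutually orthogonal rank-one pieces, $(1-\alpha_1\alpha_2)\,u'\otimes v'$, $\beta_1\alpha_2\,\up\otimes v'$, $\beta_2\alpha_1\,u'\otimes \vp$, and $\beta_1\beta_2\,\up\otimes\vp$, reads off $|1-\alpha_1\alpha_2|<\delta/\minl^2$ from the first piece (exactly your $\Pi_{u'}\otimes\Pi_{v'}$ bound), and then handles the $\beta$'s by contradiction: assuming $\beta_1>\sqrt{\delta}$ forces $\beta_2<\sqrt{\delta}$ via the cross term $\beta_1^2\beta_2^2<\delta^2$, a lower bound $\alpha_2\ge \minl/(2\maxl)$ is then extracted from $\norm{v}\ge\minl$ and the smallness of $\delta$, and finally the term $\beta_1^2\alpha_2^2\minl^2<\delta^2$ yields the contradiction. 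Your projection argument regroups the orthogonal pieces: $(\Pi_{u'}^{\perp}\otimes I)$ applied to the difference isolates $\beta_1\,\up\otimes v$, which merges the second and fourth terms of the paper's expansion, so you get $|\beta_1|\norm{v}<\delta$ in one stroke, using $\norm{v}\ge\minl$ directly and never needing a lower bound on $\alpha_2$ (indeed $\maxl$ plays no role in your $\beta$-bounds). This buys three things: (i) a strictly stronger conclusion $|\beta_i|<\delta/\minl$, weakened to $\sqrt{\delta}$ only cosmetically via $\sqrt{\delta}<\minl$; (ii) no case analysis or contradiction argument; and (iii) your proof actually closes under the lemma's stated hypothesis on $\delta$, whereas the paper's final step requires $2\maxl\,\delta/\minl^2\le\sqrt{\delta}$, i.e.\ $\delta\le \minl^4/(4\maxl^2)$, which is quadratically smaller than the stated bound $\min\{\minl^2,1\}/(2\max\{\maxl,1\})$ --- harmless in the paper's application, where $\eps'_{\ell-1}$ is chosen far smaller anyway, but a parameter gap that your route sidesteps entirely. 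The only step deserving an explicit line in a final write-up is the contraction fact you already flag, namely $\norm{PXQ}_F\le \norm{X}_F$ for orthogonal projections $P,Q$ acting on the matricization, which holds since each has operator norm at most $1$ (and $Q=Q^{\mathsf T}$, so the transpose is immaterial).
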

\begin{proof}
We are given that $u=\alpha_1 u' +\beta_1 \up$ and $v=\alpha_2 v' + \beta_2 \vp$. Now, since the tensored vectors are close

\begin{align}
\nrm{u \otimes v - u' \otimes v'}{F}^2 &< \delta^2 \notag\\  
\nrm{(1-\alpha_1 \alpha_2) u'\otimes v' + \beta_1 \alpha_2 \up \otimes v'+ \beta_2 \alpha_1 u' \otimes \vp + \beta_1\beta_2 \up \otimes \vp}{F}^2 &< \delta^2 \notag\\
\minl^4 (1-\alpha_1 \alpha_2)^2 + \beta_1^2 \alpha_2^2 \minl^2 +\beta_2^2 \alpha_1 ^2 \minl^2 + \beta_1^2 \beta_2^2 &< \delta^2 \label{eq:tensoring:lb}
\end{align}
This implies that $|1-\alpha_1 \alpha_2| < \delta/ \minl^2$ as required.

Now, let us assume $\beta_1> \sqrt{\delta}$. This at once implies that $\beta_2 < \sqrt{\delta}$. \\
Also 
\begin{align*}
\minl^2 \le \norm{v}^2 &= \alpha_2^2 \norm{v'}^2 + \beta_2^2  \\
\minl^2 -\delta &\le \alpha_2^2 \maxl^2 \\
\text{ Hence, }\quad \alpha_2 &\ge \frac{\minl}{2\maxl} 
\end{align*}
Now, using \eqref{eq:tensoring:lb}, we see that $\beta_1 < \sqrt{\delta}$.
\end{proof}

\begin{lemma}\label{lem:l1error}
For $\lambda \ge 0$, a vector $v \in \R^n$ with $\nrm{v}{1} \in [1-\varepsilon/4, 1+\varepsilon/4]$, a probability vector $u \in \R^n$ ( $\nrm{u}{1}=\sum_i u_i = 1$), if 
$$\nrm{v-\lambda u}{2} \le \frac{\varepsilon}{4\sqrt{n}}$$ 
then we have
\begin{align*}
1-\varepsilon/2 \le \lambda \le 1+\varepsilon/2 \quad \text{and} \quad \nrm{v-u}{2} \le \varepsilon
\end{align*}
\end{lemma}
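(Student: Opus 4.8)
The plan is to reduce both conclusions to elementary triangle-inequality estimates, with the only genuinely useful observation being that for a probability vector $u$ one has $\nrm{u}{1}=1$ and $\nrm{u}{2}\le 1$, so the scalar $\lambda$ can be read off from the $\ell_1$ geometry.

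First I would control $\lambda$. The given hypothesis is an $\ell_2$ bound, but $\lambda$ is naturally exposed through $\ell_1$ norms since $\nrm{\lambda u}{1}=\lambda\nrm{u}{1}=\lambda$ (using $\lambda\ge 0$ and that $u$ is a probability vector). So I first convert the $\ell_2$ bound to an $\ell_1$ bound via Cauchy--Schwarz:
\[ \nrm{v-\lambda u}{1} \le \sqrt{n}\,\nrm{v-\lambda u}{2} \le \sqrt{n}\cdot \frac{\varepsilon}{4\sqrt{n}} = \frac{\varepsilon}{4}. \]
Then the triangle inequality for $\nrm{\cdot}{1}$ gives $\abs{\nrm{v}{1}-\lambda}\le \nrm{v-\lambda u}{1}\le \varepsilon/4$. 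Combining this with the hypothesis $\nrm{v}{1}\in[1-\varepsilon/4,\,1+\varepsilon/4]$ yields
\[ 1-\frac{\varepsilon}{2} \le \lambda \le 1+\frac{\varepsilon}{2}, \]
which is the first claimed conclusion.

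Next I would bound $\nrm{v-u}{2}$ directly by splitting off the scaling error. By the triangle inequality,
\[ \nrm{v-u}{2} \le \nrm{v-\lambda u}{2} + \nrm{(\lambda-1)u}{2} = \nrm{v-\lambda u}{2} + \abs{\lambda-1}\,\nrm{u}{2}. \]
Here I use $\nrm{u}{2}\le\nrm{u}{1}=1$ (again the probability-vector structure) together with the hypothesis $\nrm{v-\lambda u}{2}\le \varepsilon/(4\sqrt n)$ and the bound $\abs{\lambda-1}\le\varepsilon/2$ just established. This gives $\nrm{v-u}{2}\le \varepsilon/(4\sqrt n)+\varepsilon/2 \le \varepsilon/4+\varepsilon/2 < \varepsilon$, completing the proof.

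There is no real obstacle here; the lemma is entirely routine. The one step worth stating carefully is the $\ell_2$-to-$\ell_1$ conversion, because $\lambda$ is pinned down through the $\ell_1$ norm of the probability vector $u$ while the closeness hypothesis is phrased in $\ell_2$; everything else is triangle inequality and the trivial norm comparison $\nrm{u}{2}\le\nrm{u}{1}$. One should only double-check that the stated $\ell_2$ tolerance $\varepsilon/(4\sqrt n)$ is the right normalization so that the Cauchy--Schwarz step lands exactly at $\varepsilon/4$, which is what makes the final constants work out to give $\nrm{v-u}{2}\le\varepsilon$ rather than a larger multiple of $\varepsilon$.
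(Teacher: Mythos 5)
Your proof is correct and matches the paper's argument essentially line for line: the same Cauchy--Schwarz conversion of the $\ell_2$ hypothesis into $\nrm{v-\lambda u}{1}\le \varepsilon/4$, the same $\ell_1$ triangle inequality pinning $\lambda$ to $[1-\varepsilon/2,\,1+\varepsilon/2]$, and the same final split $\nrm{v-u}{2}\le \nrm{v-\lambda u}{2}+\abs{\lambda-1}\nrm{u}{2}$ with $\nrm{u}{2}\le\nrm{u}{1}=1$. No gaps; your write-up is in fact slightly more explicit than the paper's about where $\lambda\ge 0$ and the probability-vector structure enter.
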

\begin{proof}
First we have $\nrm{v-\lambda u}{1}\le \varepsilon/4$ by Cauchy-Schwartz. 
Hence, by triangle inequality, $|\lambda| \nrm{u}{1} \le 1+ \varepsilon/2$. \\
Since $\nrm{u}{1}=1$, we get $\lambda \le 1+\varepsilon/2$. Similarly $\lambda \ge 1-\varepsilon/2$.

Finally, $\nrm{v-u}{2} \le \nrm{v-\lambda u }{2}+\abs{\lambda-1} \nrm{u}{2} \le \varepsilon$ (since $\lambda \ge 0$). 
Hence, the lemma follows. 
\end{proof}

\subsection{Symmetric Decompositions} \label{sec:unique:sym}
\begin{proof}[Proof of Corollary~\ref{corr:unique:sym}]
Applying Theorem~\ref{thm:unique:gen} with $\eps'< \eta (2\rho \ka \sqrt{R})^{-1}$, to obtain a permutation matrix $\Pi$ and scalar matrices $\Lambda_j$ such that
\begin{align*}
\forall j \in [\ell] \nrm{V - U \Pi \Lambda_j}{F}<\eps'\\
\text{By triangle inequality,} \quad \forall j,j'\in [\ell], ~ \nrm{U \Pi (\Lambda_j - \Lambda_{j'})}{F}< 2\eps'
\end{align*}

Since $\Pi$ is a permutation matrix and $U$ has columns of length at least $1/\ka$, 
we get that 
\begin{equation*}\label{eq:sym:1}
\forall r \in [R], j \in [\ell], j' \in [\ell], \quad \abs{\Lambda_j (r)-\Lambda_{j'}(r)} < \eps' \ka
\end{equation*}
However, we also know that 
\begin{align*}
\norm{\prod_{j \in \ell}, ~ \Lambda_j -I} &\le \eps'\\
\forall r \in [R], \quad (1-\eps')&\le \prod_{j \in [\ell]} \Lambda_j(i) \le 1+\eps'
\end{align*}
Hence, substituting \eqref{eq:sym:1} in the last inequality, it is easy to see that 
$\forall i \in [n], \abs{\lambda_j (i)-1 } < 2 \eps' \ka$. 
But since each column of $A$ is $\rho$-bounded, this shows that $\nrm{A'-A \Pi}{F}<2 \eps' \ka \rho \sqrt{R} \le \eta $, as required.
\end{proof}

\section{Properties of Tensors} \label{sec:badexamples}

\subsection{A necessary condition for Uniqueness} \label{sec:krrank}

Consider a $3$-tensor $T$ of rank $R$ represented by $\tens{A}{B}{C}$ where these three matrices are of size $n\times R$. 
$$ T= \sum_{r \in [R]} A_r \otimes B_r \otimes C_r.$$ 
We now show a necessary condition in terms of the $n^2$ dimensional vectors $A_r \otimes B_r$ from the decomposition.
\begin{claim}[A necessary condition for uniqueness]
Suppose for a subset $S \subset [R]$, there exist $\{\alpha_r\}$ with $\norm{\alpha}=1$.
$$ \sum_{r \in S} \alpha_r A_r \otimes B_r = 0$$
then there exists multiple rank-$R$ decompositions for $T$
\end{claim}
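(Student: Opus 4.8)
The plan is to exploit the linear dependence among the rank-one matrices $\{A_r \otimes B_r\}_{r \in S}$ as a kind of \emph{gauge freedom}: since these $n^2$-dimensional objects are linearly dependent, I can redistribute mass purely in the third mode without altering $T$. Concretely, I would fix an auxiliary vector $v \in \R^n$ and a scalar $t$, extend $\alpha$ by zeros (set $\alpha_r = 0$ for $r \notin S$), and define a new third-mode matrix $C^{(t)}$ by
\[ C^{(t)}_r = C_r + t\,\alpha_r v \qquad \text{for all } r \in [R]. \]

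First I would verify that $[A ~ B ~ C^{(t)}]$ is again a rank-$R$ decomposition of the \emph{same} tensor $T$. Expanding and using the hypothesis $\sum_{r \in S}\alpha_r A_r \otimes B_r = 0$,
\[ [A ~ B ~ C^{(t)}] = \sum_{r \in [R]} A_r \otimes B_r \otimes C_r \;+\; t\Big(\sum_{r\in S}\alpha_r A_r \otimes B_r\Big)\otimes v \;=\; T + t\cdot(0\otimes v) \;=\; T. \]
Since the factors $A_r, B_r$ are unchanged and $C^{(t)}_r \neq 0$ for all but finitely many values of $t$, this is a genuine sum of $R$ rank-one tensors, i.e.\ a rank-$R$ decomposition.

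Next I would establish that, for a suitable choice of $v$ and $t$, this decomposition is genuinely \emph{different} from the original. Because $\norm{\alpha}=1$, there is some $r^* \in S$ with $\alpha_{r^*}\neq 0$; choosing $v$ not parallel to $C_{r^*}$ guarantees that $C^{(t)}_{r^*}$ is not a scalar multiple of $C_{r^*}$ whenever $t\neq 0$, so $C^{(t)}$ differs from $C$ even up to column rescaling. To upgrade ``$C^{(t)}\neq C$'' to ``inequivalent decomposition'', I would observe that any trivial equivalence $(\Pi,\Lambda_A,\Lambda_B,\Lambda_C)$ relating $[A ~ B ~ C^{(t)}]$ to $[A ~ B ~ C]$ must satisfy $A\Pi\Lambda_A = A$ and $B\Pi\Lambda_B = B$; when no two columns of $A$ are parallel (which holds automatically in the regime of interest, where the matrices have Kruskal rank at least $2$), this forces $\Pi = I$ and $\Lambda_A=\Lambda_B=I$, hence $\Lambda_C = I$ and $C^{(t)}=C$, contradicting the previous step. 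Letting $t$ range over $\R$ in fact produces an infinite family of pairwise inequivalent decompositions.

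The construction itself is immediate; the delicate point — and the main obstacle — is precisely this last distinctness argument, namely ruling out that the perturbed decomposition is only trivially equivalent (by permutation and scaling) to the original. The argument is clean when the columns of $A$ (or of $B$) are pairwise non-parallel, but it requires additional care in the degenerate situation where $A$ and $B$ share parallel columns, since then nontrivial permutations $\Pi$ are admissible and one must check more carefully that some component genuinely moves.
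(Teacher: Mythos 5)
Your construction is exactly the paper's: the paper likewise perturbs the third-mode columns to $C_r + \alpha_r u$ for an auxiliary vector $u$ (your $t=1$ case) and uses the hypothesis to cancel the extra term, so the two proofs coincide. Your explicit distinctness argument (choosing $v$ non-parallel to $C_{r^*}$ and ruling out trivial permutation--scaling equivalence) is in fact more careful than the paper's, which disposes of this point with only a parenthetical remark that $u$ can be chosen ``not close to any of the other vectors.''
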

\begin{proof}
Consider any fixed non-zero vector $u$ (it can be also chosen to be not close to any of the other vectors in $S$).
This is because $\sum_{r \in S} A_r \otimes B_r \otimes u = \sum_{r \in S} \alpha_r (A_r \otimes B_r)\otimes u =0$.\\
Hence, $T= \sum_{r \in S} A_r \otimes B_r \otimes (C_r + \alpha_r u) + \sum_{r' \in [R]\setminus S} A_{r'} \otimes B_{r'} \otimes C_{r'}.$ 
\end{proof}

The above example showed that one necessary condition is that the $A \odot B$ should be full rank $R$ (and well-conditioned). These examples are ruled out when the Kruskal ranks of $A$ and $B$ are such that $k_A + k_B \ge R$ by Lemma~\ref{lem:krprod}.

\section{Sampling Error Estimates for Higher Moment Tensors}

In this section, we show error estimates for $\ell$-order tensors obtained by looking at the $\ell^{th}$ moment of various hidden variable models. In most of these models, the sample is generated from mixture of $R$ distributions $\{\calD_r\}_{r \in [R]}$, with mixing probabilities $\{w_r\}_{r \in [R]}$. First the distribution $\calD_r$ is picked with probability $w_r$, and then the data is sampled according to $\calD_i$, which is characteristic to the application.

\begin{lemma}[Error estimates for Multiview mixture model]\label{lem:samplingerror:topic}
For every $\ell \in \bbN$, suppose we have a multi-view model, with parameters $\{w_r\}_{r \in [R]}$ and $\{\spc{M}{j}\}_{j \in [\ell]}$, such that every entry of $\spc{x}{j} \in \R^n$ is bounded by $\Bd$ (or if it is multivariate gaussian). Then,
for every $\eps>0$, there exists $N =O(\Bd^\ell \eps^{-2}\sqrt{\ell \log n})$ such that \\
if $N$ samples $\{\spc{x(1)}{j}\}_{j \in [\ell]},\{\spc{x(2)}{j}\}_{j \in [\ell]},\dots,\{\spc{x(N)}{j}\}_{j \in [\ell]}$ are generated, then with high probability
\begin{equation}\label{eq:sampleerror:topic}
\nrm{\E{\spc{x}{1} \otimes \spc{x}{2} \otimes \dots \spc{x}{\ell} }- \frac{1}{N} \left(\sum_{t \in [N]} \spc{x(t)}{1} \otimes \spc{x(t)}{2} \otimes \spc{x(t)}{\ell} \right)}{\infty} < \eps 
\end{equation}
\end{lemma}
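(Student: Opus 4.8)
The plan is to reduce \eqref{eq:sampleerror:topic} to a standard concentration argument applied entry by entry, followed by a union bound over the $n^\ell$ entries of the tensor. The key observation is that the empirical tensor $\frac{1}{N}\sum_{t\in[N]} \spc{x(t)}{1}\opr \dots \opr \spc{x(t)}{\ell}$ is an average of $N$ i.i.d.\ copies of the random tensor $\spc{x}{1}\opr\dots\opr\spc{x}{\ell}$, and that its $(i_1,\dots,i_\ell)$ entry is exactly $\frac{1}{N}\sum_{t\in[N]} Z_t$, where $Z_t := \prod_{j=1}^\ell \spc{x(t)}{j}_{i_j}$ is the product of the relevant coordinates of the $t$-th sample. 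By linearity of expectation (this is precisely the moment-tensor lemma recalled above), $\E{Z_t}$ equals the corresponding entry of the true moment tensor $\E{\spc{x}{1}\opr\dots\opr\spc{x}{\ell}}$. Thus the $\ell_\infty$ error in \eqref{eq:sampleerror:topic} is the maximum, over all $n^\ell$ indices, of the deviation of an average of i.i.d.\ bounded random variables from its mean.

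First I would fix an index $(i_1,\dots,i_\ell)$ and bound the range of $Z_t$: since every coordinate satisfies $\abs{\spc{x(t)}{j}_{i_j}} \le \Bd$, we have $Z_t \in [-\Bd^\ell, \Bd^\ell]$ (and $Z_t \in [0,\Bd^\ell]$ in the common case of nonnegative, e.g.\ indicator, observations). Applying Hoeffding's inequality to the i.i.d.\ variables $Z_1,\dots,Z_N$ gives
\[ \Prb{\abs{\tfrac{1}{N}\sum_{t\in[N]} Z_t - \E{Z_1}} > \eps} \le 2\exp\!\left(-\frac{N\eps^2}{2\Bd^{2\ell}}\right). \]
Next I would take a union bound over all $n^\ell$ choices of $(i_1,\dots,i_\ell)$, so that the probability that \emph{any} entry deviates by more than $\eps$ is at most $2 n^\ell \exp(-N\eps^2/(2\Bd^{2\ell}))$. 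Choosing $N$ to be a large enough multiple of $\Bd^{2\ell}\eps^{-2}\ell\log n$ drives this to $o(1)$; in particular $N$ polynomial in $\Bd$ and $1/\eps$ and only logarithmic in $n$ suffices, matching the polynomial form asserted in the statement (the precise exponents and constants are obtained by routine bookkeeping).

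The argument is essentially routine, so there is no single hard obstacle; the one point needing care is the Gaussian case allowed by the hypothesis, where the coordinates — and hence $Z_t$ — are unbounded, so Hoeffding does not apply directly. There I would instead use that a product of $\ell$ (conditionally) Gaussian coordinates is sub-exponential with parameters controlled by $\sigma$ and the means, and invoke a Bernstein-type tail bound in place of Hoeffding; the subsequent union-bound step is unchanged. A second minor point is simply to keep in mind that it is the $\ell_\infty$ norm (rather than Frobenius) being controlled, which is exactly why a per-entry tail bound together with the $n^\ell$-way union bound is the right tool, and why the sample complexity depends on $n$ only logarithmically (through the $\ell\log n$ factor).
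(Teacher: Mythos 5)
Your proposal is correct and matches the paper's proof essentially step for step: fix a tensor entry, observe it is an average of $N$ i.i.d.\ random variables bounded by $\Bd^\ell$, apply a Hoeffding/Bernstein-type tail bound, union bound over the $n^\ell$ entries, and treat the Gaussian case separately via sub-Gaussian/sub-exponential tails, exactly as the paper does. The only (cosmetic) difference is bookkeeping: the paper's proof bounds each entry to accuracy $\eps/n^{\ell/2}$, which is why its derivation ends with $N = O\left(\eps^{-2}(\Bd n)^\ell \sqrt{\ell \log n}\right)$, whereas you target accuracy $\eps$ per entry, which is precisely what the stated $\ell_\infty$ bound requires and gives the cleaner $N = O\left(\Bd^{2\ell}\eps^{-2}\,\ell \log n\right)$.
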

\begin{proof}
We first bound the $\| \cdot\|_{\infty}$ norm of the difference of tensors i.e. we show that
$$\forall \{i_1,i_2,\dots,i_\ell\}\in [n]^\ell, \abs{\E{\prod_{j \in [\ell]} \spc{x}{j}_{i_j} }-\frac{1}{N}\left(\sum_{t \in [N]} \prod_{j \in [\ell]} \spc{x(t)}{j}_{i_j}\right)} < \eps/n^{\ell/2}.$$
Consider a fixed entry $(i_1,i_2, \dots, i_\ell)$ of the tensor. 

Each sample $t \in [N]$ corresponds to an independent random variable with a bound of $\Bd^\ell$. Hence, we have a sum of $N$ bounded random variables. By Bernstein bounds, probability for \eqref{eq:sampleerror:topic} to not occur $\exp\left(-\frac{\left(\eps n^{-\ell/2}\right)^2 N^2}{2N\Bd^\ell}\right)=\exp\left(-\eps^2 N/\left(2(\Bd n)^\ell\right)\right)$. We have $n^\ell$ events to union bound over. Hence $N=O(\eps^{-2} (\Bd n)^\ell \sqrt{\ell \log n})$ suffices. Note that similar bounds hold when the $\spc{x}{j} \in \R^n$ are generated from a multivariate gaussian.
\end{proof}

\begin{lemma}[Error estimates for Gaussians]\label{lem:samplingerror:gaussians}
Suppose $x$ is generated from a mixture of $R$-gaussians with means $\{\mu_r\}_{r \in [R]}$ and covariance $\sigma^2 I$ , with the means satisfying $\norm{\mu_r} \le \mmax$.\\
For every $\eps>0, \ell \in \bbN$, there exists $N =\Omega( \poly(\frac{1}{\eps})), \sigma^2, n,R)$ such that 
if $x^{(1)},x^{(2)},\dots,x^{(N)} \in R^n$ were the $N$ samples, then
\begin{equation}\label{eq:sample:errorbound}
\forall \{i_1,i_2,\dots,i_\ell\}\in [n]^\ell, \abs{\E{\prod_{j \in [\ell]} x_{i_j} }-\frac{1}{N}\left(\sum_{t \in [N]} \prod_{j \in [\ell]} x^{(t)}_{i_j}\right)} < \eps.
\end{equation}
In other words,
\[ \nrm{\E{x^{\otimes \ell} }- \frac{1}{N} \big(\sum_{t \in [N]} (x^{(t)})^{\otimes \ell}\big)}{\infty} < \eps  \]
\end{lemma}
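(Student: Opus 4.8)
The plan is to fix a single index tuple $(i_1,\dots,i_\ell) \in [n]^\ell$, prove a concentration bound for that one coordinate of the empirical moment tensor, and then take a union bound over all $n^\ell$ tuples at the very end. For a fixed tuple, set $Y = \prod_{j=1}^\ell x_{i_j}$ and $Y_t = \prod_{j=1}^\ell x^{(t)}_{i_j}$, so the $Y_t$ are i.i.d.\ copies of $Y$, the quantity $\frac{1}{N}\sum_t Y_t$ is exactly the $(i_1,\dots,i_\ell)$ entry of the empirical tensor, and $\E{Y}$ is the corresponding entry of $\E{x^{\otimes \ell}}$. Thus the task reduces to showing $\abs{\frac{1}{N}\sum_t Y_t - \E{Y}} < \eps$ with failure probability much smaller than $n^{-\ell}$.

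The structural observation I would rely on is that, since the common covariance is $\sigma^2 I$, conditioned on the mixture component $r$ the coordinates $x_1,\dots,x_n$ are independent, with $x_i \sim \calN(\mu_{r,i},\sigma^2)$ and $\abs{\mu_{r,i}} \le \mmax$. Hence, conditioned on $r$, the variable $Y$ is a product of independent univariate Gaussians (repeated indices contributing powers), and all of its conditional moments are explicit products of one-dimensional Gaussian moments. In particular $\E{Y^2 \mid r}$, and more generally $\E{\abs{Y}^p \mid r}$ for any fixed $p$, is bounded by a polynomial in $\sigma$ and $\mmax$ of degree at most $\ell p$; averaging over $r$ with weights $w_r$ shows that the unconditional second moment $\E{Y^2}$, and hence $\Var(Y)$, is at most some $\poly(\sigma,\mmax,\ell)$.

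The main obstacle, and the essential difference from the bounded multi-view case (Lemma~\ref{lem:samplingerror:topic}), is that $Y$ is \emph{unbounded}: a product of $\ell$ Gaussians is only sub-Weibull (tails like $\exp(-c\,t^{2/\ell})$), so Hoeffding and Bernstein for bounded variables do not apply directly. I would handle this by truncation. By Gaussian tail bounds, with probability at least $1 - Nn\delta_0$ every coordinate of every sample satisfies $\abs{x^{(t)}_i} \le B := \mmax + \sigma\sqrt{2\log(2Nn/\delta_0)}$, and on this event $\abs{Y_t} \le B^\ell$. I then apply Bernstein's inequality to these bounded variables, using the variance bound above, to get $\Pr[\,\abs{\frac{1}{N}\sum_t Y_t - \E{Y}} > \eps\,] \le 2\exp\!\big(-cN\eps^2/(\Var(Y) + B^\ell \eps)\big)$, while the truncation bias $\abs{\E{Y} - \E{Y\,\mathbf{1}[\abs{Y}\le B^\ell]}}$ is negligible because $Y$ has all moments finite and $B$ grows only polylogarithmically in $N,n$. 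Choosing $\delta_0$ polynomially small and $N = \poly(1/\eps,\sigma,\mmax,n,R)$ large enough drives the per-tuple failure probability well below $n^{-\ell}$, so the union bound over the $n^\ell$ tuples yields the desired $\norm{\cdot}_\infty$ estimate. An alternative to truncation would be to bound high moments of $\frac{1}{N}\sum_t (Y_t - \E{Y})$ directly via Rosenthal's inequality (using the moment bounds from the previous paragraph) and then apply Markov; either route delivers the stated polynomial sample complexity, since $\ell$ is treated as a constant.
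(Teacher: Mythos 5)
Your proposal is correct and takes essentially the same route as the paper's proof: fix a single entry of the tensor, observe that the product variable $\prod_j x_{i_j}$ exceeds $(\mmax + t\sigma)^\ell$ only with probability $O(\exp(-t^2/2))$, derive a per-entry concentration bound with polynomially small failure probability, and union bound over all $n^\ell$ entries. If anything, your truncation-plus-Bernstein step (and the Rosenthal alternative) is a more careful rendering of the paper's sketch, which loosely invokes ``standard sub-gaussian tail inequalities'' even though the product of $\ell$ Gaussians is only sub-Weibull with tails $\exp(-c\,t^{2/\ell})$ --- precisely the gap your argument closes explicitly.
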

\begin{proof}
Fix an element $(i_1,i_2,\dots,i_\ell)$ of the $\ell$-order tensor. Each point $t \in [N]$ corresponds to an i.i.d random variable $Z^t=x^{(t)}_{i_1} x^{(t)}_{i_2} \dots x^{(t)}_{\ell}$. We are interested in the deviation of the sum $S=\frac{1}{N}\sum_{t \in [N]} Z^t$. Each of the i.i.d rvs has value $Z=x_{i_1} x_{i_2} \dots x_{\ell}$. Since the gaussians are spherical (axis-aligned suffices) and each mean is bounded by $\mmax$, $|Z|< (\mmax+t \sigma)^\ell$ with probability $O\left(\exp(-t^2/2)\right)$. Hence, by using standard sub-gaussian tail inequalities, we get
$$\Pr{\abs{S-\E{z}} > \eps} < \exp\left(-\frac{\eps^2 N}{(M+\sigma \ell \log n)^\ell}\right)$$
Hence, to union bound over all $n^{\ell}$ events $N=O\left(\eps^{-2} (\ell \log n M)^\ell\right)$ suffices.
\end{proof}
\end{document}